\definecolor{blueblack}{rgb}{0,0,.7}
\newcounter{sideremark}
\definecolor{Darkblue}{rgb}{0,0,0.4}
\definecolor{Brown}{cmyk}{0,0.61,1.,0.60}
\definecolor{Purple}{cmyk}{0.45,0.86,0,0}
\definecolor{brickred}{rgb}{0.8, 0.25, 0.33}
\newtheorem{theorem}{Theorem}[section]
\newtheorem{lemma}[theorem]{Lemma}
\newtheorem{definition}[theorem]{Definition}
\newtheorem{remark}[theorem]{Remark}
\newtheorem{quest}[theorem]{Question}
\newtheorem{cor}[theorem]{Corollary}
\newtheorem{claim}[theorem]{Claim}
\newcommand{\R}{\mathds{R}}
\newcommand{\poly}{\text{poly}}
\newcommand{\newreptheorem}[2]{%
\newenvironment{rep#1}[1]{%
 \def\rep@title{\emph{\textbf{#2} \ref{##1}}}%
 \begin{rep@theorem}}%
 {\end{rep@theorem}}}
\newcommand{\initOneLiners}{%
    \setlength{\itemsep}{0pt}
    \setlength{\parsep }{0pt}
    \setlength{\topsep }{0pt}
}
\newcommand{\eps}{\varepsilon}
\newcommand{\opt}{\text{OPT}}
\newcommand{\dist}{\text{dist}}
\newcommand{\cost}{\textsc{cost}}
\newcommand{\calC}{\mathcal{C}}
\newcommand{\calS}{\mathcal{S}}
\newcommand{\calN}{\mathcal{N}}
\newcommand{\calR}{\mathcal{R}}
\newcommand{\coreset}{\Omega}
\newcommand{\offset}{F}
\newcommand{\constantApprox}{\mathcal{A}}
\newcommand{\seeded}{\mathcal{G}}
\newcommand{\calB}{\mathcal{B}}
\newcommand{\inner}{R_I}
\newcommand{\out}{R_O}
\newcommand{\main}{R_M}
\newcommand{\polylog}{\text{polylog}}
\newcommand{\costP}{\textsc{P-cost}}
\newcommand{\conv}{\text{conv}}
\newcommand{\diam}{\text{diam}}
\newcommand{\cand}{\mathbb C}
\newcommand{\lpar}{\left(}
\newcommand{\rpar}{\right)}
\newcommand{\lbra}{\left\{}
\newcommand{\rbra}{\right\}}
\newcommand{\zext}[2]{\begin{pmatrix} #1 \\ #2\end{pmatrix}}
\DeclareMathOperator*{\argmin}{arg\,min}
\def\DEBUG{true}
  \def\rem#1{{\marginpar{\raggedright\scriptsize #1}}}
  \newcommand{\vir}[1]{\rem{\textcolor{Purple}{$\bullet$ #1}}}
  \newcommand{\dar}[1]{\rem{\textcolor{orange}{$\bullet$ #1}}}
  \newcommand{\vir}[1]{}
  \newcommand{\dar}[1]{}
\newcommand{\erclogowrapped}[1]{%
\setlength\intextsep{0pt}%
\begin{wrapfigure}[3]{r}{#1*\real{1.1}}%
\includegraphics[width=#1]{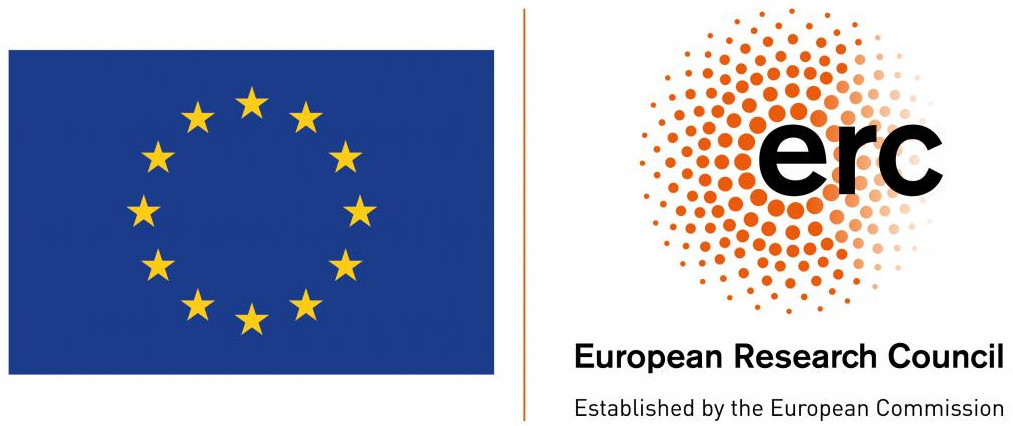}%
\end{wrapfigure}%
}
\begin{document}

        \title{Deterministic Clustering in High Dimensional Spaces: Sketches and Approximation}

\author{Vincent Cohen-Addad\thanks{Google Research, France} \and 
 David Saulpic\thanks{IST Austria. This project has received funding from the European Union’s Horizon 2020 research and innovation
programme under the Marie Skłodowska-Curie grant agreement No 101034413}
 \and
 Chris Schwiegelshohn \thanks{Aarhus University} \and
}
\date{}

\maketitle

\begin{abstract}
In all state-of-the-art sketching and coreset techniques for clustering, as well as in the best known fixed-parameter tractable approximation algorithms, randomness plays a key role.
For the classic $k$-median and $k$-means problems, there are no known deterministic  dimensionality reduction procedure or coreset construction that avoid an exponential dependency on the input dimension $d$, the precision parameter $\varepsilon^{-1}$ or $k$. Furthermore, there is no coreset construction that succeeds with probability $1-1/n$ and whose size does not depend on the number of input points, $n$.
This has led researchers in the area to ask what is the power of randomness for clustering sketches [Feldman WIREs Data Mining Knowl. Discov'20].

Similarly, the best approximation ratio achievable deterministically without a complexity exponential in the dimension are $1+\sqrt{2}$ for $k$-median [Cohen-Addad, Esfandiari, Mirrokni, Narayanan, STOC'22]  and $6.12903$ for $k$-means [Grandoni, Ostrovsky, Rabani, Schulman, Venkat,  Inf. Process. Lett.'22]. Those are the best results, even when allowing a complexity FPT in the number of clusters $k$: this stands in sharp contrast with the $(1+\varepsilon)$-approximation achievable in that case, when allowing randomization.

In this paper, we provide deterministic sketches constructions for clustering, whose size bounds are close to the best-known randomized ones. We show how to compute a dimension reduction onto $\varepsilon^{-O(1)} \log k$ dimensions in time $k^{O\left( \varepsilon^{-O(1)} + \log \log k\right)} \text{poly}(nd)$, and how to build a coreset of size $O\left( k^2 \log^3 k \varepsilon^{-O(1)}\right)$ in  time $2^{\eps^{O(1)} k\log^3 k} + k^{O\left( \varepsilon^{-O(1)} + \log \log k\right)} \text{poly}(nd)$. In the case where $k$ is small, this answers an open question of [Feldman WIDM'20] and [Munteanu and Schwiegelshohn, Künstliche Intell. '18] on whether it is possible to efficiently compute coresets deterministically.

We also construct a deterministic algorithm for computing $(1+\varepsilon)$-approximation to $k$-median and $k$-means in high dimensional Euclidean spaces in time $2^{k^2 \log^3 k/\varepsilon^{O(1)}} \text{poly}(nd)$, close to the best randomized complexity of $2^{(k/\varepsilon)^{O(1)}} nd$ (see [Kumar, Sabharwal, Sen, JACM 10] and [Bhattacharya, Jaiswal, Kumar, TCS'18]).

Furthermore, our new insights on sketches also yield a randomized coreset construction that uses uniform sampling, that immediately improves over the recent results of [Braverman et al. FOCS '22] by a factor $k$.
\end{abstract}

\section{Introduction}
Clustering is the task of partitioning a dataset in a meaningful way such that similar data elements are in the same part and dissimilar ones in different parts. Clustering problems are among the most studied problems in theoretical computer science. 
Clustering applications range from data analysis~\cite{jajuga2012classification} to compression~\cite{lloyd1982least}, and this has given raise to several different ways of modeling the problem.
The $k$-median and $k$-means problems are among the most prominent clustering objectives: algorithmic solutions to these problems are now part of the basic data analysis toolbox, and used as subroutines for many machine learning procedures \cite{lin2009phrase, coates2011analysis, coates2012learning}. 
Furthermore, $k$-median and $k$-means have deep connections with other classical optimization problems, and understanding their complexity has been a fruitful research direction: Their study has led to many interesting algorithmic development (for instance for primal-dual algorithms \cite{JaV01, LiS13, BPRST15} or local-search based ones \cite{Cohen-AddadGHOS22, soda23}).

Given a metric space $(X, \dist)$, and two sets $P, \cand \subseteq X$, the goal of the $k$-median problem (respectively $k$-means problem) is to find a set of $k$ centers $\calS \in \cand^k$ in order to minimize the sum of distances (resp. distances squared) for each point in $P$ to its closest center in $\calS$.
The complexity of the $k$-median and $k$-means problems, and the algorithmic techniques used to solve them, depends naturally on the underlying metric space. 
Several works have thus focus on Euclidean inputs that are arguably the most common case in practice, and exhibits deep algorithmic challenges.
Indeed, in that case the set of candidate centers $\cand$ is infinite, and even computing the solution to $1$-median -- also called the Fermat-Weber problem -- is not easy: there is no closed form formula, and many works investigated the complexity required to approximate the best center \cite{ChandrasekaranT89, ChinMMP13, Badoiu02, CLMPS16}.

Most of the algorithmic tools to handle the Euclidean version of $k$-median and $k$-means are heavily reliant on randomization, and are Monte-Carlo algorithms\footnote{A Monte-Carlo algorithm has a deterministic running time and a fixed probability of success.}:
We do not know whether the algorithm succeeds or not, hence, we cannot be sure of the result. This remark also holds for sketching 
techniques for which it is unclear whether the sketch has the 
desired properties.

We propose to study deterministic algorithms for clustering problems.
This is interesting from both practical and theoretical standpoints: from a practical perspective, it makes sure that the results can be reproduced independently of the underlying system or programming language\footnote{Obtaining similar probabilistic outcomes over different architectures, computers, programming languages is a challenge.}, and that they are correct; 
from a theoretical standpoint, understanding the power of randomness in algorithms is a very basic question and a recurring theme in theoretical computer science. 
We ask in particular the following question:

\begin{quest}\label{quest:apx}
Is it possible to deterministically compute 
$(1+\eps)$-approximation for $k$-median and $k$-means, in time competitive with the best randomized result?
\end{quest}

For high-dimensional Euclidean spaces, the best known algorithm has complexity $2^{(k/\eps)^{O(1)}} dn$ \cite{KumarSS10}. In deterministic time however, only constant-factor approximation are known, as for instance the primal-dual algorithm of
\cite{AhmadianNSW17}. 

To tackle those clustering problems in high-dimensional spaces, one of the most important tools is sketching,
which has been a rich and vibrant area of research for the past ten years. 
The efforts to design even smaller sketches have resulted in an impressive success story producing algorithms with performance guarantees, having high impact in practice \cite{MaaloufJF19, MunteanuSSW18, LiebenweinMFR21, feldman2016dimensionality}. 
There are two natural ways of reducing the size of a $k$-median or $k$-means input: First by computing a smaller set of points $B$ that approximates the cost of any $k$-means (or $k$-median) solution of $A$. Such sets are known as \emph{coresets}. 
The second way consists in finding a projection of the input point sets onto some low-dimensional space such that the $k$-means (or $k$-median) cost of any partitioning is approximately preserved.
 In this context, the quality of a sketch is usually quantified in terms of its size, i.e.: the dependency on the size of the input $n$, the dimension $d$ of the underlying Euclidean space (when there is one), the number of centers $k$, and a precision parameter $\varepsilon$. 
For both objectives, we now have optimal, or nearly optimal results, see \cite{Cohen-AddadSS21, stoc-lb} for coresets constructions and lower bounds and \cite{MakarychevMR19, LarsenN17}  for dimension reduction.
These works are the results of a tremendous effort that has finally 
led to dimensionality reduction and 
coreset bounds of size independent on $n$ (the number of input points) and the dimension $d$ for the Euclidean case.

The applications for sketches go well beyond designing approximation algorithms: they are at the heart of modern data analysis, allowing to process efficiently big data and to design algorithms for models of computation where the memory is constrained. 
For instance, they allow to design efficient algorithms for the streaming model, distributed computing or massively-parallel computations (MPC).

Interestingly -- and somewhat frustratingly -- these state-of-the-art results are heavily reliant on randomization. 
Concretely, to achieve a success probability of $1-\delta$, coreset constructions require $\Omega_{k,\eps}(\log (1/\delta))$ points, and dimension reduction techniques require $\Omega_{k,\eps}(\log (1/\delta))$-dimensional spaces. 
Therefore, to construct a sketch (either a coreset or a dimension reduction) that offers the desired guarantees with \textit{high probability} (i.e., probability $1-\poly(1/n)$ where $n$ is the original input size), the dependency in $n$ strikes back. At the same time, randomized constructions imply that there exist sketches whose size is independent on $n$. Therefore, we ask:

\begin{quest}\label{quest:sketch}
Is it possible to deterministically construct $k$-median and $k$-means sketches of size matching the best randomized results?
\end{quest}

Finding deterministic coresets of small size is explicitly asked as an open question in recent surveys on the coreset literature~\cite{feldmanSurvey, MunteanuS18}.

The best deterministic constructions for Euclidean $k$-median yields coresets of size 
$O(poly(k)\cdot \varepsilon^{-(d+O(1))})$~\cite{HaK07}, which compares 
very unfavourably to the best randomized result of size $\tilde O(k 
\eps^{-3})$ from \cite{stoc-lb}. In particular, it suffers from the 
classic \emph{curse of dimensionality} and it is not clear a priori 
whether such a dependency is needed when removing randomization.
A notable exception is for $k$-means, where it is possible to leverage 
the algebraic structure of the problem to construct coresets of size $ k^{\eps^{-2}\log(1/\eps)}$~\cite{FeldmanSS20}. Unfortunately, this construction is very specific to $k$-means and does not apply to $k$-median, and requires additional effort to be implemented deterministically. Importantly, the construction yields a coreset 
whose size is very far from the best known randomized bounds.

Suppose we were to relax the problem of deterministically computing a coreset and aim for a Las-Vegas algorithm.
\footnote{Recall that a Las Vegas algorithm returns a solution in expected polynomial time, whereas a Monte Carlo algorithm only returns a solution with a small probability of failure.} For closely related problems such as locality sensitive hashing this is achievable (see Ahle~\cite{Ahle17} and references therein). For clustering, all known algorithms are Monte Carlo algorithms. The main barrier to obtaining a Las Vegas algorithm (a challenging and interesting open problem as well) 
is that we currently do not know how to verify that a candidate set of points is indeed a valid coreset without evaluating all possible choices of $k$ centers and verifying that the cost of each set of $k$ centers is preserved. This is also a major roadblock for using conditional-expectation methods to derandomize.
For some tiny $\eps$, the problem of verifying that a set is a coreset is even co-NP-hard \cite{chris-esa}. This stands in stark contrast with other related techniques for which derandomization techniques have been designed such as the Johnson Lindenstrauss lemma~\cite{EngebretsenIO02}, matrix approximation~\cite{BoutsidisDM11}, or regression~\cite{Sarlo2006}.

\subsection{Our Results}

We phrase our results in terms of the $(k,z)$-clustering problems, a generalization of both $k$-median and $k$-means: the cost function is $\cost(P, \calS) = \sum_{p\in P} \min_{s \in \calS} \dist^z(p, s)$, and the goal is to find the solution $\calS$ that minimizes the cost. Thus, $z=1$ is the $k$-median problem, $z=2$ is $k$-means.

Our main technical contribution lies in designing sketches to answer \cref{quest:sketch}. We start by showing a dimension reduction result:
\begin{theorem}[Informal, see \cref{thm:dimred}]
Let $P$ be a set of $n$ points in $\R^d$. We can compute in time $k^{\eps^{-O(z)} +  \log \log k} \poly(nd)$ a sketch of $P$ onto $\eps^{-O(z)} \cdot \log k$ dimensions that preserves the cost of any $(k,z)$-clustering.
\end{theorem}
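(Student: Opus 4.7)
My plan is to derandomize the Johnson--Lindenstrauss-based dimension reduction of Makarychev, Makarychev, and Razenshteyn using the method of conditional expectations. The classical deterministic version of JL produces an $O(\eps^{-2}\log m)$-dimensional linear map that preserves $m$ prescribed quadratic forms to within a $(1+\eps)$ factor; the task is therefore to identify a family of only $m = k^{\eps^{-O(z)}}$ test forms whose preservation implies that the $(k,z)$-cost of every clustering is preserved. This would yield the target dimension $\eps^{-O(z)}\log k$ while eliminating any dependence on $n$.

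To isolate such a test family, I first compute a deterministic constant-factor approximation $\calA$ with centers $c_1,\ldots,c_k$ and induced partition $C_1,\ldots,C_k$, using for instance the primal--dual algorithm of Ahmadian et al.\ that is cited in the introduction. Using $\calA$ as a reference frame, I partition each $C_i$ into $\eps^{-O(z)}$ geometric rings around $c_i$ and further refine each ring via a deterministic spherical net, producing $N = k\cdot \eps^{-O(z)}$ \emph{buckets} with the property that replacing each bucket by a representative distorts its contribution to any bounded-cost clustering by at most a $(1+\eps)$ factor. A move-and-reroute argument in the style of Makarychev--Makarychev--Razenshteyn then reduces preserving the cost of every clustering to preserving the cost of every \emph{bucket-consistent} clustering, i.e., one that assigns each bucket to a single center. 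A scale-based structural lemma should further cut the number of bucket-consistent clusterings that need to be tracked from the naive $k^N$ down to $k^{\eps^{-O(z)}}$, by showing that in any solution whose cost is comparable to that of $\calA$, only $\eps^{-O(z)}$ rings per cluster are assigned inconsistently with $\calA$. Applying derandomized JL to the resulting $k^{\eps^{-O(z)}}$ quadratic forms, with a $\log\log k$ overhead from discretizing the pessimistic estimator on a grid, yields the claimed dimension and runtime.

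The main obstacle I anticipate is this last structural reduction from all clusterings to a family of size $k^{\eps^{-O(z)}}$. Near-optimal solutions should almost inherit the bucket assignment from $\calA$, but solutions whose cost is a large polynomial factor above optimum may relocate centers arbitrarily and route distant points together. The typical workaround is a two-scale argument: for expensive solutions, the additive distortion of bucket aggregation is dwarfed by the solution's own cost so only a loose guarantee is needed, whereas bounded-cost solutions admit the sharp bucket approximation. Carrying out this two-scale argument deterministically, while simultaneously keeping the enumeration inside the pessimistic estimator of size $k^{\eps^{-O(z)} + \log\log k}$ rather than the naive $k^{O(N)}$, is where I expect the technical heart of the proof to lie.
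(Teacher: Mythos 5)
Your high-level plan (reduce the number of test distances to $k^{\eps^{-O(z)}}$, then invoke the Engebretsen--Indyk--O'Donnell derandomized JL) matches the paper's strategy. However, the way you propose to build the test family has a fatal gap that is exactly the obstacle the paper's main technical machinery is designed to overcome.

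The problem is in your bucketing step. You partition each cluster $C_i$ into $\eps^{-O(z)}$ geometric rings and then refine each ring ``via a deterministic spherical net,'' claiming this gives $N = k\cdot\eps^{-O(z)}$ buckets. But in $\R^d$ with $d$ arbitrary, an $\eps$-net of a sphere (or a ring) has $\eps^{-\Theta(d)}$ points. A finite bucketing scheme that (i) replaces each bucket by one representative with only $(1+\eps)$ multiplicative distortion for every possible partition, and (ii) uses a number of buckets independent of $d$, simply does not follow from geometric nets --- it is the curse of dimensionality. This is precisely what the paper's \emph{partition coreset} construction (Theorem~\ref{thm:extension}, proved via the structural Lemma~\ref{lem:key}) provides: rather than taking a spatial net, the paper recursively clusters until each cluster is ``stable'' (its cost cannot be reduced much by adding $k$ more centers), and then proves that a stable cluster can be collapsed to a single point \emph{with an extension coordinate}, so that the $(k,z)$-cost of any partition is preserved. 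The extension coordinate trick and the near-Pythagorean case analysis (Lemmas~\ref{lem:tighterbound}, \ref{lem:projbound}, \ref{lem:badcost}) are the replacement for the spherical net, and they are what makes the number of distinct coreset points $k^{\eps^{-O(z)}}$ rather than $\eps^{-O(d)}$. Your proposal does not contain an analogue of this argument, and the ``move-and-reroute'' and ``two-scale'' reductions that come afterward are conditioned on the bucketing working, so they cannot rescue it.

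Two further, smaller, mismatches. First, once you have a small point set, reducing ``preserve cost of every clustering'' to ``preserve a small family of distances'' is not automatic for general $z$: preserving pairwise distances does not preserve the $(1,z)$-center of each part. The paper handles this via \emph{witness sets} (Definition~\ref{def:witness}, Theorem~\ref{thm:witness}): every cluster of the partition coreset has a $(1+\eps)$-approximate center in the convex hull of $O(\eps^{-2})$ of its points, and one takes $\eps$-covers of those low-dimensional convex hulls to form the test set $\calN$ of size $k^{\eps^{-O(z)}}$. Your proposal has no substitute for this step. Second, the $k^{\log\log k}$ term in the running time does not come from discretizing the pessimistic estimator; in the paper it arises from the deterministic bicriteria algorithm (Theorem~\ref{thm:bicriteria}), which enumerates over seeds of a low-randomness Kane--Nelson JL map, and the seed length introduces the $\log\log k$ dependence. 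Relying on a plain constant-factor primal--dual approximation, as you propose, would not give the structured, nearly-optimal bicriteria solution that the stability-based partition coreset construction actually requires.
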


The optimal target dimension, even using randomization, is $\Omega \lpar \eps^{-2} \log k\rpar$: our result has therefore an optimal dependency in $k$. The best known deterministic result was $O \lpar \eps^{-2} \log n\rpar$ (implicit in the work of \cite{MakarychevMR19}), with a dependency in $n$ that we precisely want to avoid.

Building on this dimension reduction result, we show how to construct coresets deterministically: 
\begin{theorem}[Informal, see \cref{cor:main-coreset}]
Let $P$ be a set of $n$ points in $\R^d$. We can compute in time $2^{\eps^{-O(z)} k\cdot \log^3 k} + k^{\eps^{-O(z)} + \log \log k}\poly(nd)$ a set $\coreset \subset \R^{d+1}$ of size $O\lpar k^2 \log^2 k \eps^{-O(z)}\rpar$ such that, for any set of $k$ centers $\calS \subset \R^d$, 
\[\cost(\coreset, \calS') = (1\pm \eps)\cost(P, \calS),\]
where $\calS' := \lbra \zext{s}{0}, s \in \calS \rbra$ is the set $\calS$ where we append a 0 coordinate at each center.
\end{theorem}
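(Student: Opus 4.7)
My plan is to combine the just-established dimension reduction theorem (invoked as a black box) with a derandomization of sensitivity sampling over a discretized solution space. First I would apply the dimension reduction to obtain a map $\pi\colon \R^d \to \R^{d'}$ with $d' = \eps^{-O(z)}\log k$ in time $k^{\eps^{-O(z)} + \log \log k}\poly(nd)$. This reduces checking the coreset guarantee for every $k$-center configuration in $\R^d$ to checking it for every $k$-center configuration in $\R^{d'}$, which becomes tractable because the latter admits a small approximate centroid set.

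Next I would deterministically compute a constant-factor $(k,z)$-clustering (e.g.\ via \cite{AhmadianNSW17}) to fix the scale $\Lambda$ of the optimum, then build an approximate centroid set $\calN \subset (\R^{d'})^k$ by overlaying geometrically-scaled grids around the constant-factor centers. A standard Matou\v{s}ek/Feldman--Langberg style argument yields $|\calN| = 2^{O(kd' \log(k/\eps))} = 2^{\eps^{-O(z)} k \log^2 k}$, with the property that every $k$-center set $S \subset \R^{d'}$ admits some $\hat S \in \calN$ such that $\cost(\pi(P), S) = (1 \pm \eps)\cost(\pi(P), \hat S)$. This reduces the coreset property from an infinite family of constraints to a finite, if large, one.

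I would then derandomize sensitivity sampling for the original point set $P$. Sensitivities $\sigma_p$ with $\sum_p \sigma_p = O(k)$ are computable deterministically from the constant-factor approximation; sampling $m$ points proportional to $\sigma_p$ produces, for any single $\hat S$, a $(1\pm\eps)$-estimator of $\cost(P,\hat S)$ with failure probability $\exp(-\Omega(m\eps^{O(z)}/k))$. Setting $m = \Theta(k^2 \log^2 k \, \eps^{-O(z)})$ makes the expected number of failing $\hat S$ at most $\tfrac 12$ under a union bound over $\calN$. I then derandomize by the method of conditional expectations with a pessimistic estimator equal to the sum of Bernstein-type failure-probability upper bounds over all $\hat S \in \calN$; each point-decision requires recomputing the estimator at cost $|\calN|\poly(n)$, for total running time $|\calN|\poly(nd) = 2^{\eps^{-O(z)} k \log^3 k}\poly(nd)$. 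Finally, the selected points are lifted to $\R^{d+1}$ by augmenting with a single extra coordinate that absorbs the (input-dependent but $S$-independent) additive offset introduced by the dimension reduction and the discretization; when centers are embedded as $(s,0)$, this extra coordinate contributes a fixed correction converting the additive slack into the required multiplicative $(1\pm\eps)$-guarantee.

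The main obstacle is the derandomization step: producing a pessimistic estimator that is both computable in polynomial time per $\hat S$ and tight enough to keep the sample size at $O(k^2 \log^2 k \, \eps^{-O(z)})$, rather than losing extra $\log k$ or $\eps^{-1}$ factors. This requires a Bernstein-type bound exploiting the variance-normalizing effect of sensitivity sampling, combined with a truncation argument for the heavy tails of $\dist^z(p, \hat S)$ when $z > 1$. A secondary concern is ensuring that $\calN$ is a \emph{cost cover} rather than a pointwise cover -- so that approximating every $\hat S \in \calN$ genuinely implies approximating every $S \in (\R^{d'})^k$ -- which is why the multi-scale grid construction is needed rather than a uniform grid over a single bounding box.
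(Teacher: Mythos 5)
Your proposal takes a genuinely different route from the paper, and it has a few gaps worth flagging.

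The paper does \emph{not} derandomize a sampling procedure. Instead it first applies \cref{thm:extension} to compute an extension partition coreset with $k^{\eps^{-O(z)}}$ distinct points, then applies a \emph{terminal embedding} (deterministically, via~\cite{MahabadiMMR18} combined with \cref{thm:EIO}) to that small set, and finally invokes \cref{thm:smallcoreset}, which builds the coreset from deterministic $\eps$-set-approximations (via Chazelle's algorithm, \cref{thm:chazelle}) of the range space of $k$ balls. The paper explicitly highlights in the introduction that conditional-expectation derandomization of sampling is a ``major roadblock'' here, and its whole point is to sidestep it by going through VC dimension. So the two proofs are structurally unrelated.

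Now the gaps in your route. The most serious one: you invoke ``the dimension reduction theorem'' (\cref{thm:dimred}) and then want to lift a coreset computed in $\R^{d'}$ back to $\R^d$. But \cref{thm:dimred} only preserves $\costP_0^z(f(P),f(\calC)) = (1\pm\eps)\costP^z(P,\calC)$, i.e.\ the cost of \emph{partitions with re-optimized per-cluster centers}, not $\cost(P,\calS)$ for arbitrary fixed $\calS\subset\R^d$. It also only maps points of $P$, not arbitrary candidate centers, so there is no canonical way to push a center $s\in\R^d$ down to $\R^{d'}$ and back. That is exactly why the paper uses a terminal embedding in \cref{cor:main-coreset}: a terminal embedding preserves distances from every point of $P$ to \emph{every} point of $\R^d$, which is what makes the Huang--Vishnoi preimage argument go through and gives the coreset guarantee of \cref{def:stdcoreset}. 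Substituting \cref{thm:dimred} for the terminal embedding breaks this step.

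Second, the claim that ``a single extra coordinate \ldots absorbs the (input-dependent but $S$-independent) additive offset'' and ``converts the additive slack into the required multiplicative guarantee'' is really a $z=2$ phenomenon: only for squared Euclidean distances does the extension coordinate contribute a clean additive term $\sum_p (p')^2$ independent of the center set (cf.\ the discussion after \cref{def:costpres2}). For general $z$ the extension is point-dependent inside a $z/2$-power of a sum of squares, and the offset is not additive. Your argument would therefore need to be redone with the full extension machinery of \cref{def:core}/\cref{lem:key} for $z\neq 2$.

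Third, your derandomization step is the one you yourself identify as ``the main obstacle,'' and it is not discharged. You need a pessimistic estimator that is (i) computable in $\poly(n)\cdot|\calN|$ time per decision and (ii) tight enough to keep the sample size at $O\lpar k^2\log^2 k\,\eps^{-O(z)}\rpar$; a naive Chernoff/Bernstein estimator with a union bound over $|\calN|=2^{\eps^{-O(z)}k\log^2 k}$ events would require $m = \Omega(\eps^{-O(z)} k\log|\calN|) = \Omega(\eps^{-O(z)}k^2\log^2 k)$ samples, which matches the target only if the dependence on the failure probability is exactly logarithmic and no further factors are lost from the truncation you mention. This is plausible but not established, and the paper's route avoids it entirely by first reducing the input to $k^{\eps^{-O(z)}}$ distinct points with \cref{thm:extension}, then appealing to the deterministic $\eps$-set-approximation of \cref{thm:chazelle} -- where the exponential-in-VC-dimension cost shows up cleanly and unconditionally.
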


This notion of coreset is used in \cite{SohlerW18}, and is more general than the most standard one (for which $\coreset \subset P$). However, it maintains the most interesting property that any $\alpha$-approximation for $\coreset$ with all last coordinate equal to zero is an $(1+\eps)\alpha$-approximation for $P$. Therefore, any good solution on $\coreset$ yields a good solution for $P$, and we can compute that way a $(1+\eps)$-approximation.

Those sketches directly imply deterministic streaming algorithm for $(k,z)$-clustering, using the merge-and-reduce technique. The required memory is merely $O \lpar k^2 d \log ^{O(z)} n \eps^{-O(z)}\rpar$, where the best randomized result requires memory $O\lpar kd \log^2 n \eps^{-O(z)}\rpar$ to compute a $(1+\eps)$-approximation \cite{BravermanFLR19}. This also apply to distributed computing and MPC models, matching the best randomized result up to $\poly(\log n, \eps^{-1})$ factors.

In case $k$ is large (namely $\log k \log \log k \geq \log n$), we could replace the complexity by $n^{O\lpar \eps^{-2}\rpar}$. However, one application of the sketches is to answer positively \cref{quest:apx}, for which having a complexity in terms of $k$ is more desirable.
We show:
\begin{theorem}\label{thm:apx}
For any fixed $z\geq 1, \eps>0$, there exists a deterministic algorithm that, given a set $P$ of $n$ points in $\R^d$, computes a $(1+\eps)$-approximation to $(k,z)$-clustering on $P$ in time $k^{\log \log k + \varepsilon^{-O(z)}} \cdot \poly(nd)  + 2^{k \log^3 k \cdot \eps^{-O(z)}}$. 
\end{theorem}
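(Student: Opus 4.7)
The plan is to apply the two sketches stated above and then solve the resulting small, low-dimensional instance by deterministic enumeration. Concretely, I would first invoke \cref{thm:dimred} to project $P$ to $P' \subset \R^{d'}$ with $d' = \eps^{-O(z)}\log k$, and then feed $P'$ into \cref{cor:main-coreset} to obtain a coreset $\coreset \subset \R^{d'+1}$ of size $m = O(k^2\log^2 k\,\eps^{-O(z)})$. These two steps already cost $k^{\eps^{-O(z)}+\log\log k}\poly(nd) + 2^{\eps^{-O(z)}k\log^3 k}$, matching both summands in the claimed running time. Any $(1+\eps)$-approximate solution on $\coreset$ (with last coordinate 0) lifts, via the sketch guarantees, to a $(1+O(\eps))$-approximate solution on $P$ after composing the chain of approximate-cost guarantees and rescaling $\eps$ by a constant.

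The remaining task is to compute deterministically a $(1+\eps)$-approximation to $(k,z)$-clustering on the weighted coreset $\coreset$, with candidate centers constrained to $\R^{d'}\times\{0\}$. Because the ambient dimension is only $d' = \eps^{-O(z)}\log k$, I would build an \emph{approximate centroid set} $C \subset \R^{d'}$ in the spirit of Mato\v{u}\v{s}ek: compute a deterministic constant-factor approximation (e.g.\ the primal--dual algorithm of Ahmadian et al.) to bound the relevant scale at each coreset point, then drop an $\eps$-grid of appropriate resolution around each. Standard arguments yield $|C| = m\cdot\eps^{-O(d')} = k^{\eps^{-O(z)}}$ candidates that provably contain a $(1+\eps)$-approximate optimum $k$-tuple. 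Exhaustively enumerating all $k$-subsets of $C$ and evaluating the weighted $(k,z)$-cost on $\coreset$ takes $|C|^k\cdot\poly(m) = 2^{O(k\log k\cdot\eps^{-O(z)})}$ time, which is absorbed by the $2^{\eps^{-O(z)}k\log^3 k}$ term already present.

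The main obstacle is extending Mato\v{u}\v{s}ek's centroid-set argument, originally cleanest for $z=2$, to general $z\ge 1$. The key ingredient is the approximate triangle inequality of the form $\dist^z(p,c') \le (1+\eps)\dist^z(p,c) + \eps^{-O(z)}\dist^z(c,c')$, which lets one snap each optimal center $c$ onto its nearest grid point $c'$ while losing only a $(1+\eps)$ factor in cost. This inequality, together with a deterministic constant-factor approximation to localize the grid at resolution $\eps\cdot(\opt/m)^{1/z}$ around each coreset point, is standard in the $(k,z)$-coreset literature; I expect the bulk of the proof to be a careful accounting of these parameters rather than new structural insight. Should the centroid-set route be inconvenient, an alternative is to enumerate the $m^{O(kd')}$ combinatorial Voronoi partitions of $\coreset$ induced by $k$ centers in $\R^{d'}$ and compute the best $1$-center per part, which gives a slightly weaker but still acceptable $2^{O(k\log^2 k\cdot\eps^{-O(z)})}$ running time.
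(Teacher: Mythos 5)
Your overall pipeline --- dimension reduction via \cref{thm:dimred}, coreset via \cref{cor:main-coreset}, then exhaustive search over a small low-dimensional weighted instance, with a final lifting to $\R^d$ by recomputing the best $1$-center for each part --- is exactly the paper's. The one place you diverge is in the exhaustive-search step: your primary proposal builds a Mato\v{u}\v{s}ek-style approximate centroid set of size $k^{\eps^{-O(z)}}$ on top of a deterministic constant-factor approximation and enumerates $k$-subsets, whereas the paper goes directly with Inaba, Katoh and Imai: enumerate the $m^{O(kd')}$ combinatorial Voronoi partitions of the coreset induced by $k$ centers in $\R^{d'}$ and pick the best. You yourself list the Inaba route as a fallback, and it is in fact what the paper uses; both enumerations are dominated by the $2^{\eps^{-O(z)}k\log^3 k}$ cost of building the coreset, so neither choice affects the stated running time. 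The centroid-set route requires a bit more care than you write (you need a geometric sequence of grid scales per coreset point, not a single resolution $\eps(\opt/m)^{1/z}$, and the approximate-triangle-inequality argument must be threaded through to cover general $z$), while the Inaba route sidesteps that discretization entirely at the cost of needing an (approximate) $1$-center solver per candidate part; for $z\neq 2$ this is convex optimization rather than a closed form, a detail the paper also glosses over. In short: correct, and essentially the same proof, with your primary subroutine being a valid alternative to the one in the paper.
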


Guruswamy and Indyk  \cite{GuI03} showed it is NP-hard to compute an approximation arbitrarily close to $1$ in dimension $O(\log n)$ (and for arbitrary $k$): hence, to have a polynomial running time in the dimension, an exponential dependency in $k$ is necessary. Our theorem essentially matches the best randomized running time of \cite{KumarSS10} for $k$-median and $k$-means, of $2^{(k/\eps)^{O(1)}} nd$.

\subsection{Implication of our work when allowing randomization.}
We believe that the sketches we develop have interest beyond computing a $(1+\eps)$-approximation. Indeed, we develop new sketching techniques, that improve over the best known bounds for some regime of parameters, and that are of independent interest.

We remark that, if allowing randomization, our new sketch constructions can be implemented faster and provide useful alternatives to the existing ones. For dimension reduction, the randomized running time of our procedure is the one to compute a Johnson-Lindenstrauss (JL) transform, in general $\tilde O(nd)$. However, we can use \emph{any} JL transform to get our result, including the sparse transforms (e.g., \cite{KaneN14}) that allow for further speed-ups. This stands in contrast with the \cite{MakarychevMR19} result, which needs a particular additional property on the JL transform they use, namely sub-gaussian tails, and therefore cannot use the fastest JL embeddings.

More importantly, our coreset construction can be implemented using only uniform sampling. The importance of uniform sampling was illustrated in the recent paper \cite{focs22}, as it gives an easier way of constructing coreset. 
Our construction improves the one of \cite{focs22} by a factor $k$: if one can compute a coreset of size $T$ for any \emph{ring} $P \subset \lbra x: \|x\| \in [1/2, 1]\rbra$, then for any input one can compute a coreset of size $O\lpar T \cdot k \log(1/\eps)/\eps\rpar$ -- as opposed to $O\lpar T k^2 / \eps \rpar$ in \cite{focs22}.

This apply to any metric where the VC-dimension of the uniform function space can be bounded, including shortest-path metric of minor-free graphs and Fréchet distance. 

\subsection{Our Techniques}
\label{sec:techniques}

Given a coreset for $(k,z)$-clustering, it is quite straightforward to compute a $(1+\eps)$-approximation: one can simply enumerate all partitions of the coreset into $k$ parts, evaluate their cost and keep the best partition. The running time is thus $\min(k^{\text{coreset-size}}, \text{corest-size}^{kd})$, where the latter bound can be obtained through the result of \cite{inaba1994applications}.

Therefore, we focus on the construction of a coreset. For that, we first require a dimension reduction tool, that we explain first.

\subsubsection{Dimension Reduction}

\paragraph*{High-level overview.}
There currently exist only two known methods that enable deterministic dimension reduction. The first is principal component analysis and its generalizations to other norms. Applying principal component analysis to $k$-means requires a target dimension of $\Omega(k)$ in order to preserve the $k$-means cost of each partition \cite{CEMMP15} up to a constant factor and thus this attempt falls short of the $O(\log k)$ dimension we are aiming for.

The only known alternative is the conditional expectation method for derandomizing Johnson Lindenstrauss transforms by Engebretsen, Indyk and O'Donnell \cite{EngebretsenIO02}.
At a high level, this method selects random bits of the projection one by one, and selects the value of the next bit so as to minimize the number of ``bad events", namely the number of vectors whose lengths are distorted by a too high factor, conditioned on the bit's value. 

When applying this idea naively to a candidate embedding for $k$-clustering, the number of ``bad events" becomes the number of clusterings for which the cost is not  preserved.
As there exist $n^{dk}$ many distinct clusterings (\cite{inaba1994applications}), this is infeasible in time polynomial in $n,d,k$.
Thus, our goal is to reduce the number of bad events necessary to count, in order to use the conditional expectation method. 

For this, our first contribution is to reduce the number of distinct clusterings we need to test for. More precisely, we
show that for any cluster, a $(1+\varepsilon)$ approximate center lies in the convex hull of a subset of the cluster of size $O\lpar\eps^{-2}\rpar$, that we call a \textit{witness set}. 
Using a careful discretization argument, we can show that it is possible to enumerate over all candidate solutions in the convex hulls of such witness sets and thereby also to enumerate over all possible centers induced by any clustering. This reduces the clustering problem with arbitrary centers in $\R^d$ to a discrete one, where centers are restricted to a finite set, and it is therefore enough to preserve the distances from input points to the discrete set of centers.
There are $n^{O\lpar\varepsilon^{-2}\rpar}$ many possible witness sets -- any set of $O\lpar\eps^{-2}\rpar$ input points. Hence, counting the ``bad events" now  requires to test $n^{O\lpar\varepsilon^{-2}\rpar}$ many clusterings. This is still infeasible in our target running time, but, crucially, does not depend in the dimension $d$ anymore.

To improve further, the idea is to reduce the number of distinct points to $\poly(k, \eps^{-1})$, in order to reduce the number of possible witness sets. A \textit{coreset} is the most natural idea here. However the typical guarantees of  coresets are not strong enough for our purposes: as explained in \cref{sec:prelim}, coresets preserve the cost to any set of $k$ centers, while our goal here is to preserve the cost of any partition. In turn, we naturally introduce the notion of \textit{partition coreset}, which are small sets preserving the cost of any partition. 
We show how to construct deterministically partition coresets of size $k^{\varepsilon^{-O(z)}}$; hence, using the witness sets of this partition coreset, we are able to limit the number of clusterings to $k^{O\lpar\varepsilon^{-2}\rpar}$. We can therefore list all the potential bad events and count their number, and reduce the dimension to $\varepsilon^{-O(z)} \log k$ -- log of the number of distances to be preserved.
We point that the notion of partition coreset, more general then the usual coreset definition, may be of independent interest and have further application in the future. 

\paragraph*{Construction of partition coreset.} To build a partition coreset, a first attempt would be to use a $k'$ clustering such that its cost is at most an $\varepsilon$-fraction of the cost of an optimal $k$-clustering. In that case, the cost of snapping each input point to its closest center would be negligible, and the cost of any partition would be preserved. Unfortunately, such a set has size at least $\varepsilon^{-d}$
in the worst case.
Instead, we show -- and this is our main technical contribution -- that there exists a $k'$ clustering such that for each cluster, either the cluster is very cheap, or its cost cannot decrease by adding additional centers. We show such a statement for $k' = k^{\eps^{-O(z)}}$. The guarantee we ask on the $k'$ clustering is somewhat non-standard, as the cost in each cluster should not decrease by adding $k$ centers (as opposed to the global cost). 
In the first case, the set of centers directly gives a partition coreset. The second case implies essentially that the points in some cluster of the $k'$ clustering pay the same cost in \textit{any} possible $k$-clustering. 
Hence, we can simply replace those points by their center, and encode the cost difference in an extension coordinate.

To illustrate  further our argument, suppose that we are given a center $m$ such that the cost of clustering to $m$ is nearly the same as clustering to the optimum $k$ centers. Then, we show that in \emph{any} clustering into $k$ parts with centers $c_1,...,c_k$, for almost all points $p$ of $c_i$'s cluster, one of the following condition is verified: 
\begin{itemize}
    \item either $p$ is really close to $m$ compared to $c_i$, namely $\|p-m\| \leq \eps \|m-c_i\|$, 
    \item or $c_i$ is really close to $m$ compared to $p$, $\|m-c_i\| \leq \eps \|p-m\|$
    \item or the segment between $p$ and $m$ is nearly orthogonal to the one between $m$ and $c_i$.
\end{itemize}

Each of those conditions implies that $\|p-c_i\| \approx \lpar \|m-c_i\|^2  + \|p-m\|^2 \rpar ^{1/2}$: we can therefore replace the point $p$ by $f(p) \in \R^{d+1}$, where the first $d$ coordinates are the ones of $m$ and the \emph{extension coordinate} is $\|p-m\|$. Now, for any possible center $c_i$, the distance from $p$ to $c_i$ is roughly the one from $f(p)$ to $\zext{c_i}{0}$ ($c_i$ with an extension coordinate $0$). 
Therefore, the cost of clustering to $\{c_1, ..., c_k\}$ before the transformation is the same as clustering to $\lbra \zext{c_1}{0}, ..., \zext{c_k}{0}\rbra$ after the transformation. 
Furthermore, after transformation, all points have the same first $d$ coordinates (equal to $m$), and all centers are required to have last coordinate zero. 
Therefore, we have a small partition coreset for those points satisfying one of the condition.

We deal separately with points satisfying none of them. Using the fact that the cost of clustering to $m$ is nearly the same as to clustering to $k$ centers, we can show that those points have a very tiny cost to $m$, and are easier to deal with.

Applying that construction to all clusters of the $k'$ clustering whose cost do not decrease by adding $k$ centers, we get a set of points in $\R^{d+1}$ where the first $d$ coordinates are one of the $k'$ centers. 
From this, we manage to build a set of candidate centers capturing all clustering, of size $k'^{O(\eps^{-2})}$ -- henceforth independent of $n$. This concludes our algorithm: To reduce the dimension and preserve the cost of all clustering, it is enough to preserve distances between points of the partition coreset and the discrete set of centers, i.e., between $k^{\eps^{-O(z)}}$ many points, which can be done using a derandomized Johnson-Lindenstrauss
projection.

The last missing piece to the argument is to compute effectively this  $k'$ clustering deterministically.
This is not straightforward either, as it requires to compute a $(1+\eps)$-approximation to $(k,z)$-clustering. The standard techniques to do so rely on dimension reduction: using ours leads to a loop in the argument. 
Instead, we use another JL construction (specifically, the one  of Kane and Nelson~\cite{KaneN14}) that requires few random bits. 
Enumerating over all possible random bits provides a a (somewhat) small set of projections, such that at least one is a good dimension reduction for clustering. 
For each of the projections, we can compute a $(1+\eps)$-approximate solution in the projected space, and then lift it back to the original $\R^d$. For the projections that do not preserve the cost, those solutions in $\R^d$ may be bad: however, for the good projection, this solution will be a $(1+\eps)$-approximation in the original metric as well.  Enumerating over all possible solutions computed that way allows therefore to compute a $(1+\eps)$-approximation. 
It is then possible to use this algorithm to compute the desired $k'$ clustering, which concludes our whole algorithm.

\subsubsection{Coresets}
We show a generic coreset construction, based on a careful preprocessing of the input and VC-dimension techniques. This can be applied to the Euclidean space, using our partition coresets and dimension reduction, to obtain even smaller coresets. 

Building on the work of Chen~\cite{Chen09}, we show that constructing a coreset reduces to sampling uniformly. For this, we essentially compute a bi-criteria approximation and partition points in exponential rings according to their cost in the bi-criteria. There are $O(k \log n)$ rings, and Chen showed that uniform sampling among them yield a coreset. Our key new ingredient here is to show how to go from $O(k\log n)$ to only $O(k)$ rings.
For this, our key contribution is to start from a structured bi-criteria approximation, in the same spirit as what we did for the partition coreset. In particular, we ensure that the cost of that solution cannot decrease by adding $k$ more centers. We infer from that property that the cost of the points ``far" from the bi-criteria centers cannot be much different in any solution. 
We show that we can replace them by copies of the center, and encode their cost in an additive offset $\offset$.
This structural observation introduces a clean and elegant way of removing the dependency in $n$ from the coreset size.  
Similarly to what is done in \cite{Cohen-AddadSS21} (although their techniques are deeply randomized), this informally hints that the ``far" points do not affect clustering algorithms.

Next, we observe that instead of sampling points in each ring, it is enough to build an $\eps$-approximation of a particular set system, formed by $k$ balls. 
It is well known that the VC-dimension of that set system is bounded by $O(kd\log k)$ in Euclidean space of dimension $d$: hence, using a standard algorithm, this leads to $\eps$-approximation of size $\tilde O(\frac{kd \log k}{\eps^2})$.

Unfortunately, the running time of these constructions is exponential in the VC dimension, which is $k\cdot d\log k$ in our case. To eliminate the dependency on $d$, we apply first our construction of partition coreset: this reduces the size of the input to $k^{\varepsilon^{-O(z)}}$. We can then embed the input into a low dimensional space via terminal embeddings. 
Terminal embeddings are commonly used as a coreset preserving dimension reduction, see~\cite{BecchettiBC0S19,BravermanJKW21,Cohen-AddadSS21,HuV20} and the construction by Mahabadi, Makarychev, Makarychev and Razenshteyn~\cite{MahabadiMMR18} can be made deterministic. That way, we are able to obtain coresets of size $O(k^2\cdot \varepsilon^{-O(1)})$ in time $n^{O(\varepsilon^{-2}\log 1/\varepsilon)}+exp(\text{poly}(k,\varepsilon^{-1}))$.

\subsection{Related Work}

\paragraph{Approximation of $(k,z)$-clustering}
In general discrete metric spaces, $k$-median is NP-hard to approximate better than $1+2/e$, and $k$-means better than $1+8/e$. Those two bounds can be obtained by FPT algorithms running in time $f(k,\eps)\poly(n)$ \cite{Cohen-AddadG0LL19}. In polynomial time, the best approximation for $k$-median has been very recently improved to $2.613$~\cite{soda23,gowda22}, and $6.1290$ for $k$-means \cite{GrandoniORSV22}.

In Euclidean spaces, the picture is quite different as there are infinitely many possible centers location. The problems are NP-hard even for $d=2$ or $k=2$. The best approximation ratio are $2.406$ for $k$-median and $5.912$ for $k$-means, in polynomial time \cite{Cohen-AddadEMN22}. 

However, it is possible to compute $(1+\eps)$-approximations: either in time $f(\eps, d) n \polylog(n)$ \cite{Cohen-AddadFS21} or in time $2^{(k/\eps)^{O(1)}} nd$ by \cite{KumarSS10}. We show how to match the latter running time, deterministically. 

\paragraph{Coresets}
The introduction essentially covered the most important deterministic coreset constructions. The state-of-the art for $k$-median an $k$-means is a size $O(k^3\cdot \varepsilon^{-d-1})$~\cite{HaK07}, or $O(k\cdot \varepsilon^{-d-1}\log n)$~\cite{HaM04}.
Other examples for deterministic constructions can be found in~\cite{BravermanJKW19,FGSSS13,FrahlS2005,HJV19}. These and the aforementioned algorithms  were only claimed for $k$-median and $k$-means, but in our view could easily be extended to arbitrary powers (with the exception of the $k^{O(\varepsilon^{-2}\log 1/\varepsilon)}$ size coreset for $k$-means by~\cite{FeldmanSS20}). The literature on sampling based coreset algorithms for clustering is vast, see~\cite{FeldmanL11,FSS13,HuV20,SohlerW18,Cohen-AddadSS21,Cohen-AddadLSSS22,Chen09,HuangJLW18} and references therein. As for lower bounds, any coreset for $k$-means and $k$-median in Euclidean space must have size $\Omega\left(k \eps^{-2}\right)$~\cite{stoc-lb}.

Other notable problems for which we have deterministic coresets include extent approximation problems~\cite{AHV04,Chan04}, the minimum enclosing ball problem~\cite{BaC03,BHPI02}, determinant maximization~\cite{IndykMGR20,MahabadiIGR19}, page rank\cite{LangBSTFR19}, diversity~\cite{CeccarelloPP20,IndykMMM14}, subspace approximation~\cite{FSS13}, and spectral sparsification~\cite{BatsonSS09}. 

\paragraph{Dimension Reduction}
Most dimension reduction efforts have been on the $k$-means problem. A long line of research has studied the relationship between $k$-means and principal component analysis / singular value decomposition, see~\cite{BoutsidisM13,DrineasFKVV04,FSS13} and most notably Cohen, Elder, Musco, Musco and Persu~\cite{CEMMP15}, who showed that projecting onto the subspace spanned by the first $\lceil k/\varepsilon\rceil$ principal components approximates the cost of any clustering up to a $(1+\varepsilon)$-factor and that this bound is tight. 

The alternative to PCA-based algorithms are random projections and terminal embeddings. For the former, \cite{EngebretsenIO02} proposed a derandomization scheme with target dimension $O(\varepsilon^{-2} \log n)$, where $n$ is the number of vectors to be preserved. For terminal embeddings, the algorithm proposed by Mahabadi, Makarychev, Makarychev and Razenshteyn~\cite{MahabadiMMR18} achieves a target dimension of $O(\varepsilon^{-4}\log n)$ and is deterministic. The guarantee of a terminal embedding is somewhat stronger than that of \cite{EngebretsenIO02}: given a fixed set of points $X \in \mathbb{R}^d$, a terminal embedding is a mapping of $\mathbb{R}^d$ to $\mathbb{R}^m$ such that for any $y\in \mathbb{R}^d$ and any $x\in X$, the pairwise distance between $y$ and $x$ is preserved. At the same time, terminal embeddings are not linear, which can be a drawback. Other highly related deterministic algorithms exist for neareset neighbor search in high dimension, see~\cite{AbdullahAKK14,Bentley75,ColeGL04,Har-PeledIM12,Indyk03} and references therein, as well as sparse recovery~\cite{IndykP11,LiN20,NelsonNW12}.

The literature for randomized algorithms is vast, see~\cite{BecchettiBC0S19,CEMMP15,ElkinFN17,FKW19,KerberR15,NarayananN19,SohlerW18}, and most notably the recent and nearly optimal result by Makarychev, Makarychev and Razenshteyn~\cite{MakarychevMR19} for works with special bearing on $k$-clustering.

\subsection{Roadmap}
After introducing useful definitions and result in \cref{sec:prelim}, we present our construction of partition coreset in \cref{sec:partitionCoreset}. We first apply it to construct a deterministic dimension reduction in \cref{sec:dimred}, and then combine the results to construct small coresets in \cref{sec:coresetImproved}.
Finally, we present in \cref{sec:bicriteria} the bicriteria approximation algorithm, used as a building block in the previous parts.

\section{Preliminaries}
\label{sec:prelim}
\subsection{Notations}
We use $a=(1\pm \varepsilon)b$ as an abbreviation for $(1- \varepsilon)b\leq a\leq (1+\varepsilon)b$.
For a $d$-dimensional vector $x$, we define the $\ell_z$ norm $\|x\|_z = \sqrt[z]{\sum_{i=1}^d |x_i|^z}$.
In Euclidean spaces, we let $\dist(\cdot, \cdot)$ denote the $\ell_2$ distance. 

Given a metric space $(X, \dist)$, a set of points $P\subset X$ and $\calS \subset X$, we define $\cost^z(P, \calS) = \sum_{p\in P} \min_{s \in \calS} \dist(p, s)^z$.
The $(k,z)$-clustering objective consists of finding a set of at most $k$ centers $\calS$ that minimize $\cost^z(P, \calS)$. For simplicity, we often drop the $z$ and simply write $\cost(P, \calS)$.

Another view on the problem is to find clusters rather than centers. Given a set of points $P \subset X$, we define $\mu^z(P) := \argmin_s  \cost(P, \{s\})$ to be the optimal center for $(1,z)$-clustering on $P$. For any partition $\calC = \{C_1, C_2, ...\}$ of  $P$,
we define $\costP^z(P,\calC) := \sum_{C \in \calC} \cost(C, \mu^z(C))$.

In Euclidean spaces, we say a set $P' \subset \R^{d+1}$ is an extension of a multiset $P\subset \R^d$ if for all $p\in P$, there exists $p' \in \R$ such that $P' = \left\{\zext{p}{p'}, p \in \right\}$. The number $p'$ is called the extension of $p$. If all extensions are $0$, then $P'$ is the $0$-extension of $P$.

Finally, for any function $f$ that maps points of $X$ to some other metric, we will abusively write $f(P) := \lbra f(p): p \in P\rbra$, and for a partition $\calC$, $f(\calC)$ is the partition $\lbra f(C_1), f(C_2),...\rbra$.

When clustering extensions, we will enforce centers to have extension coordinate $0$. Therefore, for a set of points $X \subset \R^{d+1}$, we define 
$\mu^z_0(X) := \argmin_{\substack{s\in \mathbb{R}^{d+1}\\ s_{d+1}=0}} \cost^z(X, \{s\})$ to be the optimal center for $(1,z)$-clustering on $P$ with the constraint that the extension coordinate is $0$, and $\costP_0(P, \calC) = \sum_{C \in \calC} \cost^z(C, \mu^z_0(C))$.

\subsection{Definitions}
We can now define the dimension reduction guarantee we aim to satisfy.

\begin{definition}[Cost Preserving Sketches for Powers]
\label{def:costpresz}
Let $z$ be a positive integer and $\varepsilon>0$ a precision parameter.
Let $P \subset \R^d$, and $f : \R^d \rightarrow \R^{m+1}$. 

Then we say that $f$ is an $(k,z,\varepsilon)$-cost preserving sketch for $P$ if for every partition $\calC$ of $P$ into $k$ sets,
\[\costP_0^z(f(P), f(\calC)) = (1\pm \eps) \costP^z(P, \calC).\]
\end{definition}

This definition captures the guarantee satisfied for $(k,z)$-clustering given by \cite{MakarychevMR19}, with the generalization of allowing $f(P)$ to have an extension.
Let us briefly compare this definition to cost preserving sketches for $k$-means, defined as follows:
\begin{definition}[Cost Preserving Sketches for $k$-means~\cite{CEMMP15}]
\label{def:costpres2}
Let $\varepsilon>0$ be a precision parameter.
Let $P \subset \R^d$, and $f : \R^d \rightarrow \R^{m}$. 

Then we say that $f$ is an $(k,2,\varepsilon)$-cost preserving sketch for $P$ if for every partition $\calC$ of $P$ into $k$ sets,
\[\costP^2(f(P), f(\calC)) + \offset = (1\pm \eps) \costP^2(P, \calC) .\]
\end{definition}

It turns out that those two definitions are equivalent for the case $z=2$: the offset $\offset$ used for $k$-means is a special case of adding extensions to $f(P)$ and requiring the extension coordinate of the centers being set to 0 (i.e. $s_{m+1}=0$). Writing $f(p) = \zext{f'(P)}{p'}$ and $\mu_0^z(C) = \zext{\mu_C}{0}$ we indeed have for $z=2$ 
$$\sum_{C \in \calC} \sum_{p \in C} \|f(p)-\mu_0^z(C)\|^2 = \sum_{C \in \calC} \sum_{p \in C} \dist(f'(p), \mu_C)^2 + {p'}^2.$$
Note that the contribution of the final coordinate is independent of the choice of centers, under the constraint $c_{m+1}=0$. Hence, the offset $\offset$ is equivalent to $\sum_{p\in P}  {p'}^2$. Second, $\mu_C$ is exactly the mean of $C$, regardless of the extension coordinates: hence, $\costP^2(f(P), f(\calC)) + \offset = \costP_0^2(f'(P), f'(\calC)) + \sum_{p \in P} P'$, and the two definitions coincides. 

In order to construct our cost preserving sketches, we will rely on partition coreset, defined as follows: 
\begin{definition}[Partition Coresets (see \cite{SohlerW18})]
\label{def:core}
Let $B$ be a multiset consisting of several copies of points in $\R^d$, with total size $|P|$.

$B$ is an $(\varepsilon,k,z)$-extension partition coreset of $P$ if there exists an extension $B'$ of $B$ and a one-to-one mapping $g : P \rightarrow B'$ such that for any partition $\calC = \{C_1, ..., C_k\}$ into $k$ parts and any centers $\calS = \{s_1, ..., s_k\}$ with $0$-extension $\{s'_1, ..., s'_k\}$, it holds that
$$\sum_{i=1}^k \sum_{p\in C_i} \dist(g(p), s'_i)^z = (1\pm \eps)\sum_{i=1}^k \sum_{p\in C_i} \dist(p, s_i)^z$$

 We say that the size of the coreset is the number of distinct points of $B$.
\end{definition}
We insist on the fact that the size of an extension partition coreset is the number of distinct points in $B$, not on the extension $B'$.

This definition of coreset is somewhat stronger than the commonly found definition that requires the cost of an assignment $\sum_{p\in A} \underset{s\in \calS}{\min} \dist(p,s)^z$ to be approximated for any set of $k$-centers $\calS$. In our definition, the points are not necessarily assigned optimally to the center set. This difference is crucial for dimension reduction purposes, as guarantees required for cost preserving sketches are related to partitions as well.

Coresets with extensions where introduced by Sohler and Woodruff~\cite{SohlerW18} for $k$-median only: our definition extends theirs, while providing a stronger guarantee, as cost is preserved for any assignments and not only assignments to the closest center.

Finally, we define the more standard coreset that we will work with in section~\ref{sec:coresetImproved}.
The definition of coreset we consider here allows for an \emph{offset}, originally due to Feldman, Schmidt and Sohler~\cite{FSS13}. For other works using this definition, see~\cite{BecchettiBC0S19,CEMMP15,SohlerW18, Cohen-AddadSS21, stoc-lb}.
\begin{definition}\label{def:stdcoreset}
  Let $(X, \dist)$ be a metric space, and $P \subset X$.
  An $(\eps, k, z)$-coreset for the $(k, z)$-Clustering problem on $P$ is a set $\coreset \subset X$ with weights $w : \coreset \rightarrow \R_+$ together with a scalar $\offset$ such that, for any set $\calS \subset X^d$, $|\calS| = k$, 
\[\cost^z(\coreset, \calS) + \offset = (1\pm \eps)\cost^z(P, \calS) \]
\end{definition}
As opposed to the previous sketches, those coresets preserve the cost of any set of \textit{centers}, instead of preserving the cost of any clustering. In particular, those centers may not be the optimal ones for the induced partition. 

Lastly, a $(\alpha,\beta)$ bicriteria algorithm for clustering produces a clustering with $\beta\cdot k$ centers and cost $\alpha\cdot OPT$, where $OPT$ denotes the cost of an optimal clustering with $k$ centers. 

\subsection{Useful Results}
\begin{lemma}[Triangle Inequality for Powers]
\label{lem:weaktri}
Let $a,b,c$ be an arbitrary set of points in a metric space with distance function $d$ and let $z$ be a positive integer. Then for any $\varepsilon>0$
$$d(a,b)^z \leq (1+\varepsilon)^{z-1} d(a,c)^z + \left(\frac{1+\varepsilon}{\varepsilon}\right)^{z-1} d(b,c)^z $$ and
$$\left\vert d(a,b)^z - d(a,c)^z\right\vert \leq \varepsilon \cdot d(a,c)^z + \left(\frac{z+\varepsilon}{\varepsilon}\right)^{z-1} d(b,c)^z.$$
\end{lemma}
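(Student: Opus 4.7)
The plan is in two independent steps, each handling one inequality; the second will essentially follow from applying the first with a careful choice of parameter.

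For the first inequality, I would reduce to a purely numerical statement. Since $d(a,b)\leq d(a,c)+d(b,c)$ and $t\mapsto t^z$ is nondecreasing on $\R_{\geq 0}$, it suffices to show that for nonnegative reals $x,y$,
\[(x+y)^z \leq (1+\varepsilon)^{z-1} x^z + \left(\tfrac{1+\varepsilon}{\varepsilon}\right)^{z-1} y^z.\]
This is a direct consequence of the convexity of $t\mapsto t^z$ for $z\geq 1$. For any weights $\alpha\in(0,1)$ and $1-\alpha$, write $x+y = \alpha(x/\alpha) + (1-\alpha)(y/(1-\alpha))$ and apply Jensen's inequality:
\[(x+y)^z \leq \alpha(x/\alpha)^z + (1-\alpha)(y/(1-\alpha))^z = \alpha^{1-z} x^z + (1-\alpha)^{1-z} y^z.\]
Choosing $\alpha=1/(1+\varepsilon)$ (so that $1-\alpha=\varepsilon/(1+\varepsilon)$) gives exactly the stated constants.

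For the second inequality, I would first assume WLOG that $d(a,b)\geq d(a,c)$; the other case is symmetric by swapping the roles and invoking $d(a,c)\leq d(a,b)+d(b,c)$. The key idea is to re-apply the first inequality, but with a rescaled parameter $\varepsilon' := \varepsilon/z$ in place of $\varepsilon$. Since $(1+\varepsilon')/\varepsilon' = 1+z/\varepsilon = (z+\varepsilon)/\varepsilon$, this produces
\[d(a,b)^z \leq \bigl(d(a,c)+d(b,c)\bigr)^z \leq (1+\varepsilon/z)^{z-1}\, d(a,c)^z + \left(\tfrac{z+\varepsilon}{\varepsilon}\right)^{z-1} d(b,c)^z,\]
which already matches the $d(b,c)^z$ coefficient in the target. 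Subtracting $d(a,c)^z$ then reduces everything to the single Bernoulli-type estimate $(1+\varepsilon/z)^{z-1} \leq 1+\varepsilon$.

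The main obstacle is precisely this last step: the naive bound $(1+\varepsilon/z)^{z-1}\leq e^{\varepsilon(z-1)/z}\leq e^\varepsilon$ is too weak. I would instead rewrite
\[(1+\varepsilon/z)^{z-1} - 1 \;=\; \frac{\varepsilon}{z}\sum_{k=0}^{z-2}(1+\varepsilon/z)^k,\]
and control the geometric-like sum directly, or equivalently compare $(z-1)\ln(1+\varepsilon/z)$ with $\ln(1+\varepsilon)$ via the concavity of $\ln$. Verifying this inequality (in the regime of $\varepsilon$ relevant to the paper) is the only delicate part; once it is in place, the second inequality falls out immediately by subtraction, and the absolute value is handled by the symmetric case already noted.
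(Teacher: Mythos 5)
The paper itself does not give a proof of this lemma (it only points to the literature), so there is no in-paper argument to compare against; I evaluate your proposal on its own terms. Your treatment of the first inequality is correct and is the standard convexity argument: $(x+y)^z \leq \alpha^{1-z}x^z + (1-\alpha)^{1-z}y^z$ by Jensen, and $\alpha = 1/(1+\varepsilon)$ gives the stated constants. Your reduction of the second inequality to the first one with $\varepsilon' = \varepsilon/z$ is also clean, and correctly isolates the remaining numerical claim $(1+\varepsilon/z)^{z-1} \leq 1+\varepsilon$.

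The problem is that this last claim, which you call "the only delicate part," is not merely delicate: it is false in general, so none of your suggested routes (expanding the geometric-like sum, or comparing $(z-1)\ln(1+\varepsilon/z)$ with $\ln(1+\varepsilon)$) can close it. As $z\to\infty$ with $\varepsilon$ fixed, $(1+\varepsilon/z)^{z-1}\to e^{\varepsilon} > 1+\varepsilon$; concretely, with $z=5$ and $\varepsilon=1$ one gets $(6/5)^4 = 1296/625 \approx 2.07 > 2$. This even yields a counterexample to the lemma as stated: take collinear points with $d(a,c)=1$, $d(b,c)=0.2$, $d(a,b)=1.2$, and $z=5$, $\varepsilon=1$; then $|d(a,b)^5-d(a,c)^5| \approx 1.488$ while $\varepsilon\, d(a,c)^5 + \left(\frac{z+\varepsilon}{\varepsilon}\right)^{z-1} d(b,c)^5 = 1 + 6^4\cdot 0.2^5 \approx 1.415$. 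The clause "for any $\varepsilon>0$" in the statement is too strong; the inequality only holds for $\varepsilon$ sufficiently small relative to $z$ (roughly $\varepsilon \lesssim 1/z$, which is how the paper always invokes it, e.g., with $\varepsilon/(4z)$). Two honest ways to finish: (i) add an explicit restriction such as $\varepsilon \leq 1/(z-1)$, under which your geometric-sum bound $\frac{\varepsilon}{z}\sum_{k=0}^{z-2}(1+\varepsilon/z)^k \leq \varepsilon$ can be made to go through; or (ii) prove the unconditional but slightly different bound $|x^z - y^z| \leq \varepsilon\,\max(x,y)^z + \left(\frac{z-1}{\varepsilon}\right)^{z-1}|x-y|^z$ directly via the mean value theorem ($|x^z-y^z|\leq z\max(x,y)^{z-1}|x-y|$) followed by weighted Young's inequality, which is what many of the cited sources actually establish.
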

Proofs of this inequality (and variants thereof) can be found in~\cite{BecchettiBC0S19,Cohen-AddadS17,FSS13,MakarychevMR19})

We will use an algorithm that compute an approximation to $(k,z)$-clustering up to a tiny error, as stated in the following theorem.  We prove the theorem
in \cref{sec:bicriteria}: although key to our algorithms, the techniques to build this approximation are distinct from our main techniques and we deferred the proof to the end.

\begin{restatable}{theorem}{bicriteria}
\label{thm:bicriteria}
There exists an algorithm running in time $k^{\log \log k + \log(1/\eps)} \cdot \poly(nd)$ that compute a set of $O(k\cdot \log(1/\eps)/\eps)$ many centers $\calS$, such that $\cost(P, \calS) \leq (1+\eps)\cost(P, \opt)$, where $\opt$ is the best solution with $k$ centers.
\end{restatable}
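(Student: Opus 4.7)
The plan is to reduce the problem to a lower-dimensional instance via a deterministic dimension reduction, solve the bicriteria problem there, and lift the solution back to $\R^d$. To avoid circular dependence on the paper's main dimension reduction result (which itself invokes Theorem~\ref{thm:bicriteria}), I will use the Kane-Nelson JL transform of \cite{KaneN14}, built from a hash family with only few truly random bits, and enumerate exhaustively over all possible seeds to obtain a deterministic family of candidate projections.

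Concretely, I instantiate Kane-Nelson with target dimension $d' = O(\eps^{-2}\log k)$. A limited-independence analysis in the spirit of Makarychev-Makarychev-Razenshteyn~\cite{MakarychevMR19} reveals that $(k,z)$-clustering cost preservation only requires control over a family of at most $2^{\poly(k)}$ ``bad events'' (clustering signatures obtained after a coarse approximation), rather than $n^{\Theta(kd)}$ arbitrary $k$-partitions, and that KN hashes with only $O(\log\log k + \log(1/\eps))$-wise independence suffice to handle each via a Chernoff-style bound. The total number of truly random bits is therefore only $O(\log k \cdot (\log\log k + \log(1/\eps)))$, so enumerating all seeds yields a family $\pi_1,\ldots,\pi_N$ of $N = k^{O(\log\log k + \log(1/\eps))}$ deterministic candidate projections, at least one of which is a $(1+\eps)$-cost preserving sketch for $P$.

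For each $\pi_i$, I solve the bicriteria problem in $\R^{d'}$ deterministically: first compute a constant-factor approximation with $k$ centers (e.g., via the primal-dual algorithm of Ahmadian-Norouzi-Fard-Svensson-Ward~\cite{AhmadianNSW17}), then iteratively augment each induced cluster with centers drawn from a low-dimensional grid of $\poly(k/\eps)$ candidate locations, greedily choosing the augmentation that maximizes the cost decrease. A standard argument shows that $O(\log(1/\eps)/\eps)$ augmentations per cluster suffice to drive the cost of $\pi_i(P)$ within a $(1+\eps)$-factor of its optimal $(k,z)$-clustering, using $O(k \log(1/\eps)/\eps)$ centers in total. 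Each low-dimensional solution is then lifted back to $\R^d$ by snapping every center to its nearest input point in $P$, losing at most another $(1+\eps)$ factor by \cref{lem:weaktri}, and the lifted solution of smallest cost on $P$ is returned. For the good projection $\pi_{i^*}$, the three compositions give cost at most $(1+O(\eps))\cost(P,\opt)$, so rescaling $\eps$ yields the claimed guarantee. The total runtime is $N \cdot \poly(nd,k,1/\eps) = k^{O(\log\log k + \log(1/\eps))} \cdot \poly(nd)$.

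The main obstacle is the first step: carefully bounding the number of random bits needed for KN to preserve $(k,z)$-clustering cost. A naive union bound over arbitrary $k$-partitions would demand $\Omega(\log n)$-wise independence, which would reintroduce the $n$-dependence we want to avoid. The key structural insight is that, after fixing a constant-factor approximation, the relevant clustering signatures form a family of size only $2^{\poly(k)}$, so the bit count can be squeezed down to $O(\log k \cdot (\log\log k + \log(1/\eps)))$. Marrying this MMR-style counting argument with the pseudo-random structure of KN—and verifying that lifting through a non-invertible projection loses only a $(1+\eps)$ factor for \emph{all} $(k,z)$-clusterings simultaneously, not just the optimal one—is the most delicate part of the proof.
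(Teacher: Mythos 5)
Your high-level strategy — Kane--Nelson with few random bits, enumerate seeds, solve a bicriteria problem in low dimension, lift back, return the best — matches the paper's plan. But two steps as you wrote them do not hold up.

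\textbf{The bit count.} You claim that restricting attention to $2^{\poly(k)}$ ``clustering signatures'' drops the KN seed length to $O(\log k \cdot (\log\log k + \log(1/\eps)))$ directly in dimension $d$. This does not follow: by Theorem~\ref{thm:subGaussianKN}, the number of random bits in the KN construction is $O(\eps^2 m \log d)$ with $m$ the target dimension, so even with $m = O(\eps^{-2}\log(k/\eps))$ the seed has length $\Theta(\log(k/\eps)\cdot \log d)$. The $\log d$ factor is intrinsic to the KN sampler (it comes from sampling hash functions on a universe of size $d$), and the number of bad events you must union over does not reduce it. To get the claimed $k^{\log\log k + \log(1/\eps)}$ enumeration, you first need to \emph{deterministically} reduce the ambient dimension from $d$ down to $\tilde d = O(\eps^{-2}\log n)$ using the Engebretsen--Indyk--O'Donnell transform of \cref{thm:EIO} (preserving pairwise distances among the input, which suffices for clustering-cost preservation by the Kirszbraun argument of \cite{MakarychevMR19}). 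Only then does the KN seed length become $O(\log(k/\eps)\cdot \log(\eps^{-2}\log n))$, and the paper converts $2^{O(\log k \cdot \log\log n)}$ to $k^{O(\log\log k)} n^{1+o(1)}$ by the case split on whether $\log k$ exceeds $\log n / \log\log n$. Without the EIO preprocessing, the enumeration time has a stray $2^{\Theta(\log k \cdot \log d)}$ factor.

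\textbf{The lift.} You propose snapping each low-dimensional center to its nearest input point in $P$, claiming a $(1+\eps)$ loss via \cref{lem:weaktri}. This only gives a factor of $2^z$ in general: if $p^*$ is the nearest input point to the cluster's optimal center $c$, then $\dist(p,p^*)\leq 2\dist(p,c)$ for $p$ in that cluster, so $\cost(C,p^*)\leq 2^z\cost(C,c)$, and no amount of the generalized triangle inequality improves that to $(1+\eps)$. The MMR-style guarantee (\cref{thm:dimRedMMR}) preserves the cost of \emph{partitions}, so the correct lift is: take the partition induced by the projected centers and compute, via convex programming, the optimal $(1,z)$-center in $\R^d$ of each part. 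That is what the paper does, and it is needed to obtain the $(1+\eps)$-factor rather than a constant factor.

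The remainder (constant-factor start, greedy augmentation in low dimension with a discretized grid of candidate centers, terminating after $O(k\log(1/\eps)/\eps)$ additions) is essentially the paper's \cref{lem:bicriteria-lowdim} and \cref{lem:greedy-bicriteria}, and that part of your argument is sound.
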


\paragraph{Derandomizing Random Projections.}
The seminal Johnson-Lindenstrauss lemma states that any $n$ point set in $d$-dimensional Euclidean space can be (linearly) embedded into a $m\in O(\varepsilon^{-2}\log n)$ dimensional space such that all pairwise distances are preserved up to a $(1\pm \varepsilon)$ factor. 
Most proofs show this guarantee by sampling a matrix from an appropriate distribution. We will use the following derandomization based on the conditional expectation method.

\begin{theorem}[\cite{EngebretsenIO02}]
\label{thm:EIO}
Let $v_1,\ldots v_n$ be a sequence of vectors in $\mathbb{R}^d$ and let $\varepsilon,F\in (0,1]$. Then we can compute in deterministic time $O(nd(\log n \varepsilon^{-1})^{O(1)})$ a linear mapping $S:\mathbb{R}^d\rightarrow \mathbb{R}^m$ where $m\in O(\varepsilon^{-2}\log 1/F)$ such that
\[ (1-\varepsilon)\cdot \|v_i\|_2 \leq \|S v_i\|_2 \leq (1+\varepsilon)\|v_i\|_2\]
for at least a $(1-F)$ fraction of the $i$'s.
\end{theorem}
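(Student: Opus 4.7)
The plan is to derandomize a standard Achlioptas-style Johnson--Lindenstrauss construction via the method of conditional expectations. Take $S \in \mathbb{R}^{m \times d}$ with independent entries $s_{jt} = r_{jt}/\sqrt{m}$ for $r_{jt} \in \{-1,+1\}$ uniform, and set $m = C \varepsilon^{-2}\log(1/F)$ for a large enough constant $C$. Writing $\|Sv_i\|_2^2 - \|v_i\|_2^2 = \tfrac{1}{m}\sum_{j=1}^m X_{ij}$ with $X_{ij} = \langle r_j, v_i\rangle^2 - \|v_i\|_2^2$ mean-zero and independent across $j$, standard concentration (Bernstein or a $2k$-th moment bound) gives $\Pr[\,\|Sv_i\|_2 \notin (1\pm \varepsilon)\|v_i\|_2\,] \leq F$ per vector, so the expected number of violating $v_i$ is at most $Fn$. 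The probabilistic method yields existence of a good $S$; the work is to find it deterministically.

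For the derandomization I would use a pessimistic estimator based on the $2k$-th central moment with $k = \Theta(\log(1/F))$. Define $\Phi_i(S) := \mathbb{E}\bigl[(\|Sv_i\|_2^2 - \|v_i\|_2^2)^{2k}\bigr]/(\varepsilon \|v_i\|_2^2)^{2k}$, where the expectation is taken over the bits $r_{jt}$ not yet fixed, and let $\Phi := \sum_i \Phi_i$. Markov's inequality guarantees that $\Phi_i$ upper bounds the conditional probability that the $i$-th vector is violated, and the usual moment bound for sums of independent mean-zero random variables gives $\Phi \le Fn$ when no bits are fixed. Reveal the bits $r_{jt}$ in any prescribed order; since $\Phi$ is a convex combination over the two possible values of the next bit, at least one choice does not increase $\Phi$, so pick it. After all $md$ bits are fixed, $\Phi$ is a deterministic upper bound on the number of violated vectors, hence at most $Fn$ of the $v_i$ can be bad.

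The computationally delicate step is evaluating $\Phi_i$ under a partial fixing. Expanding $\bigl(\sum_j X_{ij}\bigr)^{2k}$ via the multinomial theorem and using independence plus the mean-zero property of the $X_{ij}$ across $j$, the only surviving monomials correspond to partitions of the $2k$ factors into blocks of size at least $2$ matched to indices in $[m]$, of which there are at most $m^k$. Each surviving contribution is a polynomial of degree at most $4k$ in the $md$ Rademacher bits, whose conditional expectation over the not-yet-fixed bits reduces to elementary moments of independent Rademachers ($\mathbb{E}[r^{2\ell}] = 1$, $\mathbb{E}[r^{2\ell+1}] = 0$) and is thus computable in polynomial time. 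Summed over the $n$ vectors, the $md$ bit updates, and the $m^{O(k)}$ monomials per vector, the entire derandomization runs in time $O(nd\,(\log n/\varepsilon)^{O(1)})$ for $k = O(\log(1/F))$ and $m = O(\varepsilon^{-2}\log(1/F))$.

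The main obstacle I expect is precisely this polynomial-time evaluation of the pessimistic estimator: a priori, the $2k$-th moment of $\|Sv_i\|_2^2 - \|v_i\|_2^2$ is a polynomial in $md$ variables of degree $4k$ with exponentially many monomials. The saving comes from the independence of rows and the mean-zero property of each $X_{ij}$, which collapses the expansion to $m^{O(k)}$ matched patterns; once this combinatorial simplification is in place, the method of conditional expectations goes through verbatim and yields a deterministic linear map onto $m = O(\varepsilon^{-2}\log(1/F))$ dimensions that preserves the norms of at least $(1-F)n$ of the input vectors within a $(1\pm\varepsilon)$ factor, as claimed.
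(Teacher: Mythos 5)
The overall strategy is indeed the Engebretsen--Indyk--O'Donnell strategy: a Rademacher (or similarly bounded) JL matrix, a pessimistic estimator built from a $2k$-th central moment with $k=\Theta(\log(1/F))$, and the method of conditional expectations to fix the bits one at a time. The existence argument and the Markov/moment bookkeeping are fine. Where your argument breaks is the runtime claim, and the break is exactly at the step you flag as the ``delicate'' one.

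You bound the number of surviving terms in the multinomial expansion of $\mathbb{E}\bigl[(\sum_j X_{ij})^{2k}\bigr]$ by $m^{O(k)}$ (tuples $\alpha_1,\dots,\alpha_m$ with each $\alpha_j\neq 1$ summing to $2k$, equivalently block patterns matched to at most $k$ indices in $[m]$), and then assert that summing these yields $O(nd(\log n/\varepsilon)^{O(1)})$ total time. This does not follow. With $m=\Theta(\varepsilon^{-2}\log(1/F))$ and $k=\Theta(\log(1/F))$, you get $m^{O(k)} = (\varepsilon^{-2}\log(1/F))^{O(\log(1/F))}$, which in the intended regime $F=1/\mathrm{poly}(n)$ (the one you need to preserve all vectors) is $(\log n)^{\Theta(\log n)} = n^{\Theta(\log\log n)}$ even for constant $\varepsilon$ — quasi-polynomial, not polylogarithmic. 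Enumerating the matched patterns is therefore not an option, and the stated running time is not established by your argument. A secondary soft spot is the sentence claiming each surviving monomial ``reduces to elementary Rademacher moments and is thus computable in polynomial time'': a priori a degree-$4k$ polynomial in $d$ Rademacher bits has $d^{O(k)}$ terms, so this also needs an argument (it can be rescued because $\mathbb{E}\bigl[\prod_j X_{ij}^{\alpha_j}\bigr]$ factorizes across rows, but you do not say so).

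The fix, and the actual content of the polynomial-time evaluation, is to never expand into matched patterns at all. Because the rows of $S$ use disjoint bits, the conditional random variables $X_{i1},\dots,X_{im}$ remain independent under any partial fixing, so the moments of the running sum obey a convolution recurrence $\mathbb{E}[S_t^{\ell}]=\sum_{a\le\ell}\binom{\ell}{a}\mathbb{E}[S_{t-1}^{\ell-a}]\,\mathbb{E}[X_{it}^{a}]$, which computes $\mathbb{E}\bigl[(\sum_j X_{ij})^{2k}\bigr]$ in $O(mk^2)$ arithmetic operations once the single-row moments $\mathbb{E}[X_{ij}^{a}]$ are known. In turn, $X_{ij}=\langle r_j,v_i\rangle^2-\|v_i\|^2$, and $\langle r_j,v_i\rangle$ under partial fixing is a constant plus a sum of independent scaled Rademachers, whose moments up to order $4k$ are again computable by the same one-dimensional DP in $O(dk^2)$ time. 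This dynamic-programming computation — not a monomial enumeration — is what makes the estimator efficiently evaluable and gives a bound of the form $nd\cdot\mathrm{poly}(\log n,1/\varepsilon)$ (with some care about updating incrementally rather than recomputing from scratch at every bit). Without replacing the $m^{O(k)}$-term enumeration by this structured computation, your proposal does not deliver the theorem's running time.
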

Note that setting $F<1/n$ implies that the guarantee holds for all vectors. Also, we note that if we are given $n$ points rather than $n$ vectors and wish to preserve the pairwise distances, the running time has a quadratic dependency on $n$ as there are ${n\choose 2}$ many distance vectors. Since the running time is dominated by computing a bicriteria approximation, we will omit it from the statement of our theorems.

\section{Deterministic Partition Coresets for $(k,z)$-Clustering}
\label{sec:partitionCoreset}

\subsection{Construction of the Coreset}
The main goal of this section is to prove the following theorem:
\begin{theorem}
\label{thm:extension}
Let $A$ be a set of $n$ points in $\mathbb{R}^d$, $\eps > 0$ and $k$ be an integer. We can compute in deterministic time $k^{\log \log k + \eps^{-O(z)}} \cdot \poly(nd)$ an $(\varepsilon,k,z)$-extension partition coreset of $A$ with at most $k^{\varepsilon^{-O(z)}}$ distinct points. 
\end{theorem}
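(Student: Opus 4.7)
The plan is to reduce the coreset construction to building a carefully structured auxiliary clustering of $A$ into $k' = k^{\eps^{-O(z)}}$ parts $C_1,\ldots,C_{k'}$ with centers $m_1,\ldots,m_{k'}$, and then to transform each cluster into copies of its center equipped with a suitable extension coordinate. The auxiliary clustering $\calC' = \{C_1,\ldots,C_{k'}\}$ will satisfy the following property: every cluster $C_j$ is either \emph{cheap} --- $\cost^z(C_j,\{m_j\})$ is at most an $\eps/k'$ fraction of a known upper bound on $\opt$ --- or \emph{stable}, meaning that no set $\calS$ of at most $k$ additional centers can substantially reduce its cost: $\cost^z(C_j, \{m_j\} \cup \calS) \geq (1 - O(\eps))\cost^z(C_j, \{m_j\})$.

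To construct $\calC'$, I would first invoke \cref{thm:bicriteria} to obtain an initial clustering of $A$ whose cost is within a $(1+\eps)$ factor of $\opt$. Then I would iteratively refine: for any cluster that is neither cheap nor stable, I invoke \cref{thm:bicriteria} within that cluster with parameter $k$, which by the definition of instability must reduce the cluster's cost by a factor of at least $(1-\Theta(\eps))$. This refinement is repeated until every cluster is cheap or stable. Each round multiplies the number of centers by $\poly(k, \eps^{-1})$, and a potential-decrease argument --- the total cost drops by a factor $(1-\Omega(\eps))$ per round, so after $\eps^{-O(z)}$ rounds the residual cost lies below the cheap threshold --- caps the number of rounds at $\eps^{-O(z)}$. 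Thus $k' \leq k^{\eps^{-O(z)}}$, and the total time is $k^{\log\log k + \eps^{-O(z)}}\poly(nd)$.

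With $\calC'$ in hand, I define $g : A \to B'$ as follows. For each point $p$ in a cheap cluster $C_j$, set $g(p) := \zext{m_j}{0}$; the aggregate error across cheap clusters is bounded directly by $\sum_j \cost^z(C_j, \{m_j\}) \leq \eps \cdot \opt$ and is absorbed into the target $(1\pm\eps)$ guarantee. For each $p$ in a stable cluster $C_j$, set $g(p) := \zext{m_j}{\|p-m_j\|}$, so that for any center $c$ with $0$-extension $c' = \zext{c}{0}$ we have $\|g(p)-c'\|^2 = \|p-m_j\|^2 + \|m_j-c\|^2$. The heart of the analysis is to prove $\|p-c\|^z = (1\pm O(\eps))\|g(p)-c'\|^z$ on essentially all $(p,c)$ pairs. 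I would split into three geometric cases for the triangle $p,m_j,c$: (a) $\|p-m_j\| \leq \eps\|m_j-c\|$, for which both sides reduce to $(1\pm O(\eps))\|m_j-c\|^z$ by \cref{lem:weaktri}; (b) the symmetric case $\|m_j-c\| \leq \eps\|p-m_j\|$; (c) the angle at $m_j$ is within $O(\eps)$ of $\pi/2$, for which the law of cosines gives $\|p-c\|^2 = (1\pm O(\eps))(\|p-m_j\|^2+\|m_j-c\|^2)$.

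The main obstacle is to show that the \emph{bad} points --- those for which, for some center $c$ in some $k$-partition $\calC$ of $A$, all three geometric cases fail --- contribute in aggregate only $O(\eps)\cost^z(A,\calC)$. For a bad point $p$ assigned to $c_i$, the vectors $p-m_j$ and $c_i-m_j$ have comparable magnitudes (within a factor $\eps^{-1}$) and their angle is bounded away from $\pi/2$, so $p$ is strictly and non-negligibly closer to $c_i$ than to $m_j$. If the bad points in $C_j$ carried too much weight, reassigning them from $m_j$ to the centers $\{c_1,\ldots,c_k\}$ would witness a cost decrease in $C_j$ exceeding $(1-O(\eps))$, contradicting the stability of $C_j$. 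Making this quantitative --- in particular, choosing the three $\eps$-thresholds so that the resulting bad-point error closes against the $k'$ target --- is the central technical task; once it is complete, summing over $j$ converts the pointwise distance equality into the identity required by \cref{def:core}.
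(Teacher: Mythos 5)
Your plan follows essentially the same route as the paper's \cref{alg:partCoreset}: recursively refine an initial clustering into $k'=k^{\eps^{-O(z)}}$ parts; terminate a branch either when adding $k$ extra centers cannot decrease its $(1,z)$-cost by more than a $\beta$-fraction (your \emph{stable} case, the paper's Line 2) or when the residual cost has become negligible (your \emph{cheap} case, the paper's depth-$\gamma$ termination); extend stable clusters by $\|p-m\|$ and zero out cheap ones; and split the pointwise analysis into three geometric regimes around $m$. These regimes are exactly Conditions 1--2 of \cref{lem:key} together with \cref{lem:tighterbound} and \cref{lem:projbound}.

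There is, however, a genuine gap in your bad-point argument. You assert that for a bad point $p$ in a stable cluster $C_j$ assigned to $c_i$, because $\|p-m_j\|$ and $\|m_j-c_i\|$ are within an $\eps^{-1}$ factor and the angle at $m_j$ is bounded away from $\pi/2$, the point $p$ is ``strictly and non-negligibly closer to $c_i$ than to $m_j$.'' This does not follow from the bad-point conditions: the angle $\angle p\, m_j\, c_i$ may be bounded away from $\pi/2$ \emph{from above}, i.e.\ obtuse, so the projection $q_p$ lies on the opposite side of $m_j$ from $c_i$, in which case $\|p-c_i\|\ge\|p-m_j\|$ and reassigning $p$ to $c_i$ \emph{increases} cost. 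Hence reassigning bad points to $\{c_1,\ldots,c_k\}$ produces no guaranteed cost decrease and the contradiction with stability does not close. The paper's \cref{lem:badcost} fixes this by constructing, for each $c_i$, \emph{two} auxiliary centers $\ell_i,r_i$ on the line through $m$ and $c_i$ at a small multiple of $\|m-c_i\|$ from $m$ on \emph{either side}, so every bad point's projection is reachable from one of them and the Pythagorean argument yields a guaranteed multiplicative cost decrease. Since this creates $2k$ candidate centers while stability is only against $k$ added centers, one must further greedily keep the $k$ with the largest aggregate cost drop (losing a factor two via Markov) before concluding that $\sum_{p\in D}\|p-m\|^z$ is an $O(\alpha\eps^{-6})$ fraction of $\sum_{p\in A}\|p-m\|^z$. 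Without both ingredients --- auxiliary centers near $m$ on both sides rather than the $c_i$ themselves, and the $2k\to k$ selection --- the step you sketch does not go through.
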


The key structural lemma we use in this case is as follows. Essentially, it states that if the cost of clustering to a single center cannot be decreased by adding $k$ centers, then the center is itself a partition coreset. 

\begin{lemma}
\label{lem:key}
Let $A$ be a set of points in $\mathbb{R}^d$ and let $m$ be a (not necessarily optimal) center for $A$. Let $M$ be the multiset of $|A|$ copies of $m$.

Suppose that for all set of $k$ centers $\calS$,
$$\cost(A,\{m\}) - \cost(A,\calS) < \alpha \cdot \cost(A,\{m\}),$$
for $\alpha\leq \frac{\varepsilon^{z+6}}{401408\cdot 2^{3z}\cdot z^{z+6}}$.

Then $M$ is an $(\eps,k,z)$-extension partition coreset of $A$.

More precisely, define the function $f$ that maps $p$ to $f(p) = \zext{m}{\|p-m\|}$, and let $M' = f(A)$. Then, for any partition $\calC = \{C_1, ..., C_k\}$ of $A$ into $k$ parts and any centers $c_1, ..., c_k$ with $0$-extension $c'_1, ..., c'_k$, it holds that
$$\sum_{i=1}^k \sum_{p\in C_i} \cost(f(p), c'_i) = (1\pm \eps)\sum_{i=1}^k \sum_{p\in C_i} \cost(p, c_i)$$
\end{lemma}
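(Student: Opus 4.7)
Fix the partition $\calC=\{C_1,\dots,C_k\}$ and the centers $c_1,\dots,c_k$. For $p\in C_i$, set $a_p:=\|p-m\|$, $b_i:=\|m-c_i\|$, and let $\theta_{p,i}$ be the angle at $m$ in the triangle $\{p,m,c_i\}$. Since $f(p)=\zext{m}{a_p}$ and $c'_i=\zext{c_i}{0}$, we have $\|f(p)-c'_i\|^2 = a_p^2+b_i^2$, while the law of cosines gives $\|p-c_i\|^2 = a_p^2+b_i^2-2a_p b_i\cos\theta_{p,i}$. Hence the single-point discrepancy between the two $z$-th powers is driven entirely by the term $2 a_p b_i\cos\theta_{p,i}$ compared to $a_p^2+b_i^2$, and the proof reduces to a case analysis isolating pairs $(p,c_i)$ for which this cross term is small from those for which it is not.

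I will fix a threshold $\eps_0 := \Theta(\eps/z)$ and call the pair $(p,c_i)$ \emph{good} if at least one of the three conditions of the informal overview holds: (i) $a_p\le \eps_0 b_i$, (ii) $b_i\le \eps_0 a_p$, or (iii) $|\cos\theta_{p,i}|\le \eps_0$. In each good case I bound $|2 a_p b_i\cos\theta_{p,i}| \le O(\eps_0)(a_p^2+b_i^2)$: conditions (i) and (ii) give $a_p b_i \le \eps_0 (a_p^2+b_i^2)$ directly, and (iii) combined with $2 a_p b_i\le a_p^2+b_i^2$ (AM--GM) yields the same bound. Hence $\|p-c_i\|^2 = (1\pm O(\eps_0))(a_p^2+b_i^2)$, and applying \cref{lem:weaktri} after raising to the $z/2$-th power gives $\|p-c_i\|^z = (1\pm O(\eps))\|f(p)-c'_i\|^z$ pointwise. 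Summing over the good pairs yields the required $(1\pm O(\eps))$-bound on their contribution.

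The remaining work is to absorb the bad pairs. A bad pair simultaneously satisfies $\eps_0 a_p\le b_i\le a_p/\eps_0$ and $|\cos\theta_{p,i}|>\eps_0$, which forces $\max(\|p-c_i\|^2, a_p^2+b_i^2)\le O(\eps_0^{-2}) a_p^2$; so each bad pair contributes at most $O(\eps_0^{-z}) a_p^z$ to either side, and it suffices to prove $\sum_{\text{bad}} a_p^z \le O(\alpha)\cost(A,\{m\})$. Combined with the lower bound $\text{LHS}\ge \cost(A,\{c_1,\dots,c_k\})\ge (1-\alpha)\cost(A,\{m\})$ coming from the hypothesis applied to $\calS=\{c_1,\dots,c_k\}$, this yields a bad-contribution bound of $O(\alpha\eps_0^{-z})\cdot \text{LHS}\le \eps\cdot \text{LHS}$ for $\alpha$ as small as in the statement. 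To prove $\sum_{\text{bad}} a_p^z \le O(\alpha)\cost(A,\{m\})$ I will apply the hypothesis to each of the $k$ size-$k$ subsets of $\{m,c_1,\dots,c_k\}$ (dropping one $c_j$ at a time while keeping $m$); averaging the $k$ resulting inequalities gives the aggregate-savings bound $\sum_p [a_p^z-\|p-c_{i(p)}\|^z]_+ \le O(\alpha)\cost(A,\{m\})$. For a bad pair in the sub-regime $b_i<2a_p\cos\theta_{p,i}$, a direct law-of-cosines calculation shows $c_i$ strictly helps $p$ by at least $\Omega(z\eps_0^2) a_p^z$, delivering the desired bound on this subset. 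The complementary sub-regime (either $b_i\ge 2 a_p\cos\theta_{p,i}$ or $\cos\theta_{p,i}<-\eps_0$) is handled by a dual argument: there $\|p-c_i\|^z$ exceeds $(a_p^2+b_i^2)^{z/2}$ by a multiplicative $1+\Omega(z\eps_0^2)$-factor, so a large $a_p^z$-mass of such points would force $\text{LHS}-\text{RHS}$ to be too large to be consistent with the hypothesis. The main obstacle will be orchestrating this case analysis cleanly and tracking the precise polynomial-in-$\eps$ constants so as to land on the exact bound $\alpha\le \eps^{z+6}/(401408\cdot 2^{3z} z^{z+6})$ stated in the lemma.
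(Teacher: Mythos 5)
Your good-pair analysis coincides with the paper's: with $\eps_0 = \Theta(\eps/z)$, your conditions (i)--(ii) are the paper's Cond.~1 and your condition (iii) is Cond.~2, since $\|q_p - m\| = a_p\,|\cos\theta_{p,i}|$; the pointwise estimates then follow by the same manipulations. The genuine gap is in the bad-pair bound. You try to extract the bound on $\sum_{\text{bad}} a_p^z$ from the \emph{given} centers $c_1,\dots,c_k$ (plus $m$) via an aggregate-savings argument, but a bad pair need not give any savings against its own center: take $a_p=1$, $\cos\theta_{p,i}=\tfrac12$, $b_i=1-\delta$. This is a bad pair for any $\eps_0<1/2$, yet the savings $a_p^2-\|p-c_i\|^2 = b_i(2a_p\cos\theta_{p,i}-b_i)=(1-\delta)\delta$ vanishes as $\delta\to0$, while the sketch $(a_p^2+b_i^2)^{1/2}\approx\sqrt2$ overshoots $\|p-c_i\|\approx1$ by a constant factor. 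Points with $\cos\theta_{p,i}<-\eps_0$ are even worse (their own center actively hurts relative to $m$). The proposed ``dual argument'' cannot rescue these cases: the hypothesis is one-sided, upper-bounding the \emph{savings} $\cost(A,\{m\})-\cost(A,\calS)$ over $k$-center sets $\calS$, and places no constraint on the sketch exceeding the true cost (the sketch sum is not itself the $k$-clustering cost of any $\calS\subset\R^d$ on $A$). You also have a sign error in the sub-case $b_i\geq 2a_p\cos\theta_{p,i}$ with $\cos\theta_{p,i}>0$: there $\|p-c_i\|^2 = a_p^2+b_i^2-2a_p b_i\cos\theta_{p,i} < a_p^2+b_i^2$, so $\|p-c_i\|^z$ does \emph{not} exceed $(a_p^2+b_i^2)^{z/2}$.

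The paper avoids all of this by exploiting that the hypothesis quantifies over \emph{all} $k$-center sets. It constructs fresh auxiliary centers $\ell_i, r_i$ at $m\pm(c_i-m)\cdot\tfrac{\eps^2}{28z^2}$ and shows, using $\|q_p-m\|\geq\|\ell_i-m\|$ (a consequence of the two bad-pair inequalities), that every bad $p\in C_i$ saves at least $\tfrac{\eps^6}{12544 z^6}\|p-m\|^z$ by moving from $m$ to whichever of $\ell_i,r_i$ lies on $p$'s side of $m$. A Markov argument keeps the best $k$ of these $2k$ centers, and the hypothesis then yields $\sum_{\text{bad}}\|p-m\|^z \leq \alpha\cdot\tfrac{25088 z^6}{\eps^6}\cdot\cost(A,\{m\})$ --- note this is \emph{not} $O(\alpha)\cost(A,\{m\})$ as you claimed; the $\text{poly}(z/\eps)$ loss is real and is exactly why the lemma requires $\alpha\lesssim\eps^{z+6}/z^{z+6}$ rather than $\eps^{z+1}$. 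To fix your proof you would need to replace the aggregate-savings step with an explicit construction of a cheaper $k$-center solution along these lines.
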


In other terms, the guarantee we have on $M$ is that fot any partition $\calC$ and any centers $c_1, ..., c_k$, it holds that
$$\sum_{i=1}^k \sum_{p\in C_i} \left(\|p - m\|^2 + \|m-c_i\|^2\right)^{z/2} = (1\pm \eps)\sum_{i=1}^k \sum_{p\in C_i} \|p-c_i\|^z.$$
Before proving this lemma, we will show why it implies the theorem.

\begin{proof}[Proof of Theorem~\ref{thm:extension}]
The algorithm to construct the partition coreset is \cref{alg:partCoreset}. In the following, let $\beta$, and $\gamma$ be constants with a polynomial dependency on $\varepsilon^{z}$ that we specify later. 

\begin{algorithm}
\caption{PartitionCoreset($C, m$, depth)}
\label{alg:partCoreset}
\begin{algorithmic}[1]
\State Compute a set $\calS$ for $(k,z)$-clustering on $C$, with cost $(1+\beta)\cost(C, \opt_C)$ and with $|\calS| = O(k \log(1/\beta)/\beta)$, using the algorithm of \cref{thm:bicriteria}.
\If{$\cost(C,\{m\})-\cost(C,\calS) \leq \beta\cdot \cost(C,\{m\})$} 
\State Return a multiset $M$ with $|C|$ copies of $m$  as a partition coreset for $C$, with extension $M' = \left\{\zext{m}{\|p-m\|}, p \in C\right\}$
\ElsIf{depth = $\gamma$}
\State Return a multiset $M$ with $|C|$ copies of $m$ as a partition coreset for $C$, with extension $M' = \left\{\zext{m}{0}, p \in C\right\}$
\Else 
\State Let $C_1, ..., C_{|\calS|}$  be the clusters induced by $\calS$, with centers $c_1,..., c_{|\calS|}$
\State Return $\cup_{i = 1}^{|\calS|} $PartitionCoreset$(C_i, c_i, $ depth$+1)$.
\EndIf
\end{algorithmic}
\end{algorithm}

We first argue that the algorithm is correct, namely, PartitionCoreset$(P, 0, 0)$ computes a partition coreset. We will then show that it runs in the desired (deterministic) time.

For the correctness, we proceed as follows. Let $(C_1, m_1, d_1), (C_2, m_2, d_2), ...$ be the set of calls to PartitionCoreset that end in line 3. Let $M_i$ be the set computed for the call with parameters $(C_i, m_i, d_i)$. We show that for any $i$, $M_i$ is a partition coreset for $C_i$.

For this, we can use Lemma~\ref{lem:key}. Let us drop the subscript $i$ and fix $C, m$ to be some $C_i, m_i$, and let $\beta = \frac{1}{2}\cdot \frac{\varepsilon^{z+6}}{401408\cdot 2^{3z}\cdot z^{z+6}}$ (i.e.: equal to $\alpha/2$ with $\alpha$ being specified in Lemma~\ref{lem:key}). Let $\calS$ be the solution computed line 1 of the algorithm for $C$: for any set of $k$ centers $K$, it holds that $\cost(C, \calS) \leq (1+\beta) \cost(C, K)$. Therefore, we have:
\begin{eqnarray*}
\cost(C,\{m\}) - \cost(C,K) &\leq & \cost(C,\{m\}) - \frac{1}{1+\beta}\cdot \cost(C,\calS) \\
&= &\frac{1}{1+\beta}\left(\cost(C,\{m\}) - \cost(C,\calS)\right) + \frac{\beta}{1+\beta}\cdot \cost(C,\{m\}) \\
&\leq &\frac{\beta}{1+\beta}\cdot \cost(C,\{m\}) + \frac{\beta}{1+\beta}\cdot \cost(C,\{m\}) \\
\implies  \cost(C,\{m\}) - \cost(C,K) &\leq & 2\beta\cdot \cost(C,\{m\}).
\end{eqnarray*}
Using Lemma~\ref{lem:key}, this implies that $M$ with its extension is indeed an $\eps$- extension partition coreset for the points in $C$.

We now deal with the other termination criteria, namely when we reach depth $\gamma$. For a level $\ell$, we define $(C^\ell_1, c^\ell_1, \ell), (C^\ell_2, c^\ell_2, \ell), ...$ be the set of calls to PartitionCoreset with level$= \ell$, and let $\Gamma = \cup C^{\gamma}_i$, $\calS_\Gamma = \cup c^\gamma_i$.
One can easily show that $\sum_i \cost(C^{\ell}_i, c^\ell_i) \leq \frac{1}{1+\beta} \sum_i \cost(C^{\ell-1}_i, c^{\ell-1 }_i)$, as if the cost does not decrease by $\frac{1}{1+\beta}$ then the recursion stops on line 3. Hence, the cost of clustering $\Gamma$ to $\calS_\Gamma$ is
$\cost(\Gamma, \calS_\Gamma) \leq  z^{O(z)}\cdot \left(\frac{1}{1+\beta}\right)^{\gamma} \opt$, 
where $\opt$ is the optimal cost for $(k, z)$-clustering on the full dataset. 
We define $\gamma$ such that  $2^{O(z)}\cdot \left(\frac{1}{1+\beta}\right)^{\gamma} \le \left(\frac{\varepsilon}{8z}\right)^{z}$: this holds when
$\gamma = \Omega\left(\frac{z\log \left(\frac{8z^2}{\varepsilon}\right)}{\log(1+\beta)}\right)$, and so $ \gamma \in O(\varepsilon^{-O(z)})$ as desired. 
For each point $p\in \Gamma$, let $c_p$ be its center in the solution $\calS_\Gamma$.
Now we consider a set of at most $k$ centers $K$, and denote $K_p$ the center to which $p\in \Gamma$ is assigned. We have
\begin{eqnarray*}
& &\left\vert\sum_{p\in \Gamma} \cost(p, K) - \cost(c_p, K)\right\vert \\
(Lem.~\ref{lem:weaktri}) &\leq & \sum_{p\in \Gamma} \varepsilon/2\cdot \cost(p, K) + \left(\frac{4z+\varepsilon}{\varepsilon}\right)^{z-1} \cdot \cost(p, c_p)\\
(\text{Cost of }\Gamma ) &\leq &  \varepsilon/2\cdot \cost(\Gamma, K) + \left(\frac{4z+\varepsilon}{\varepsilon}\right)^{z-1} \cdot  \left(\frac{\varepsilon}{8z}\right)^{z} \cdot \opt\\
 &\leq & \varepsilon/2 \cost(A, K) + \varepsilon/2 \cdot \opt \leq \varepsilon\cdot  \cost(A,K),
\end{eqnarray*}
Therefore, the set $M_\Gamma$ with its extension $M'_\Gamma = \left\{\zext{c_p}{0}, p \in \Gamma\right\}$ is indeed a partition coreset for $\Gamma$.  

By linearity of the function $\cost$, this implies that partitionCoreset$(P, 0, 0)$ indeed compute an $(\eps, z, k)$-extension partition coreset of $P$: $M_i$ is an $(\eps, z, k)$-extension partition coreset for $C_i$, $M_\gamma$ is one for $\Gamma$, and $P$ is the disjoint union of $\Gamma$ and the $C_i$.

The size of that coreset (i.e., the number of distinct points when removing the last coordinate) is the number of recursion call in the algorithm, which is at most $k^{\varepsilon^{-O(z)}}$ by an easy induction: when the level is less than $\gamma$, a set $C$ is divided into $O(k\cdot \log(1/\beta)/\beta)  = k\cdot (z/\varepsilon)^{z+O(1)}$ clusters, and there are $\gamma \in O\left(\eps^{-O(z)}\right)$ many levels.

We now consider the running time.
The algorithm of \cref{thm:bicriteria} runs in time $k^{\log \log k + \log(1/\eps)} \cdot \poly(nd)$.
It is used once per cluster of each level, hence
as above at most $k^{\eps^{-O(z)}}$ many times before terminating, and therefore a total complexity of $k^{\log \log k + \eps^{-O(z)}} \cdot \poly(nd)$.

To make explicit the polynomial in $\eps$ in the particular case where $z= 1$ (resp. $z=2$), note that $\beta$ is essentially $\eps^{z+6}$, and therefore $\gamma \approx 1/\log(1+\beta) \approx \eps^{-z-6}$. The number of recursion call and size of the coreset is therefore $k^{\eps^{-z-6}}$. 
\end{proof}

The remaining part of this section will now be devoted to proving Lemma~\ref{lem:key}.

\subsection{Proof of the Structural Lemma~\ref{lem:key}}
We first require a bit of notation. Fix a partition into $k$ parts $\calC = \{C_1,..., C_k\}$ and let $\{c_1,\ldots c_k\}$ be the centers.

For every point $p \in C_i$, we denote by $q_p$ the projection of $p$ onto the line through $m$ and $c_i$. Let $Q_i$ be the multiset of projections of the points in $C_i$. By the Pythagorean theorem, we have for any $p\in C_i$:
$$ \|p-c_i\| = \sqrt{\|p-q_p\|^2 + \|q_p-c_i\|^2}.$$
    
Our goal is to show that for ``most'' of the points $p\in A$ one of the following conditions hold.
\begin{description}
    \item[Cond. 1] $\|p-m\|\geq \left(\frac{4z}{\varepsilon}\right)\cdot \|m-c_i\|$ or $\|p-m\|\leq \left(\frac{\varepsilon}{4z}\right)\cdot \|m-c_i\|$ (i.e. the conditions of  Lemma~\ref{lem:tighterbound} below are met) or
    \item[Cond. 2] $\|q_p-m\| \leq \varepsilon/(7z)\cdot \|p-m\|$ (i.e. the conditions of Lemma~\ref{lem:projbound} below are met).
\end{description}
The following two lemmas show that if we can assume that one of these two cases hold, then the points in $M'$ corresponding to the point $p$ is a good proxy for $\|p-c_i\|^z$. We will deal afterwards in \cref{lem:badcost} with points satisfying neither of the conditions.

\begin{lemma}
\label{lem:tighterbound}
If either $\|p-m\|\geq \left(\frac{4z}{\varepsilon}\right)\cdot \|m-c_i\|$ or $\|p-m\|\leq \left(\frac{\varepsilon}{4z}\right)\cdot \|m-c_i\|$ then
$$\left\vert \left(\|p-m\|^2 + \|m-c_i\|^2\right)^{z/2} - \|p-c_i\|^z \right\vert \leq \varepsilon\cdot \left(\|p-m\|^z + \|p-c_i\|^z\right).$$
\end{lemma}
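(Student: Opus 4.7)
The plan is to observe that both hypotheses are symmetric instances of the following statement: one of $\|p-m\|,\|m-c_i\|$ is much larger than the other. So I would set $a:=\|p-m\|$, $b:=\|m-c_i\|$, $L:=\max(a,b)$, $S:=\min(a,b)$, and rewrite the hypothesis uniformly as $S/L \le \varepsilon/(4z)$. The intuition is that both $(a^2+b^2)^{z/2}$ and $\|p-c_i\|^z$ are ``close to'' $L^z$ at a scale that is absorbed by the $\varepsilon$ term on the right-hand side, and so proving the lemma reduces to carefully tracking a handful of scalar inequalities.

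For the upper bound on the LHS, I would proceed in two steps. First, by the triangle inequality, $\|p-c_i\| \in [L-S, L+S]$, so $\|p-c_i\|^z = L^z (1+\delta)^z$ for some $\delta \in [-S/L, S/L]$. Setting $u := S/L \le \varepsilon/(4z)$, I would apply Bernoulli and the exponential bound $(1+u)^z \le e^{zu} \le e^{\varepsilon/4}$ together with $(1-u)^z \ge 1-zu \ge 1-\varepsilon/4$ to conclude $\|p-c_i\|^z \in [L^z(1-\varepsilon/4),\, L^z(1+\varepsilon/2)]$ (for $\varepsilon \le 1$, using $e^{\varepsilon/4} \le 1+\varepsilon/2$). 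Second, $(a^2+b^2)^{z/2} = L^z(1+u^2)^{z/2}$ with $u^2 \le \varepsilon^2/(16z^2)$; applying $(1+x)^{z/2} \le e^{(z/2)x}$ with $x=u^2$ yields $(a^2+b^2)^{z/2} \in [L^z,\, L^z(1+\varepsilon^2/(16z))]$. Taking the difference of the two ranges gives
\[
\left|\,(a^2+b^2)^{z/2} - \|p-c_i\|^z\,\right| \;\le\; L^z\left(\tfrac{\varepsilon}{2} + \tfrac{\varepsilon^2}{16z}\right).
\]

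For the lower bound on the RHS, I would simply drop $\|p-m\|^z$ (which is already positive) and use $\|p-c_i\|^z \ge L^z(1-\varepsilon/4)$ from the previous step to obtain
\[
\varepsilon \cdot (\|p-m\|^z+\|p-c_i\|^z) \;\ge\; \varepsilon L^z (1-\varepsilon/4).
\]
The lemma then follows from the elementary inequality $\varepsilon/2 + \varepsilon^2/(16z) \le \varepsilon(1-\varepsilon/4)$ valid for all $z\ge 1$ and $\varepsilon \le 1$ (as a sanity check: this reduces to $\varepsilon/(16z)+\varepsilon/4 \le 1/2$, which holds with large margin).

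The main (minor) obstacle is keeping the constants honest: one must ensure that the constant in front of $\varepsilon L^z$ on the LHS is strictly below $1$ so that the $\varepsilon$-factor on the RHS absorbs it. This requires using the sharper bound $(1-u)^z \ge 1 - zu$ (rather than an exponential estimate, which would be lossier) on the lower tail, while the exponential bound suffices for the upper tail; choosing the constant $4z$ in the hypothesis rather than, say, $2z$, gives enough slack for the argument to go through uniformly in $z$. Everything else is purely algebraic, with no geometry beyond the single use of the triangle inequality.
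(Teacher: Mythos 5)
Your proof is correct and follows essentially the same route as the paper's: both set $L=\max(\|p-m\|,\|m-c_i\|)$ and $S=\min(\cdot,\cdot)$, show via the triangle inequality and standard scalar estimates (Bernoulli, $e^x$-bounds) that $\|p-c_i\|^z$ and $(\|p-m\|^2+\|m-c_i\|^2)^{z/2}$ are each within a small multiplicative factor of $L^z$, and then compare. You are slightly more careful with the constants: you exploit the quadratic gain $(L^2+S^2)^{z/2}\le L^z(1+\varepsilon^2/(16z))$ rather than the paper's looser $(1+\varepsilon/2)$, and you explicitly verify that $\varepsilon L^z(1-\varepsilon/4)\le \varepsilon(\|p-m\|^z+\|p-c_i\|^z)$ absorbs the error, whereas the paper's final line tacitly invokes $L^z\le\|p-m\|^z+\|p-c_i\|^z$, which in the case $L=\|m-c_i\|$ only holds up to a $(1-\varepsilon/2)^{-1}$ factor; this is a presentational tightening on your part, not a genuinely different argument.
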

\begin{proof}
Condition 1 essentially says that $m$ is either really close to $p$ or to $c_i$. In the first case, $\|p-m\|^2$ is negligible, and $\|p-c_i\| \approx \|m-c_i\|$. In the second case, $\|m-c_i\|$ negligible and $\|p-c_i\| \approx \|p-m\|$.  We formalize this argument below.

We first show that Condition 1 implies $\|p-c_i\|^z = (1 \pm \eps / 2)\max(\|p-m\|^z,\|m-c_i\|^z)$. Indeed, we have:
\begin{eqnarray*}
\max(\|p-m\|,\|m-c_i\|) - \min(\|p-m\|,\|m-c_i\|) \geq (1-\varepsilon/(4z))\cdot \max(\|p-m\|,\|m-c_i\|) \\
\max(\|p-m\|,\|m-c_i\|) + \min(\|p-m\|,\|m-c_i\|) \leq (1+\varepsilon/(4z))\cdot \max(\|p-m\|,\|m-c_i\|)
\end{eqnarray*}
Using the Mercator series, it holds that 
$(1 + \varepsilon/(4z))^z = e^{z \cdot \ln(1 + \varepsilon/(4z))} \leq  e^{\varepsilon/4} \leq e^{\ln (1+\varepsilon/2)} = 1+\varepsilon/2$: therefore,
\begin{eqnarray*}
\|p-c_i\|^z  &\leq & (\|p-m\| + \|m-c_i\|)^z\\
& \leq & (1+\varepsilon/2)\cdot \max(\|p-m\|^z,\|m-c_i\|^z)
\end{eqnarray*}
Using Bernoulli's inequality, $(1 - \varepsilon/(4z))^z \geq 1-\varepsilon/4$: therefore,
\begin{eqnarray*}
\|p-c_i\|^z  &\geq & \lpar \max(\|p-m\|,\|m-c_i\|) - \min(\|p-m\|,\|m-c_i\|)  \rpar^z \\
& \geq & (1 - \varepsilon/(4z))^z\cdot \max(\|p-m\|^z,\|m-c_i\|^z) \\
& \geq & (1-\varepsilon/2)\cdot \max(\|p-m\|^z,\|m-c_i\|^z).  
\end{eqnarray*}

We bound similarly the other term of the expression:
\begin{eqnarray*}
    \left(\|p-m\|^2 + \|m-c_i\|^2\right)^{z/2} & \leq & \left( (1\pm \varepsilon/(4z))\cdot \max(\|p-m\|^2,\|m-c_i\|^2)\right)^{z/2}\\
    & \leq & (1\pm \varepsilon/2) \cdot \max(\|p-m\|^z,\|m-c_i\|^z)
\end{eqnarray*}

Combining those expressions, we obtain\begin{eqnarray*}
& & \left\vert \left(\|p-m\|^2 + \|m-c_i\|^2\right)^{z/2} - \|p-c_i\|^z \right\vert   \\
&= & \big\vert (1\pm \varepsilon/2) \cdot \max(\|p-m\|^z,\|m-c_i\|^z) - (1\pm \varepsilon/2)\cdot \max(\|p-m\|^z,\|m-c_i\|^z) \big\vert \\
&\leq & \varepsilon \cdot (\|p-m\|^z + \|p-c_i\|^z)  
\end{eqnarray*}
\end{proof}

\begin{lemma}
\label{lem:projbound}
Let $p,m,c_i\in \mathbb{R}^d$. Denote by $q_p$ the orthogonal projection of $p$ onto the line through $m$ and $c_i$. Then if for $0\leq \varepsilon\leq 1/2$ and  $z\geq 1$
$$\|m-q_p\|\leq \varepsilon/(7z)\cdot \|p-m\|,$$ 
we have:
$$\left\vert \left(\|p-m\|^2 + \|m-c_i\|^2\right)^{z/2} - \|p-c_i\|^z \right\vert \leq \varepsilon\cdot \|p-c_i\|^z.$$
\end{lemma}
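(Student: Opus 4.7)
The plan is to reduce the claim to a purely algebraic inequality between $D := \|p-m\|^2 + \|m-c_i\|^2$ and $E := \|p-c_i\|^2$, and then take $(z/2)$-th powers. I would start by parametrizing $q_p = m + t(c_i - m)$ for some $t \in \mathbb{R}$, so that $\|q_p - m\| = |t| \cdot \|m - c_i\|$ and
\[
\|q_p - c_i\|^2 = \|q_p - m\|^2 + \|m - c_i\|^2 - 2\,\mathrm{sign}(t)\, \|q_p - m\| \cdot \|m - c_i\|,
\]
regardless of whether $q_p$ lies between $m$ and $c_i$ or outside that segment. Substituting this into $E = \|p - q_p\|^2 + \|q_p - c_i\|^2$ and using the Pythagorean identity $\|p - q_p\|^2 = \|p-m\|^2 - \|q_p - m\|^2$ (which, like the previous one, follows from $q_p$ being the orthogonal projection of $p$ onto the line through $m$ and $c_i$), the $\|q_p - m\|^2$ terms cancel and I obtain the identity $E = D - 2\,\mathrm{sign}(t)\, \|q_p - m\| \cdot \|m - c_i\|$. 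In particular, $|D - E| \leq 2\|q_p - m\| \cdot \|m - c_i\|$.

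Next, I would plug in the hypothesis $\|q_p - m\| \leq \varepsilon/(7z) \cdot \|p-m\|$ and apply AM-GM: since $2\|p - m\| \cdot \|m - c_i\| \leq \|p - m\|^2 + \|m - c_i\|^2 = D$, this yields $|D - E| \leq (\varepsilon/(7z)) \cdot D$, and therefore $D/E$ lies in $[1/(1 + \varepsilon/(7z)),\, 1/(1 - \varepsilon/(7z))]$. Raising to the power $z/2$ and using the elementary estimates $\ln(1+x) \leq x$ and $\ln(1-x) \geq -2x$ for $0 \leq x \leq 1/2$ (valid because $\varepsilon \leq 1/2$ and $z \geq 1$), a direct computation shows that $(D/E)^{z/2} \in [e^{-\varepsilon/7}, e^{\varepsilon/7}] \subseteq [1 - \varepsilon, 1 + \varepsilon]$. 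Multiplying by $E^{z/2} = \|p - c_i\|^z$ then gives exactly the claimed inequality.

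The only subtlety is bookkeeping: one needs to track the sign of $t$ (equivalently, the order of $q_p, m, c_i$ on the line) so that the cancellation in the first step goes through uniformly across all configurations, and the constant $7$ in the hypothesis is calibrated precisely so that after the two $z/2$-th power steps, the resulting distortion on $D^{z/2}/E^{z/2}$ is bounded by $\varepsilon$ rather than a larger multiple of it.
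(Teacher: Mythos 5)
Your proof is correct, and it takes a genuinely cleaner route than the paper's. The paper also starts from the Pythagorean decomposition, but then bounds the difference $\|m-c_i\|^2 - \|q_p - c_i\|^2$ via the weak triangle inequality for powers (Lemma~\ref{lem:weaktri}) and separately needs a lower bound on $\|p-c_i\|^2$ in terms of $\|p-m\|^2$ to control the resulting error term, before finally raising to the $z/2$-th power. You instead observe that $D - E = 2\,\mathrm{sign}(t)\,\|q_p - m\|\,\|m - c_i\| = 2\langle p - m,\, c_i - m\rangle$ is an \emph{exact} identity (the law of cosines, with the inner product computed via the projection), so the quantity to be controlled is literally the cross term, and a single application of AM--GM gives $|D - E| \le (\varepsilon/(7z))\,D$. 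This removes both the auxiliary weak-triangle-inequality lemma and the separate lower bound (Eq.~\ref{eq:projbound1}), and the sign/order bookkeeping you flag is handled automatically since the identity holds for all $t$. The final $z/2$-th-power step matches the paper's in spirit (you use $\ln(1+x)\le x$ and $\ln(1-x)\ge -2x$; the paper invokes the Mercator series and Bernoulli). Your argument is a small but real simplification of the paper's proof, trading two approximation lemmas for one exact identity plus AM--GM, at no cost to generality.
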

\begin{proof}
This theorem states that when the projection of $p$ on the line between $m$ and $c_i$ is really close to $m$, then Pythagorean theorem almost hold between $p, m$ and $c_i$.  
To show this, we first require a lower bound on $\|p-c_i\|$ in terms of $\|p-m\|$. We have, using the assumption on $\|m-q_p\|$:
\begin{eqnarray}
\nonumber
\|p-c_i\|^2 &\geq & \|p-q_p\|^2 \geq \|p-m\|^2 - \|q_p-m\|^2 \\
\label{eq:projbound1}
&\geq & \|p-m\|^2 \cdot (1-\varepsilon^2/(49z^2))
\end{eqnarray}
Using  the Pythagorean theorem, we can relate $\|p-m\|^2$ to $\|p-c_i\|$ as follows: $\|p-m\|^2 = \|p-q_p\|^2 + \|q_p-m\|^2 + \|q_p-c_i\|^2 - \|q_p-c_i\|^2 = \|p-c_i\|^2 + \|q_p-m\|^2 - \|q_p-c_i\|^2$

Combined with Lemma~\ref{lem:weaktri}, we get the following: 
\begin{eqnarray*}
\|p-m\|^2 + \|m-c_i\|^2& = & \|p-c_i\|^2 + \|q_p-m\|^2 - \|q_p-c_i\|^2 + \|m-c_i\|^2 \\
&=& \|p-c_i\|^2 + \|q_p-m\|^2 \pm \left(\frac{\varepsilon}{4z}\cdot \|q_p-c_i\|^2 + \left(1+\frac{4z}{\varepsilon}\right) \|q_p-m\|^2\right)  \\
&=&\|p-c_i\|^2  \pm \left(\frac{\varepsilon}{4z}\cdot \|q_p-c_i\|^2 + \left(2+\frac{4z}{\varepsilon}\right) \|q_p-m\|^2\right)
\end{eqnarray*}
Now, using that $\|q_p - c_i\| \leq \|p-c_i\|$ (since a projection decreases distances) and the assumption on $\|m-q_p\|$, we get:  
\begin{eqnarray*}
\|p-m\|^2 + \|m-c_i\|^2&=& (1 \pm \varepsilon/(4z))\|p-c_i\|^2  \pm \left(\frac{2\varepsilon + 4z}{\varepsilon}\right) \cdot \frac{\varepsilon^2}{49z^2}\cdot \|p-m\|^2 \\
(Eq.~\ref{eq:projbound1}) & = & (1 \pm \varepsilon/(4z))\|p-c_i\|^2 \pm \left(\frac{2\varepsilon + 4z}{\varepsilon}\right) \cdot \frac{\varepsilon^2}{49z^2}\cdot \frac{1}{1-\frac{\varepsilon^2}{49z^2}} \cdot \|p-c_i\|^2 \\
(\varepsilon<\frac{1}{2}) &=& (1 \pm \varepsilon/(2z))\|p-c_i\|^2
\end{eqnarray*}
We can conclude, using again the Mercator series and Bernoulli's inequality:
\begin{eqnarray*}
\left\vert \left(\|p-m\|^2 + \|m-c_i\|^2\right)^{z/2} - \|p-c_i\|^z \right\vert&=& \left\vert (1 \pm \varepsilon/(2z))^z \cdot \|p-c_i\|^z  - \|p-c_i\|^z \right\vert \\
 &=& \left\vert (1 \pm \varepsilon)\|p-c_i\|^z  - \|p-c_i\|^z \right\vert \leq \varepsilon \cdot \|p-c_i\|^z
 \end{eqnarray*}
\end{proof}

We now bound the cost of the remaining points.
Let $D$ be the set of points that satisfy neither conditions, i.e. for every point $p\in D$ we have
\begin{equation}
\label{eq:key1}
\left(\frac{\varepsilon}{4z}\right)\cdot \|m-c_i\|\leq \|p-m\|\leq \left(\frac{4z}{\varepsilon}\right)\cdot \|m-c_i\|
\end{equation}
and
\begin{equation}
\label{eq:key2}
  \|q_p-m\| \geq \varepsilon/(7z)\cdot \|p-m\|.  
\end{equation}

Those two conditions allow us to relate the cost of clustering the points to $m$ and the cost of solution $\calS$. Using the additional property of $\calS$, we can show that the contribution of the points in $D$ to the clustering cost of $m$ is small.

\begin{lemma}
\label{lem:badcost}
Let $m$ be a point such that there exists no solution $\calS$ with $k$ centers and $\cost(A,\{m\}) - \cost(A,\calS) \leq \alpha \cdot \cost(A,\{m\})$.

Suppose $\calC = \{C_1,..., C_k\}$ is a partition with corresponding centers $\{c_1,...,c_k\}$ and let $D$ be  the set of points that satisfy neither condition (1) nor condition (2). It must be that
$$\sum_{p\in D}\|p-m\|^z < \alpha\cdot \frac{25088z^6}{\varepsilon^6}\cdot \sum_{p\in A} \|p-m\|^z.$$
\end{lemma}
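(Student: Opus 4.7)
The plan is to prove the contrapositive: assuming $\sum_{p\in D}\|p-m\|^z$ violates the stated upper bound, I would build a family of $k$-center candidate solutions and use the hypothesis on each to derive a contradiction via a charging argument. The geometric engine is as follows. Fix $p\in D\cap C_i$ and let $q_p$ be its orthogonal projection onto the line through $m$ and $c_i$, with $t_p:=\|q_p-m\|$ signed along $c_i-m$. Equations~(\ref{eq:key1}) and~(\ref{eq:key2}) force $|t_p|\geq (\varepsilon/(7z))\|p-m\|$, while trivially $|t_p|\leq \|p-m\|$, so by the Pythagorean identity $\|p-q_p\|^2\leq (1-\varepsilon^2/(49z^2))\|p-m\|^2$. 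A center placed at (or very close to) $q_p$ therefore improves the $z$-cost at $p$ by at least $\mathrm{poly}(\varepsilon/z)\cdot\|p-m\|^z$, using $1-(1-x)^{z/2}\gtrsim xz/4$ in the relevant regime. If a distinct center per point of $D$ were available we would be done immediately; the task is to emulate this with only $k$ centers while staying compatible with the hypothesis.

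To this end I would discretize the admissible positions of $q_p$ on each of the $k$ lines: Conditions~(\ref{eq:key1}) and~(\ref{eq:key2}) together imply $|t_p|/\|m-c_i\|\in[\varepsilon^2/(28z^2),\,4z/\varepsilon]$, so $L=O(\log(z/\varepsilon))$ dyadic scales on either side of $m$ suffice. For every scale-sign label $(j,s)$ and every ``dropped-cluster'' index $i^\star\in\{1,\ldots,k\}$, form the $k$-center solution $\calS^{(j,s,i^\star)}=\{m\}\cup\{m+s\cdot 2^j\cdot (c_i-m):i\neq i^\star\}$. Because $m$ belongs to every such solution, the per-point improvement $\|p-m\|^z-\operatorname{dist}(p,\calS^{(j,s,i^\star)})^z$ is automatically nonnegative for every $p\in A$, and the hypothesis then yields the clean bound $\cost(A,\{m\})-\cost(A,\calS^{(j,s,i^\star)})<\alpha\cost(A,\{m\})$ for each of the $2Lk$ candidates, with no risk of that bound being vacuous.

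To close the loop, I would observe that each $p\in D\cap C_i$ is ``helped'' by every candidate whose label matches $(\operatorname{sign}(t_p),\,|t_p|/\|m-c_i\|)$ up to a factor of two and whose dropped cluster $i^\star$ differs from $i$: on such a candidate the cluster-$i$ center sits within a constant factor of $q_p$ along the line, so the Pythagorean savings survive up to constants and contribute at least $\mathrm{poly}(\varepsilon/z)\cdot\|p-m\|^z$ to that candidate's improvement. Summing the hypothesis' upper bounds over all $2Lk$ candidates and the per-point lower bounds over all $p\in D$ yields an inequality of the form $\sum_{p\in D}\|p-m\|^z \leq \alpha\cdot C(z,\varepsilon)\cdot \cost(A,\{m\})$, where $C(z,\varepsilon)$ absorbs the $2Lk/(k-1)$ counting factor, the scale/sign discretization overhead, and the polynomial conversion between squared-distance and $z$-cost savings; this is the source of the constant $25088\, z^6/\varepsilon^6$ claimed in the lemma. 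The main obstacle I expect is precisely this accounting: without inserting $m$ into every candidate, a mismatched scale would push many points of $A\setminus D$ farther from the candidate than from $m$, so the hypothesis' bound on the improvement would be vacuous for that candidate and the charging would collapse. The fallback to $m$ plus the ``drop one cluster'' trick to fit inside $k$ centers are what keep every term in the summation nonnegative and let the pigeonhole go through.
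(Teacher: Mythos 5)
Your proposal is correct and belongs to the same family of arguments as the paper's proof: both build candidate $k$-center solutions out of auxiliary points on the lines $mc_i$ near $m$, extract a Pythagorean improvement for each $p\in D$, and feed this into the no-improvement hypothesis. The details, however, differ in an instructive way. The paper does not discretize by scale at all: it places \emph{all} $2k$ auxiliary centers at the single smallest admissible radius $\tfrac{\varepsilon^2}{28z^2}\|m-c_i\|$, observing via \cref{eq:key1} and \cref{eq:key2} that this radius is a lower bound on $\|q_p-m\|$ for every $p\in D\cap C_i$, so the center always lies between $m$ and $q_p$ on the line and the Pythagorean saving $\|p-m\|^z-\|p-b_i\|^z\geq \tfrac{\varepsilon^6}{12544z^6}\|p-m\|^z$ holds uniformly. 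It then picks the $k$ best of the $2k$ auxiliary centers and invokes the hypothesis once. Your approach instead matches the scale to each point via $O(\log(z/\varepsilon))$ dyadic shells and a drop-one-cluster enumeration, and averages the hypothesis over the resulting $2Lk$ candidates. This buys a tighter per-point saving (roughly $\varepsilon^2/z$ rather than $\varepsilon^6/z^6$ at the $z$-cost level) at the price of a $\log(z/\varepsilon)$ overhead from the dyadic layering, so your route actually proves a somewhat \emph{stronger} inequality than stated, whereas the paper's single-scale trick is shorter but lossier.

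One observation in your writeup is genuinely valuable: you insist on placing $m$ inside every candidate solution so that $\cost(A,\{m\})-\cost(A,\calS^{(j,s,i^\star)})\geq 0$ pointwise and the hypothesis is never vacuous. The paper's writeup elides this: after picking the $k$ best centers from $B$ it treats $\sum_{p\in D_T}(\|p-m\|^z-\|p-b_i\|^z)$ as a lower bound on $\cost(A,\{m\})-\cost(A,T)$, but without $m\in T$ the points outside $D_T$ can pay strictly more than $\|p-m\|^z$ and the displayed inequality does not follow. The fix is exactly the one you use -- keep $m$ and pick $k-1$ auxiliary centers -- and it only degrades the paper's constant by another factor of roughly two; but it is worth flagging that your proposal explicitly handles this while the paper's proof as written has the gap.
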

\begin{proof}
The high level idea of the proof is to show the existence of a set of $k$ centers $T$ that decreases the cost of $D$ by a significant factor. This implies that the cost of $D$ must be small compared to the overall cost.

Consider a center $c_i$ and let $\ell_i = m+(m-c_i)\cdot \frac{\varepsilon^2}{28z^2}$ and $r_i=m+(m-c_i)\cdot \frac{\varepsilon^2}{28z^2}$. Let $B$ be union of all $\ell_i$ and $r_i$.
We assign each point $p\in D\cap C_i$ to its closest center $b_i\in \{\ell_i,r_i\}$. 
Note that due to Equations~\ref{eq:key1} and~\ref{eq:key2} $\|q_p-m\| \geq \frac{\varepsilon}{7z}\cdot \|p-m\| \geq \frac{\varepsilon^2}{28z^z}\cdot \|m-c_i\| = \|b_i-m\|$.
Using the Pythagorean theorem, we then have
\begin{eqnarray}
\nonumber
\|p-b_i\|^z &=& \left(\|p-q_p\|^2 + \|q_p-b_i\|^2\right)^{z/2} \\
\nonumber
&\leq & \left(\|p-q_p\|^2 + \|q_p-m\|^2 - \|b_i-m\|^2\right)^{z/2} \\
\nonumber
& = & \left(\|p-m\|^2 - \left(\frac{\varepsilon^2}{28z^2}\right)^2\cdot \|m-c_i\|^2\right)^{z/2} \\
\nonumber
(Eq.~\ref{eq:key1}) & \leq & \left(\|p-m\|^2 - \frac{\varepsilon^6}{12544z^6}\cdot \|p-m\|^2\right)^{z/2} \\
 \nonumber
& = & \|p-m\|^z\cdot (1- \frac{\varepsilon^6}{12544z^6})
\end{eqnarray}

Summing this over all points therefore leads to a cost decrease of at least
\begin{eqnarray*}
\sum_{p\in D} \|p-m\|^z - \sum_{p\in D} \|p-b_i\|^z \geq \frac{\varepsilon^6}{12544z^6}\cdot \sum_{p\in D} \|p-m\|^z.
\end{eqnarray*}

Since we use $2k$ centers, the average cost decrease per center is $\frac{1}{2k}\cdot\frac{\varepsilon^6}{800z^6}\cdot \sum_{p\in D} \|p-m\|^z$. We greedily pick the $k$ centers with maximum cost decrease. Denote this set $T$ and denote the set of points with cost decrease by $D_T$. By Markov's inequality, we therefore have 
\begin{eqnarray*}
\sum_{p\in D_T} \|p-m\|^z - \sum_{p\in D_T} \|p-b_i\|^z \geq \frac{\varepsilon^6}{25088z^6}\cdot \sum_{p\in D} \|p-m\|^z.
\end{eqnarray*}

Since we assumed that no clustering exists decreasing the cost by more than $\alpha\cdot \sum_{p\in A}\|p-m\|^z$, this implies

\begin{eqnarray*}
\frac{\varepsilon^6}{25088z^6}\cdot \sum_{p\in D} \|p-m\|^z \leq \alpha\cdot \sum_{p\in A}\|p-m\|^z \\
\Rightarrow \sum_{p\in D} \|p-m\|^z \leq \alpha \cdot \frac{25088z^6}{\varepsilon^6}\cdot \sum_{p\in A}\|p-m\|^z.
\end{eqnarray*}
\end{proof}

\begin{remark}
The dependency on $\varepsilon$ can be improved to at least $\varepsilon^{-4}$, using a more involved analysis to determine the cost of a cheaper clustering $T$. Since the subsequent analysis incurs a dependency $\varepsilon^{-z}$, we chose the simpler proof in favour of optimizing lower order terms in the exponent of $\varepsilon$.
\end{remark}

We can now conclude the proof of Lemma~\ref{lem:key}. 

\begin{proof}[Proof of Lemma~\ref{lem:key}]
Recall that for $p \in A$, we defined $f(p) = \zext{m}{\|p-m\|}$, and for any center $c_i$, $c'_i = \zext{c_i}{0}$.
We have, using the Pythagorean theorem and triangle inequality:
\begin{eqnarray}
\nonumber
& &\left\vert \sum_{i=1}^k \sum_{p\in C_i} \cost(p, c_i) - \sum_{i=1}^k \sum_{p\in C_i} \cost(f(p), c'_i)\right\vert\\
\nonumber
& = &\left\vert \sum_{i=1}^k \sum_{p\in C_i}\|p-c_i\|^z - \left(\|m-p\|^2 + \|m-c_i\|^2\right)^{z/2} \right\vert \\
\label{eq:keygood}
& \leq &\left\vert \sum_{i=1}^k \sum_{p\in C_i\setminus D}\|p-c_i\|^z - \left(\|m-p\|^2 + \|m-c_i\|^2\right)^{z/2} \right\vert \\
\label{eq:keybad}
& &+ \left\vert \sum_{i=1}^k \sum_{p\in C_i\cap D} \|p-c_i\|^z - \left(\|m-p\|^2 + \|m-c_i\|^2\right)^{z/2} \right\vert.
\end{eqnarray}
The term in Equation~\ref{eq:keygood} can be bounded using Lemmas~\ref{lem:tighterbound} and~\ref{lem:projbound} by 
$$\sum_{i=1}^k \sum_{p\in C_i\setminus D} \varepsilon\cdot (\|p-c_i\|^z + \|p-m\|^z) \leq \varepsilon\cdot \left(\sum_{i=1}^k  \cost(C_i, \{c_i\}) + \sum_{i=1}^k \cost(C_i, \{m\})\right).$$
For the term in Equation~\ref{eq:keybad}, we use a looser bound and \cref{lem:badcost} as follows: 
\begin{eqnarray}
\nonumber
    & & \left\vert \sum_{i=1}^k \sum_{p\in C_i\cap D} \|p-c_i\|^z - \left(\|m-p\|^2 + \|m-c_i\|^2\right)^{z/2} \right\vert
\\
\nonumber
&\leq & 2^{z+1} \sum_{i=1}^k \sum_{p\in C_i\cap D} \max(\|p-c_i\|^z, \|m-c_i\|^z) \\
\nonumber
(Eq.~\ref{eq:key1})&\leq & 2^{z+1} \sum_{p\in D} \left(1+\left(\frac{4z}{\varepsilon}\right)^z\right)\cdot \|p-m\|^z  \\
\label{eq:keyfinal}
(Lem.~\ref{lem:badcost}) &\leq &  2^{z+2}\cdot \left(\frac{4z}{\varepsilon}\right)^z \cdot \alpha\cdot \frac{25088z^6}{\varepsilon^6} \cdot \sum_{p\in A}\|p-m\|^z
\end{eqnarray}
Combining with \cref{lem:tighterbound} and \cref{lem:projbound}, we then obtain: 
\begin{eqnarray*}
\nonumber
& &\left\vert \sum_{i=1}^k  \cost(C_i, \{c_i\}) - \sum_{i=1}^k  \cost(f(C_i), \{c'_i\})\right\vert \\
\nonumber
(Lem.~\ref{lem:tighterbound} \text{ and } \ref{lem:projbound})&\leq & \varepsilon\cdot \left(\sum_{i=1}^k  \cost(C_i, \{c_i\}) + \sum_{i=1}^k \cost(C_i, \{m\})\right) \\
(Eq.~\ref{eq:keyfinal}) & &+  2^{z+2}\cdot \left(\frac{4z}{\varepsilon}\right)^z \cdot \alpha\cdot \frac{25088z^6}{\varepsilon^6}\cdot \cost(A, \{m\})\\
(\text{no cost decrease})&\leq & 2\cdot\left(2\varepsilon  +  2^{z+2}\cdot \left(\frac{4z}{\varepsilon}\right)^z \cdot \alpha\cdot \frac{25088z^6}{\varepsilon^6}\right) \cdot \sum_{i=1}^k  \cost(C_i, \{c_i\})
\end{eqnarray*}

The overall result now follows by our choice of $\alpha$ and rescaling $\varepsilon$.
\end{proof}

\section{Derandomized Dimension Reduction}
\label{sec:dimred}

In this section we will use the deterministic construction of partition coresets to obtain a derandomized dimension reduction for $(k,z)$ clustering. The goal is to show that we can obtain a cost preserving sketch with dimension at most $O(\varepsilon^{-O(z)}\log k)$:

\begin{theorem}\label{thm:dimred}
Let $P \subset \R^d$. 
We can compute an $(k,z,\varepsilon)$-cost preserving sketch of $P$ with target dimension $O\left(\varepsilon^{-O(z)}\cdot \log k\right)$ in deterministic time $k^{\log \log k + \varepsilon^{-O(z)}} \cdot \poly(nd)$.
\end{theorem}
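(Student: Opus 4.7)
The plan is to combine the partition coreset of \cref{thm:extension} with the derandomized Johnson--Lindenstrauss embedding of \cref{thm:EIO}, via a witness-set argument that replaces the continuum of possible centers by a discrete set whose size is independent of $n$ and $d$. Concretely, I first apply \cref{thm:extension} with parameter $\varepsilon/4$ to obtain, in time $k^{\log\log k + \varepsilon^{-O(z)}}\poly(nd)$, an extension partition coreset $B$ of $P$ together with its extension $B'\subset\R^{d+1}$; the number of distinct points of $B$ (and hence of distinct first-$d$ coordinates in $B'$) is $N := k^{\varepsilon^{-O(z)}}$. By \cref{def:core} and \cref{def:costpresz} it suffices to exhibit a map $f:\R^{d+1}\to\R^{m+1}$ with $m = O(\varepsilon^{-O(z)}\log k)$ that is a $(k,z,\varepsilon/4)$-cost preserving sketch of $B'$ and preserves the $0$-extension hyperplane; composition with the coreset map will then give a $(1\pm\varepsilon)$-sketch of $P$.

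Second, I discretize the search for optimal $0$-extension centers. The witness-set argument sketched in \cref{sec:techniques} shows that for every $C\subseteq B'$ there is a subset $W\subseteq C$ of size $\tau\in O(\varepsilon^{-2})$ whose convex hull intersected with the $0$-extension hyperplane contains a point of cost at most $(1+\varepsilon/8)\cdot\cost^z(C,\{\mu_0^z(C)\})$, and that one can take $W$ to draw its first-$d$ coordinates from the $N$ distinct classes of $B'$. Enumerating over all such $\tau$-subsets and rounding the barycentric coefficients to a grid of resolution $\varepsilon/\poly(\tau,k)$ produces a finite set $\calT\subset\R^{d+1}$ of $0$-extension candidate centers with
\[ |\calT|\;\leq\;\binom{N}{\tau}\cdot(k/\varepsilon)^{O(\tau)}\;\leq\;k^{\varepsilon^{-O(z)}}. \]
By construction, for every $k$-partition $\calC=\{C_1,\ldots,C_k\}$ of $B'$ there exist $s_1,\ldots,s_k\in\calT$ with $\costP_0^z(B',\calC)\leq\sum_i\cost^z(C_i,\{s_i\})\leq(1+\varepsilon/8)\costP_0^z(B',\calC)$.

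Third, I derandomize using \cref{thm:EIO} on the set of difference vectors $V:=\{b-s:b\in B',\ s\in\calT\}$, of cardinality at most $N\cdot|\calT|\leq k^{\varepsilon^{-O(z)}}$. Choosing precision $\eta:=\varepsilon/(40z)$ and failure fraction $F<1/|V|$ yields a linear map $S:\R^d\to\R^m$ with $m=O(\eta^{-2}\log|V|)=O(\varepsilon^{-O(z)}\log k)$ satisfying $\|Sv\|=(1\pm\eta)\|v\|$ for every $v\in V$. Extending $S$ to $f:\R^{d+1}\to\R^{m+1}$ by acting as the identity on the extension coordinate preserves the $0$-extension hyperplane and gives $\|f(b)-f(s)\|^z=(1\pm\varepsilon/8)\|b-s\|^z$ for every $b\in B'$, $s\in\calT$.

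Finally, I verify cost preservation via a two-sided comparison. For the upper direction, any partition $\calC$ of $B'$ together with the witness-set centers $s_1,\ldots,s_k\in\calT$ from step two gives $\costP_0^z(f(B'),f(\calC))\leq\sum_i\cost^z(f(C_i),\{f(s_i)\})\leq(1+\varepsilon/4)\costP_0^z(B',\calC)$. For the lower direction, I re-apply the witness-set discretization to $f(\calC)$ inside $\R^{m+1}$: since $S$ is linear, every near-optimal $0$-extension center of $f(C_i)$ is the image $f(t_i)$ of some $t_i$ lying (up to an $\varepsilon/8$ slack) in $\calT$, and pulling back the EIO guarantee yields $\costP_0^z(B',\calC)\leq(1+\varepsilon/4)\costP_0^z(f(B'),f(\calC))$. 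Composing with the $(\varepsilon/4)$-guarantee of the partition coreset proves the theorem; the total running time is dominated by \cref{thm:extension}. The main obstacle is step two: formalizing the general-$z$ witness-set theorem for the $0$-extension $(1,z)$-centroid, and arranging that witness sets can be drawn from the $N$ distinct first-$d$ classes of $B'$ rather than from $|P|$ individual extensions -- this is precisely what makes $|\calT|$ independent of $n$ and $d$. Once this is in place the EIO derandomization is black-box and the sandwich argument is routine.
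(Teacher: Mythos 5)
Your high-level plan (partition coreset from \cref{thm:extension} $\to$ witness-set discretization $\to$ \cref{thm:EIO} $\to$ two-sided sandwich) is exactly the paper's route, and your upper-direction argument is sound. The problem is that you declare the lower direction of the sandwich ``routine,'' when in fact it hides the technically hard step, and your choice of vector set $V=\{b-s: b\in B',\ s\in\calT\}$ is too small to carry it. Concretely, in the lower direction you must take a near-optimal $0$-extension center $u_i$ for the projected cluster $f(C_i)$, which lives in $\conv(f(W_i))$ for a witness set $W_i\subseteq C_i$, pull it back to $t_i\in\conv(W_i)$, and round $t_i$ to a grid point $s_i'\in\calT$. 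The chain you need is $\cost^z(C_i,\{s_i'\})\approx\cost^z(f(C_i),\{f(s_i')\})\approx\cost^z(f(C_i),\{u_i\})$, and the second $\approx$ requires $\|\Pi(t_i-s_i')\|$ to be small. You know $\|t_i-s_i'\|$ is small (grid fineness), but $t_i-s_i'$ is a vector inside the subspace $\spa(W_i)$, and preserving only the distances from $B'$ to $\calT$ gives you \emph{no} control over how $\Pi$ acts on that subspace --- $\Pi$ could contract or dilate directions inside $\spa(W_i)$ arbitrarily while still preserving every $\|b-s\|$. The paper deals with this (in \cref{lem:halfPartitionDimRed}) by adding to the preserved set $\calN$ an orthonormal basis of $\spa(S)$ (together with the origin) for every candidate witness set $S$: preserving pairwise distances among these basis vectors bounds the Gram matrix and hence the operator norm of $\Pi$ on $\spa(S)$, which is exactly what makes $\|\Pi(t_i-s_i')\|$ controllable. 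The paper also needs pairwise distances \emph{within} the coreset to argue that the preimage $S$ of a projected witness set has the right diameter; your $V$ omits $B'$-to-$B'$ differences as well. Both omissions have to be repaired (e.g., by enlarging $V$ to include $B'\!-\!B'$, $\calT\!-\!\calT$, and the basis vectors, which still keeps $|V|\leq k^{\varepsilon^{-O(z)}}$) before the ``pull back the EIO guarantee'' sentence has any content.

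Separately, you correctly flag as an obstacle that witness sets must be drawn from the $N$ distinct first-$d$ classes of $B'$ rather than from the $|P|$ individual extended points, but you do not supply the argument. The paper's resolution is a short geometric observation: for any $S'\subset\R^{d+1}$, the slice $\conv(S')\cap\{x_{d+1}=0\}$ is contained in $\conv(S)$, where $S$ is the projection of $S'$ to the first $d$ coordinates, and $\diam(S)\le\diam(S')$; so a cover of $\conv(S)$ for the at-most-$N$ distinct $S\subseteq B$ already covers all the $0$-extension slices you care about. Without this step your bound $|\calT|\le k^{\varepsilon^{-O(z)}}$ is unjustified, and without the operator-norm control your sandwich does not close. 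Both pieces are present in the paper (\cref{lem:partitionDimRed} and \cref{lem:halfPartitionDimRed}) and are not cosmetic --- they are the proof.
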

In the particular $k$-median case ($z=1$), the precise target dimension is $\eps^{-11} \log k$, while it is $\eps^{-12} \log k$ for $k$-means. We did not make any particular attempt to optimize the dependency in $\eps$.

Again, the first-thought approach would be to use the deterministic Johnson-Lindenstrauss transform of \cite{EngebretsenIO02} (Theorem~\ref{thm:EIO}). Unfortunately, merely preserving pairwise distances is not sufficient to preserve the cost to an optimal $(1,z)$-center, except for the special case $z=2$.
Instead, we introduce the following notion.

\begin{definition}[Witness Sets]
\label{def:witness}
Let $P$ be a set of $n$ points in $\mathbb{R}^d$ (with a possible extension coordinate), let $c$ be an optimal $(1,z)$-center (under the possible constraint that the extension coordinate is 0) and let $\Delta=\frac{\sum_{p\in P}\|p-c\|^z}{n}$.
Then a $(D,R,\varepsilon)$-uniform witness set is a subset $S\subset P$ such that
\begin{itemize}
    \item The convex hull of $S$ contains a $(1+\varepsilon)$-approximation of $(1,z)$-clustering on $P$ (under the possible constraint that the extension coordinate is 0).
    \item The diameter of $S$ is at most $D\cdot \sqrt[z]{\Delta}$.
    \item The size of $S$ is at most $R$.
\end{itemize}
\end{definition}

The first important result is proving that the existence of witness sets of small size exist for a specific range of parameters. 

\begin{theorem}
\label{thm:witness}
There exists $\left(\frac{O(z)}{\varepsilon},\varepsilon^{-2}\cdot 2^{O(z)},\varepsilon\right)$ witness sets.
\end{theorem}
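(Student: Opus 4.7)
The plan is to construct $S$ by truncating $P$ to a ball around the optimal $(1,z)$-center and then applying a Maurey-style sampling argument to certify a $(1+\varepsilon)$-approximate center inside $\conv(S)$.

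\textbf{Truncation.} Let $c$ be an optimal $(1,z)$-center of $P$ (subject to the extension-coordinate constraint, if any) and let $\Delta := \cost(P,\{c\})/|P|$. I define the \emph{near} points $P_1 := \{p \in P : \|p-c\| \le D'\Delta^{1/z}\}$ with $D' = \Theta(z/\varepsilon)$. Markov's inequality on the nonnegative numbers $\|p-c\|^z$ yields $|P\setminus P_1| \le |P|\cdot O((\varepsilon/z)^z)$, so the bulk of $P$ lies in $P_1$. Any subset $S\subseteq P_1$ then automatically satisfies $\diam(S) \le 2D'\Delta^{1/z} = O(z/\varepsilon)\cdot \Delta^{1/z}$, establishing the diameter bound $D = O(z)/\varepsilon$ for free.

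\textbf{Existence of a small witness.} I will use the probabilistic method to show that an i.i.d.\ uniform sample $S$ of size $m = \varepsilon^{-2}\cdot 2^{O(z)}$ from $P_1$ has a point in $\conv(S)$ that is a $(1+\varepsilon)$-approximate $(1,z)$-center of $P$. Applying Lemma~\ref{lem:weaktri} rescaled with parameter $\varepsilon/\Theta(z)$ in place of $\varepsilon$, one has, for any $\hat c$,
\[\cost(P,\{\hat c\}) \le (1+\varepsilon)\cost(P,\{c\}) + \left(\Theta(z)/\varepsilon\right)^{z-1}\cdot |P|\cdot \|\hat c - c\|^z,\]
reducing the task to producing $\hat c\in\conv(S)$ with $\|\hat c - c\|\le \varepsilon\Delta^{1/z}/2^{O(z)}$. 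I take $\hat c$ to be the sample mean (which lies in $\conv(S)$) and decompose $\|\hat c - c\|\le \|\hat c - \mu(P_1)\| + \|\mu(P_1)-c\|$: the bias $\|\mu(P_1)-c\|$ is small since only an $O((\varepsilon/z)^z)$-fraction of points is discarded and $c$ is near-optimal, and the stochastic term $\|\hat c -\mu(P_1)\|$ is controlled by the standard bound $\mathbb{E}\|\hat c -\mu(P_1)\|^2\le \mathrm{Var}(P_1)/m$. The crucial inequality is $\mathrm{Var}(P_1)\le \Delta^{2/z}\cdot 2^{O(z)}$, obtained by interpolating the uniform bound $\|p-c\|\le D'\Delta^{1/z}$ against the $\ell_z^z$-budget $\sum\|p-c\|^z = |P|\Delta$; then choosing $m = \varepsilon^{-2}\cdot 2^{O(z)}$ gives the required concentration with positive probability, finishing the argument.

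\textbf{Main obstacle.} The delicate point is to avoid an $\varepsilon^{-O(z)}$ blow-up in $m$. Maurey's bound with the raw diameter $\diam(P_1)^2 = \Theta(z/\varepsilon)^2\Delta^{2/z}$ naively gives $m\sim \varepsilon^{-2z}$, and even refining to the variance leaves a residual factor of $(z/\varepsilon)^{2-z}$ for $z<2$. Removing this residual requires exploiting the fact that the $(1,z)$-cost has vanishing gradient at the optimum $c$, so that small perturbations produce only a \emph{quadratic} rather than linear increase in cost; this relaxes the required precision from $\varepsilon\Delta^{1/z}$ to $\sqrt{\varepsilon}\Delta^{1/z}$ and matches the $1/\sqrt m$ sample-mean rate with $m = \varepsilon^{-2}\cdot 2^{O(z)}$. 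The exponential-in-$z$ factor is then unavoidable, coming from both the rescaling of Lemma~\ref{lem:weaktri} and the variance interpolation.
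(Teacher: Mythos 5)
Your truncation step matches the paper's (they also restrict to a ball of radius $O(z/\eps)\cdot\Delta^{1/z}$ around a near-optimal center, so the diameter bound is essentially free in both arguments), and both proofs are existential rather than constructive. But there is a genuine gap in your second step. You take $\hat c$ to be the \emph{sample mean} of $S$ and bound $\|\hat c - c\|$ via the decomposition $\|\hat c - \mu(P_1)\| + \|\mu(P_1) - c\|$, claiming the bias term is small because ``only a small fraction of points is discarded and $c$ is near-optimal.'' That assertion is false for $z \neq 2$. For $z=1$, the optimal $(1,1)$-center is the geometric median, which can be at distance $\Omega(\Delta^{1/z})$ from the mean of $P$ (consider three points at the origin and one at distance $L$: the median sits at the origin while the mean is at $L/4$). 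Discarding an $O((\eps/z)^z)$-fraction of mass controls $\|\mu(P_1)-\mu(P)\|$, but not $\|\mu(P)-c\|$; the latter two are close only for $z=2$. So your argument in fact only establishes the theorem for $k$-means, and the ``vanishing gradient'' fix sketched under ``Main obstacle'' addresses the precision requirement, not this bias term.

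The paper avoids the mean entirely. After the same truncation (there phrased as removing far points $O_A$ and replacing them by one weighted representative), it invokes an existing deterministic $(1,z)$-coreset for a single cluster of size $\tilde O(\eps^{-2}2^{O(z)})$ (their Lemma~\ref{lem:coreset-huang}). The witness set is that coreset together with the representative point, and the $(1+\eps)$-approximate center is taken to be the optimal $(1,z)$-center of the coreset — this automatically lies in $\conv(\Omega_A)$ for every $z\geq 1$, because projecting any candidate onto the convex hull of the coreset points cannot increase its distance to any of them. To rescue a sampling-based proof in your style, you would need to show that the optimal $(1,z)$-center of the random sample $S$ (rather than its mean) is a $(1+\eps)$-approximate center for all of $P$ — i.e., that a uniform sample from $P_1$ is itself a $(1,z)$-coreset — which is a substantially stronger claim than the variance bound you rely on and is essentially what Lemma~\ref{lem:coreset-huang} provides.
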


We give a full proof of Theorem~\ref{thm:witness} in Section~\ref{sec:witness} of the appendix. The existence of small witness sets is essentially a consequence of the existence of small coresets. Let us consider how this existence of small witness sets combined with Theorem~\ref{thm:extension} implies that we can deterministically construct a cost preserving sketch for powers.

The main idea is that the property of witness sets allows us
to efficiently ``discretize'' all candidate solutions in the convex hull of a witness set. 
We will show that the dimension reduction preserves the property of a subset of points being a witness set. Furthermore, because witness sets lie in low-dimensional spaces, we can guarantee that the cost for every point in the convex hull of a witness is preserved. Specifically, we use that linear projections preserve subspaces, and in particular the subspace spanned by the points in the convex hull of the witness set. Hence, if the cost of every candidate solution in a witness set stays the same, up to a $(1\pm \varepsilon)$ factor, the clustering cost overall also stays the same.

\begin{lemma}\label{lem:partitionDimRed}
Let $P \subset \R^d$ be a a multiset with at most  $T$ distinct points.
There exists a set $\calN$ of size $T^{O \lpar\varepsilon^{-2}\rpar}$ that can be computed in deterministic time $T^{O \lpar\varepsilon^{-2}\rpar}$ with the following property. If a linear projection $\Pi$ preserves pairwise distances between points of $\calN$ up to a $(1\pm \varepsilon/z)$ error, then:

for any extension $P'$ of $P$ where each point $p$ has coordinate extension $p'$,  and for any subset $C$ of $P$ (with extension $C'$), 
\[\sum_{p \in C}\left\|\zext{\Pi p}{p'}- \mu^z_0(\Pi'(C))\right\|^z = (1\pm \eps)\sum_{p \in C} \left\|\zext{p}{p'} - \mu^z_0(C')\right\|^z,\]
where $\Pi'(C) := \left\{\zext{\Pi p}{p'}, p\in C\right\}$ is the projection of $C$ with the same extensions as $C'$.
\end{lemma}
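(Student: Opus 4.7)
\textbf{Proof plan for Lemma~\ref{lem:partitionDimRed}.}
The set $\calN$ will consist of $P$ together with a fine net of the convex hull of every subset of $P$ of size $R := \eps^{-2}\cdot 2^{O(z)}$, the witness-set size from Theorem~\ref{thm:witness}. For each such subset $S$, its convex hull lies in an affine subspace of dimension at most $R-1$, so a standard grid argument produces an $\eta$-net of size $(R/\eta)^R$, where $\eta$ is a small multiple of $\eps/z$ times the diameter of $S$. Since $P$ has at most $T$ distinct points there are at most $T^R$ many subsets, so $|\calN|\le T^{O(\eps^{-2})}$; both the enumeration and the net construction fit in the same time bound. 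The intuition behind this choice is that by Theorem~\ref{thm:witness} any optimal center for a subproblem is approximately realized in the convex hull of some $R$-tuple of original points, so discretising these convex hulls yields an ``address list'' that captures every center we care about.

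Fix $C\subseteq P$, an extension $C'$, and let $c^*:=\mu^z_0(C')$ and $\Delta:=\cost(C',\{c^*\})/|C|$. Applying Theorem~\ref{thm:witness} to $C'$ under the constraint that the extension coordinate is $0$ yields a witness $S'\subseteq C'$ of size $R$ with diameter $\le (O(z)/\eps)\sqrt[z]{\Delta}$ and a point $\tilde c\in\conv(S')$ with extension coordinate $0$ such that $\cost(C',\tilde c)\le(1+\eps)\cost(C',c^*)$. The first $d$ coordinates $\tilde c_{[d]}$ lie in $\conv(S)$, where $S\subseteq P$ is the first-$d$-coordinate projection of $S'$, and by construction there is a net point $n\in\calN$ with $\|n-\tilde c_{[d]}\|\le (\eps/z)\sqrt[z]{\Delta}$. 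Lemma~\ref{lem:weaktri} summed over $C$ then yields $\cost(C',\zext{n}{0})=(1\pm O(\eps))\cost(C',c^*)$. Since $p$ and $n$ both lie in $\calN$, the hypothesis on $\Pi$ gives $\|\Pi p-\Pi n\|^2=(1\pm O(\eps/z))\|p-n\|^2$, and because the extension coordinate contributes the identical $(p')^2$ to both $\|\zext{\Pi p}{p'}-\zext{\Pi n}{0}\|^2$ and $\|\zext{p}{p'}-\zext{n}{0}\|^2$, raising to the $z/2$-th power gives $\|\zext{\Pi p}{p'}-\zext{\Pi n}{0}\|^z=(1\pm O(\eps))\|\zext{p}{p'}-\zext{n}{0}\|^z$. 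Chaining with the optimality of $\mu^z_0(\Pi'(C))$ yields the upper bound $\sum_p\|\zext{\Pi p}{p'}-\mu^z_0(\Pi'(C))\|^z\le(1+O(\eps))\sum_p\|\zext{p}{p'}-c^*\|^z$.

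For the reverse inequality we apply Theorem~\ref{thm:witness} to $\Pi'(C)$, getting a witness $T\subseteq\Pi'(C)$ of size $R$ and an approximate center $\tilde d=\zext{\tilde d_{[m]}}{0}\in\conv(T)$. Writing $T=\Pi'(S)$ for some $S\subseteq C\subseteq P$ of size $R$, linearity of $\Pi$ gives $\tilde d_{[m]}\in\Pi(\conv(S))$, so $\tilde d_{[m]}=\Pi(x)$ for some $x\in\conv(S)$; the net of $\conv(S)$ inside $\calN$ contains some $n$ close to $x$, and $\zext{n}{0}$ then serves as a candidate center for $C'$ in the original space. The main technical obstacle lies here: $\Pi$ is only assumed to preserve distances between pairs in $\calN$, yet we need $\|\Pi x-\Pi n\|$ to be close to $\|x-n\|$ even though $x\notin\calN$. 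The way around it is to observe that once the net is sufficiently dense on the $(R-1)$-dimensional affine hull spanned by $S$, distance preservation on the net forces $\Pi$ to act as a near-isometry on this entire low-dimensional subspace (any $y$ in the hull is a short-diameter convex combination of net points, and linearity of $\Pi$ propagates the distortion). Running the same chain of inequalities in reverse then gives $\sum_p\|\zext{p}{p'}-c^*\|^z\le(1+O(\eps))\sum_p\|\zext{\Pi p}{p'}-\mu^z_0(\Pi'(C))\|^z$, and rescaling $\eps$ by a constant absorbs the various $O(\eps)$ slacks introduced by Lemma~\ref{lem:weaktri}, the net precision, and the passage between $\eps$ and $\eps/z$ in the distance-preservation assumption.
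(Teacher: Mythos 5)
Your construction of $\calN$ and your handling of the forward direction (approximating $\mu^z_0(\Pi'(C))$ from the original space) match the paper's argument closely. The problem lies in the reverse direction, and you correctly identify it: the lifted center $s$ with $\Pi s$ in $\conv(S_\Pi)$ (obtained by surjectivity of $\Pi$) is an arbitrary point of $\conv(S)\subset\R^d$, not a net point, so the assumption that $\Pi$ preserves distances between points of $\calN$ gives you no control over $\|\Pi s-\Pi n\|$.

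Your proposed fix is not an argument. The parenthetical ``any $y$ in the hull is a short-diameter convex combination of net points, and linearity of $\Pi$ propagates the distortion'' doesn't hold up: writing $y=\sum\lambda_i n_i$ gives $\Pi y=\sum\lambda_i\Pi n_i$ by linearity, but knowing the pairwise distances $\|\Pi n_i-\Pi n_j\|$ does not by itself bound $\|\Pi y - \Pi n\|$ for the nearest net point $n$ --- the error vector $y-n$ lives in the ambient space, and you have no handle on $\Pi$ restricted to that vector. The statement ``distance preservation on a dense net of a low-dimensional body forces $\Pi$ to be a near-isometry on the body'' is in fact true with appropriate quantifiers, but establishing it is exactly the missing lemma; you cannot assert it as an observation. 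The paper closes this gap constructively: for each candidate set $S$ it also adds to $\calN$ the origin and an orthogonal basis $V$ of the subspace spanned by $S$. Then the hypothesis that $\Pi$ preserves pairwise distances among points of $\calN$ directly controls $\|\Pi v\|$ for all $v\in V$, and this is what lets the paper bound $\|\Pi(s-c)\|$ in terms of $\|s-c\|$ for $s-c$ in the span of $S$ (Claim~\ref{claim:projCover}). Without those basis vectors in $\calN$, your hypothesis on $\Pi$ is too weak to carry out the reverse inequality, so this is a genuine gap rather than a stylistic difference.

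Two smaller remarks. First, your stated net resolution $\eta=O(\eps/z)\cdot\diam(S)$ is too coarse: since $\diam(S)$ can be as large as $D\sqrt[z]{\Delta}$ with $D=O(z/\eps)$, you need $\eta=O((\eps/z)^2)\cdot\diam(S)$ to guarantee a net point within $(\eps/z)\sqrt[z]{\Delta}$ of the target; this is what the paper's $\eps'=\eps/(4Dz)$ achieves. This only changes the net size by a polynomial factor and is easily repaired. Second, the paper's reverse direction opens by assuming $\costP_0(\Pi'(C))\le\costP_0(C')$ (which is without loss of generality given the forward direction) in order to bound the diameter of the projected witness set $S_\Pi$ and hence of its preimage $S$; you should make this reduction explicit, since otherwise the witness-set diameter in the projected space is uncontrolled.
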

\begin{proof}
First, $\calN$ is constructed as follows. Let $\varepsilon' = \varepsilon/(4Dz)$, where $D = O(z)/\eps$ is the first parameter of the witness set provided in \cref{thm:witness}.
For each possible set $S \subseteq P$ of size $O(1/\eps^2)$, construct an $\eps' \cdot \diam(S)$-cover of the convex hull of $S$.\footnote{Recall that a $\varepsilon$-cover of a set $S$ is a set of points $N$ such that for an $x\in S$, we have some $y\in N$ with $\|x-y\|\leq \varepsilon$.} 
Moreover, let $V$ be an orthogonal basis of the subspace spanned by $S$, with the addition of the origin. 

$\calN$ is made of the union of those covers and basis, for all possible sets $S$. To bound its size, observe that each set $S$ lies in an $|S|=O\left(\varepsilon^{-2}\right)$-dimensional flat. Hence, we can compute an $\varepsilon' \diam(S)$-cover of $S$ of size at most $\left(\frac{O(\sqrt{|S|})}{\varepsilon'}\right)^{|S|} = \eps^{-O(\eps^{-2})}$ in deterministic time polynomial in $\eps^{-O(\eps^{-2})}$~\cite{Cha01}. 
Similarly, we can compute the basis $V$ in deterministic polynomial time.

Since there are at most $T^{O(\varepsilon^{-2})}$ many different set $S$, the size of $\calN$ is therefore bounded by 
$$T^{O\left(\varepsilon^{-2}\right)} \cdot \eps^{-O(\eps^{-2})} = T^{O\left(\varepsilon^{-2}\right)}.$$

We now verify that $\calN$ has the desired property. Essentially, preserving the norm of each basis $V$ will ensure that the operator norm of $\Pi$ is bounded, and preserving the distances to all net points, combined with the existance of small witness set, will guarantee that the clustering cost is preserved.

Let $P'$ be an extension of $P$ as in the lemma statement, and define $\calN'$ to be the set of $0$-extensions of $\calN$. 
We first note that any possible witness set $S'$ for $P'$ under the constraint that the extension coordinate is $0$, an $\eps' \diam(S')$-cover of $\conv(S') \cap \{x \in \R^{d+1}: x_{d+1} = 0\}$ is contained in $\calN'$. 
Indeed, let $S$ be the corresponding witness set for $P$ (i.e., $S'$ where we removed the extension coordinate): it holds that $\conv(S') \cap \{x \in \R^{d+1}: x_{d+1} = 0\} \subseteq \conv(S)$, and $\diam(S) \leq \diam(S')$, so the $\eps' \diam(S)$ cover of $\conv(S)$ is also an $\eps' \diam(S')$-cover of $\conv(S') \cap \{x \in \R^{d+1}: x_{d+1} = 0\}$.

Let $\Pi$ be a linear projection such that the distance from any point of $P$ to any point of $\calN$ is preserved up to an error $(1\pm \eps/z)$,  and $C$ be a subset of $P$. As in the lemma statement, we define $\Pi'(C) := \left\{\zext{\Pi p}{p'}, p\in C\right\}$.
What we need to show is that $\costP_0(\Pi'(C)) = (1\pm \eps)\costP_0(C')$ or, in other terms, 
\[   \sum_{p \in C} \left\|\zext{p}{p'} - \mu^z_0(C')\right\|^z= (1\pm \eps) \sum_{p\in C}\left\|\zext{\Pi p}{p'} - \mu^z_0(\Pi'(C))\right\|^z,\]

We first prove the somewhat easier direction which states that the cost does not increase with the projection. Let $S'$ be a $(D, R, \eps')$-witness set for $C'$ given by \cref{thm:witness}.
Denote by $c^*$ the point whose $0$-extension is in the convex hull of $S'$ which is the $(1+\varepsilon)$-approximate solution to $\mu^z_0(C')$, and 
$c$ the point of $\calN$ closest to $c^*$. We will show that $\zext{\Pi c}{0}$ is a good center for $\Pi'(C)$.

Since $\calN$ contains an $\eps'\diam(S)$-cover of $\conv(S)$ and  $S$ is a witness set, we have
\begin{equation}
    \label{eq:derandom1}
    \|c^*-c\| \leq \varepsilon'\cdot \diam(S) \leq \varepsilon' \cdot D \cdot \left(\frac{\costP_0(C')}{|C|}\right)^{1/z} 
\end{equation}

Let $\Delta_{C} := \frac{\costP_0(C')}{|C|}$. 
We have, for any point $p$:
\begin{eqnarray*}
\|\Pi p-\Pi c\|& & \\
(\text{Distortion of Embedding})&\leq & (1+\varepsilon/z)\cdot \|p-c\| \\
&\leq & (1+\varepsilon/z)\cdot \left(\|p-c^*\| + \|c^*-c\|\right) \\
(Eq.~\ref{eq:derandom1}) &\leq & (1+\varepsilon/z) \cdot \left(\|p-c^*\| +\varepsilon'\cdot D \cdot \Delta_{C}^{1/z} \right) \\
(\text{choice of }\eps') & \leq & (1+\varepsilon/z) \cdot \left(\|p-c^*\| +\eps/(2z) \cdot \Delta_{C}^{1/z} \right) \\
\end{eqnarray*}
Therefore, using \cref{lem:weaktri}, we get $\|\Pi p-\Pi c\| \leq (1+\eps/z)^3 (\|p-c^*\|^2 + \frac{\eps}{2z} \Delta_C^{2/z})$. Since the embedding of $c$ is $\zext{\Pi c}{0}$, we obtain: 
\begin{align*}
\left\|\zext{\Pi p}{p'} - \zext{\Pi c}{0}\right\|^z &= \left(\|\Pi p - \Pi c\|^2 + p'^2\right)^{z/2}\\
&\leq \left((1+\varepsilon/z)^3 \cdot \left(\|p-c^*\|^2 +\frac{\eps}{2z}\cdot (\Delta_{C})^{2/z}\right)  + {p'}^2\right)^{z/2}\\
&\leq (1+\varepsilon/z)^{3z/2} \cdot\left( (\|p-c^*\|^2 + {p'}^2) +\frac{\eps}{2z}\cdot (\Delta_{C})^{2/z}\right)^{z/2}\\
&\leq (1+\varepsilon/z)^{3z/2}\left( (1+\eps) \cdot (\|p-c^*\|^2 + {p'}^2)^{z/2} + \left(\frac{z+\eps}{\eps}\right)^{z/2-1} \left(\frac{\eps}{2z}\right)^{z/2} \cdot \Delta_{C}\right)\\
\qquad &\leq (1+3\eps) \cdot (\|p-c^*\|^2 + {p'}^2)^{z/2} + \eps \cdot \Delta_{C}
\end{align*}

Summing over all $p \in C$, we get that
\begin{align}
    \notag 
    \costP_0(\Pi'(C)) &\leq \sum_{p\in C} \left\|\zext{\Pi p}{p'} - \zext{\Pi c}{0}\right\|^z \leq  (1+3\eps) \costP_0(C') + \eps \costP_0(C')\\
    &\leq (1+O(\eps))\costP_0(C').\label{eq:dimred1}
\end{align}

The proof of the other direction is similar in spirit, but quite more technical. For ease of reading, we defer it to the Appendix and \cref{lem:halfPartitionDimRed}.
\end{proof}

Combining with the existence of small partition coresets, the previous lemma concludes the proof of \cref{thm:dimred}.  Essentially, it is enough to compute the set $\calN$ and apply \cref{lem:partitionDimRed} on a extension partition coreset of $P$, which has therefore size independent of $n$. Due to the presence of extensions, the proof requires a bit of technical attention.

\begin{proof}[Proof of Theorem~\ref{thm:dimred}]
The cost preserving sketch for $P$ is constructed as follows. First, use \cref{thm:extension} to compute an $(\eps, k, z)$-extension partition coreset $\coreset$ of $P$, with extension $\coreset'$, in time $k^{\log \log k + \varepsilon^{-O(z)}} \cdot \poly(nd)$. Let $g$ be the corresponding mapping from $P$ to $\coreset$ (as in \cref{def:core}).

To apply \cref{lem:partitionDimRed} on $\coreset$, we start by computing the set $\calN$ given in the statement, for the set $\coreset$, in time $|\coreset|^{O(\eps^{-2})} = k^{\eps^{-O(z)}}$. The size of $\calN$ is as well $k^{\eps^{-O(z)}}$.
We the use the algorithm from \cref{thm:EIO} to deterministically compute a linear mapping $\Pi:\mathbb{R}^d\rightarrow \mathbb{R}^m$ with $m\in \varepsilon^{-O(z)}\log k $ such that pairwise distances between points of $\calN$ are preserved, up to a distortion factor $(1\pm \varepsilon/z)$. This can be done in time $O\left(n\cdot d\cdot k^{\varepsilon^{O(-z)}}\cdot \log^{O(1)} n\right)$. 

Now, \cref{lem:partitionDimRed} ensures that for the extension $\coreset'$ of $\coreset$ (where each point $p$ has coordinate extension $p'$),  and for any for any subset $C$ of $\coreset$ (with extension $C'$), 
\begin{gather}\label{eq:propPi}
    \sum_{p \in C}\left\|\zext{\Pi p}{p'}- \mu^z_0(\Pi'(C))\right\|^z = (1\pm \eps)\sum_{p \in C} \left\|\zext{p}{p'} - \mu^z_0(C')\right\|^z,
\end{gather}
where $\Pi'(C) := \left\{\zext{\Pi p}{p'}, p\in C\right\}$ is the projection of $C$ with the same extensions as $C'$.

We define our cost preserving sketch for $P$ as follows: for any $p \in P$, let $f(p) := \zext{\Pi g(p)}{g(p)'}$, where $g(p)'$ is the extension of $g(p)$ in $\coreset'$. 
Let $\calC = \lbra C_1,...,C_k \rbra$ be a clustering of $P$.
We will show that $\cost(f(P), f(\calC)) = (1\pm \eps) \cost(P, \calC)$, which will conclude. 

For any cluster $C_i$, we denote $g(C_i)$ the multiset $\lbra g(p) : p \in C_i \rbra$, and $g(C_i)'$ the extension $\lbra \zext{g(p)}{g(p)'} : p \in C_i \rbra$. 
\begin{eqnarray*}
     \costP_0(f(P), f(\calC)) &=& \sum_{i=1}^k \cost(f(C_i), \mu^z_0(f(C_i)))\\
     &=& \sum_{i=1}^k \sum_{p \in C_i}\left\|\zext{\Pi g(p)}{g(p)'}- \mu^z_0(\Pi'(g(C_i)))\right\|^z\\
 \text{(using \cref{eq:propPi})} 
    & \leq & (1+\eps)\sum_{i=1}^k \sum_{p \in C_i}\left\|\zext{g(p)}{g(p)'} - \mu^z_0(g(C_i)')\right\|^z\\
\text{(by def. of }\mu^z_0)    &\leq& (1+\eps)\sum_{i=1}^k \sum_{p \in C_i}\left\|\zext{g(p)}{g(p)'} - \zext{\mu^z(C_i)}{0}\right\|^z\\
   \text{(by prop. of $g$)} 
   & \leq& (1+\eps)^2 \sum_{i=1}^k \sum_{p\in C_i} \|p - \mu^z(C_i)\|^z\\
    & =& (1+\eps)^2 \costP(P, \calC).
\end{eqnarray*}

We proceed similarly for the other direction: let $s_i$ such that $\zext{s_i}{0} = \mu_0^z(g(C_i)')$
\begin{eqnarray*}
     \costP(P, \calC) &=& \sum_{i=1}^k \sum_{p \in C_i}\left\|p- \mu^z(C_i)\right\|^z\\
     &\leq & \sum_{i=1}^k \sum_{p \in C_i}\left\|p- s_i\right\|^z\\ 
     \text{(by prop. of $g$)} 
     & \leq& (1+\eps) \sum_{i=1}^k \sum_{p\in C_i} \left\|\zext{g(p)}{g(p)'} - \mu_0^z(g(C_i)')\right\|^z\\
     \text{(using \cref{eq:propPi})} 
     &\leq& (1+\eps)^2 \sum_{i=1}^k \sum_{p \in C_i}\left\|\zext{\Pi g(p)}{g(p)'}- \mu^z_0(\Pi'(g(C_i)))\right\|^z\\
     &=& (1+\eps)^2\sum_{i=1}^k \cost(f(C_i), \mu^z_0(f(C_i)))\\
     &=& (1+\eps)^2 \costP_p(f(P), f(\calC)) 
\end{eqnarray*}

Hence, $\costP_0(f(P), f(\calC))  \in (1\pm O(\eps))\costP(P, \calC)$: $f$ is indeed a cost preserving sketch for $P$, with target dimension $m+1 = \eps^{-O(z)} \log k$, which concludes the proof.

In the particular case where $z \in \{1, 2\}$, then the size of the partition coreset is essentially $k^{\eps^{-z-6}}$: therefore, the size of $\calN$ is $k^{\eps^{-z-8}}$, and the target dimension of the embedding is $\eps^{-z-10} \log k$.
\end{proof}

\begin{remark}
If allowing randomization, one can compute an arbitrary JL embedding onto $\log k \eps^{-O(z)}$ dimensions: with good probability, this will preserves distances between points of $\calN$ defined in \cref{lem:partitionDimRed}, and therefore the proof of Theorem~\ref{thm:dimred} goes through with this embedding. In particular, one could use the sparse-JL constructions of \cite{KaneN14} to get a faster running time in case of sparse inputs.
\end{remark}

\section{Improved Deterministic Coreset Construction via Uniform VC-dimension}\label{sec:coresetImproved}

We focus in this section on a different notion of compression, namely coresets. As mentionned in the introduction, our construction works in general metric spaces, and not simply in Euclidean Spaces. In the following, we will work in a metric space  $(X, \dist)$: we will show a general coreset construction, as well as a tailored version of it to Euclidean Spaces.

Recall the definition of a coreset \cref{def:stdcoreset}: an $(\eps,k,z)$-coreset  with offset for the $(k,z)$-Clustering problem for $P$ is a set $\coreset$ with weights $w : \coreset \rightarrow \R_+$ together with a constant $\offset$ such that, for any set $\calS \subset X$, $|\calS| = k$, 
\[\cost^z(\coreset, \calS) + \offset = (1\pm \eps)\cost^z(P, \calS)\]

We will heavily rely on notions from VC-dimension literature, defined as follows:

\begin{definition}\label{def:vcdim}
  A tuple $(X, \calR)$ is a \emph{range space} when $\calR = (R_1, ..., R_m)$ with $R_i \subseteq X$.
  
  A set $A$ is an $\eps$-set-approximation\footnote{this is usually simply called an $\eps'$-approximation, but we already use this terminology for the cost of solutions.} for $(X, \calR)$ if 
  $$\forall R  \in \calR, \left| \frac{|R|}{|X|} - \frac{|R \cap A|}{|A|}\right| \leq \eps.$$ 
  
  The VC-dimension of a range space is the size of the largest set $Y$ such that $\{Y \cap R,~R\in \calR\} = 2^{|Y|}$, meaning that every subset of $Y$ is of the form $Y\cap R$ for some $R \in \calR$.
\end{definition}

The main result of the section is the following theorem. We relate the size of the coreset to the VC dimension of the range space $\calB$ formed by $k$ balls, defined as follows:
For a point $c\in X$ and $r\in \mathbb{R}_{\geq 0}$, let $H_{c,r} = \{p\in P~|~ \dist(p,c)\geq r\}$. 
For $k$ centers $C=\{c_1,c_2,\ldots ,c_k\}$, we similarly define $H_{C,r} = \{p\in P~|~ \min_{c\in C} \dist(p,c)\geq r\}$.
We let $\calB:=\bigcup_C \bigcup_r H_{C,r}$.

\begin{theorem}\label{thm:smallcoreset}
Let $(X, \dist)$ be a metric space such that the VC dimension of $(P, \calB)$ is $D$.
There exists a deterministic algorithm running in deterministic time
$\tilde O(|X|\cdot |P| \cdot k/\eps) + k \left(D/\eps\right)^{O(D)}\cdot |P|$ that constructs an $\eps$-coreset with offset for the set $P$, with size $2^{O(z\log z)}\cdot \frac{kD  \log(D) \polylog(1/\eps)}{\eps^5}$.

In the Euclidean Space $\R^d$, the running time is
$k^{\log \log k + \varepsilon^{-O(z)}} \cdot \poly(|P|d)+  k \left(D/\eps\right)^{O(D)}\cdot |P|$.
\end{theorem}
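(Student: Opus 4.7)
The plan is to follow a Chen-style exponential ring decomposition combined with VC-dimension based deterministic set-approximations, but with a structural improvement drawn from the same ideas that drive the partition coreset construction of \cref{thm:extension}. First, I would compute a bi-criteria solution $\calB$ by recursively invoking \cref{thm:bicriteria}: starting with one center and refining clusters whose cost decreases by more than a $(1+\eps)$ factor upon adding $k$ new centers, analogously to the recursion in \cref{alg:partCoreset}. This yields, in time $k^{\log\log k + \eps^{-O(z)}}\poly(|X|)$, a set $\calB$ of $\poly(k,1/\eps)$ centers such that no solution with $k$ additional centers can decrease the cost of any of its induced clusters by more than an $\eps$-fraction. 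Standard inequalities (\cref{lem:weaktri}) together with the analysis underlying \cref{lem:badcost} let us conclude the key structural statement: for any ``far'' point $p$ (one whose distance to its bi-criteria center $b(p)$ is much larger than the average cost inside its bi-criteria cluster), and for any set $\calS$ of $k$ centers, $\dist(p,\calS)^z = \dist(b(p),\calS)^z \pm \eps\cdot \dist(p,b(p))^z$ holds up to lower order terms.

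Using this, I would partition the input into two groups. The ``far'' points are replaced by copies of their bi-criteria centers, and their $z$-cost $\sum \dist(p,b(p))^z$ is absorbed into the additive offset $\offset$; by the structural statement, this substitution changes $\cost^z(P,\calS)$ by at most an $\eps$-fraction for every $\calS$. The ``close'' points are partitioned, for each bi-criteria center $b\in\calB$, into exponentially growing distance rings around $b$. The crucial improvement over Chen's construction is that, because far points have been removed, only $O(z\log(1/\eps))$ rings per center are needed instead of $O(\log n)$, giving $O(k\cdot z\log(1/\eps))$ rings in total with no dependence on $n$.

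Inside each ring, the cost of $\calS$ for any $\calS\in X^k$ is determined, up to a multiplicative $(1\pm\eps)$ factor in the $z$-cost, by the number of points of the ring that fall in each of the $k$ halfspaces $H_{c,r}$ defining the ball range system $\calB$ (cf.\ the definition preceding the theorem). Hence it suffices to build, for each ring, an $\eps'$-set-approximation of the restriction of $(P,\calB)$ to that ring, where $\eps' = \Theta(\eps/\log(1/\eps))$ accounts for the $O(z\log(1/\eps))$ ring levels and the conversion from set-approximation error to multiplicative cost error (which costs a factor $1/\eps$ because the cost within a ring is $\Theta(1/\eps)$ times the per-ring per-point additive slack we allow). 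Using the deterministic $\eps'$-set-approximation algorithm whose output size is $\tilde O(D\log D/\eps'^2)$ and runtime $(D/\eps')^{O(D)}\cdot|P|$, and summing over all rings, yields the claimed coreset size $2^{O(z\log z)}\cdot kD\log(D)\polylog(1/\eps)/\eps^5$ and running time $\tilde O(|X|\cdot|P|\cdot k/\eps) + k(D/\eps)^{O(D)}\cdot|P|$; the $\tilde O(|X|\cdot|P|\cdot k/\eps)$ term comes from the bi-criteria / ring-assignment step.

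For the Euclidean version, the quantity $D$ above would naively be $O(kd\log k)$, giving a running time exponential in the ambient dimension. To avoid this, I would first invoke \cref{thm:extension} to replace $P$ by its partition coreset of size $k^{\eps^{-O(z)}}$, then apply the derandomized dimension reduction of \cref{thm:dimred} (or a deterministic terminal embedding of Mahabadi, Makarychev, Makarychev and Razenshteyn) to bring the ambient dimension down to $\eps^{-O(z)}\log k$, so that $D = \tilde O(k)$. Running the general construction on the resulting low-dimensional instance gives the claimed Euclidean running time $k^{\log\log k + \eps^{-O(z)}}\poly(|P|d) + k(D/\eps)^{O(D)}\cdot|P|$. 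The main obstacle is the rigorous analog of \cref{lem:key} for this setting: one has to show that the structured bi-criteria really does certify that far points contribute an essentially additive term (independent of $\calS$) to $\cost^z(P,\calS)$, and to verify that, after plugging this into Chen's ring framework and rescaling $\eps$ for error accumulation across the $O(k\log(1/\eps))$ rings and for converting the $\eps$-set-approximation guarantee into an $\eps$-coreset guarantee, the final precision loss is only an $O(1)$ factor; the powers of $1/\eps$ in the coreset size bound come out of this accounting.
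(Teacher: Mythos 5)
Your high-level architecture — a structured bi-criteria, exponential rings, an offset for far points, per-ring $\eps$-set-approximations controlled by the VC dimension, and dimension reduction for the Euclidean case — matches the paper. But the bi-criteria construction you propose is the wrong one and would break the size bound. You want a recursion analogous to \cref{alg:partCoreset}, ensuring that no \emph{cluster}'s cost can be decreased by an $\eps$-fraction by adding $k$ extra centers, and you assert this yields $\poly(k,1/\eps)$ centers. That recursion does not stop so early: it runs for $\gamma = \eps^{-O(z)}$ levels, each multiplying the number of clusters by $\Theta(k\log(1/\eps)/\eps)$, so the final number of centers is $k^{\eps^{-O(z)}}$. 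With that many clusters you would have $k^{\eps^{-O(z)}}\cdot z\log(1/\eps)$ rings and a coreset of size $k^{\eps^{-O(z)}}\cdot D/\eps^{O(1)}$, exponentially larger than the claimed $\tilde O(kD\log D/\eps^5)$. The paper instead runs a \emph{single} iterative greedy loop (Step~1 of the algorithm), adding one center at a time whenever a single center reduces the cost by a $(1-\eps/(c_\constantApprox k))$ factor, which terminates after $O(k\log(1/\eps)/\eps)$ iterations. \cref{lem:greedy} then shows that this weaker ``one-center local stability'' already implies the sufficient global property $\cost(Q,\seeded)\le\cost(Q,\calS)+\eps\opt$ for every subset $Q$ and every $k$-center set $\calS$, by simply averaging the stability inequality over the $k$ centers of $\calS$. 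Achieving that with only $O(k/\eps)$ centers is the key ingredient you are missing. Your per-point claim $\dist(p,\calS)^z=\dist(b(p),\calS)^z\pm\eps\dist(p,b(p))^z$ for far points also does not hold pointwise; \cref{lem:preprocess} establishes it only in aggregate over outer rings via Markov-type arguments plus \cref{lem:greedy}.

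A secondary gap is the running time for the general metric case. Invoking \cref{thm:bicriteria} (a Euclidean result that enumerates JL seeds) would inject a $k^{\log\log k+\eps^{-O(z)}}$ factor where the theorem promises only $\tilde O(|X|\cdot|P|\cdot k/\eps)$: in an arbitrary finite metric one simply enumerates candidate centers over $X$ at each greedy step, giving the polynomial bound. The $k^{\log\log k+\eps^{-O(z)}}$ factor belongs only to the Euclidean bound, where the paper crucially uses the cost-preserving sketch of \cref{thm:dimred} (not a terminal embedding) to implement the greedy seeding, since local stability needs to be preserved by the map; a terminal embedding only appears later, in \cref{cor:main-coreset}. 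Your $\eps$-accounting is also off by a factor of $\eps$ in $\eps'$ and in the number of bi-criteria centers, but you acknowledge this; the two substantive failures are the recursion in place of the iterative greedy, and the general-metric running time.
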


The exponential term in $D$ comes from the computation of $\eps$-set approximations: if allowing for randomization, this can be done merely by uniform sampling. In that case, the complexity becomes simply $\tilde O \lpar |X| \cdot |P| \cdot k/\eps \rpar$.

\vspace{0.5em}

\subsection{Application to Euclidean Spaces via Dimension Reduction, and Proof of \cref{thm:apx}.}
\vspace{0.5em}

Our prominent example is the Euclidean case. It is well known from the coreset literature that the set system aforementioned has VC-dimension $O(kd \log k)$ in Euclidean spaces of dimension $d$. This has been proven for example Lemma 1 of~\cite{BachemLH017} and Corollary 34 of~\cite{FeldmanSS20}~\footnote{The bounds in the literature are often stated for a generalization known as the weighted function space induced by $k$-clustering in Euclidean spaces, which is necessary for coreset constructions based on importance sampling. Here, we show how to use uniform sampling instead, which  allows us to use \cref{thm:chazelle} as a black box.}.  This in particular leads to $\eps'$-set-approximation of size $O\left(\frac{kd \log k}{\eps'^2}\cdot \log(kd \log k / \eps')\right)$.

In order to remove the dependency in $d$ and match the best coreset result, we aim at using a classical terminal embedding to reduce the dimension.
 In particular, we will start from the extension-coreset of size $k^{\eps^{-O(z)}}$ from \cref{thm:extension}, and apply a deterministic terminal embedding construction on it. 
 Since the pre-image of a coreset computed after a terminal embedding is a coreset in the original space (see \cite{HuV20} or \cite{Cohen-AddadSS21}), this yields the following corollary:

\begin{cor}\label{cor:main-coreset}
Given a set of $n$ point $P \subset \R^d$, there exist a deterministic algorithm running in time 
$k^{\log \log k + \varepsilon^{-O(z)}} \cdot \poly(nd) + \exp\lpar \eps^{-O(z)} k \log^3 k\rpar$
that constructs a set $\coreset \subseteq \R^{d+1}$ of size $k^2 \log^2 k \cdot \eps^{-O(z)}$ together with a constant $\offset$ such that:
\[\forall \calS \in (\R^d)^k, \cost(\coreset, \calS') + \offset = (1\pm \eps) \cost(P, \calS),\]
where $\calS'$ is the zero extension of $\calS$, namely, $\calS' = \lbra \zext{s}{0}, s \in \calS\rbra$.
\end{cor}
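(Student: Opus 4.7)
The strategy is a three-stage pipeline: first shrink the dataset using our extension partition coreset, then reduce the ambient dimension with a deterministic terminal embedding, and finally apply the generic VC-dimension coreset construction in the low-dimensional space, before lifting back.

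\textbf{Stage 1 (Partition coreset).} Apply \cref{thm:extension} to $P$ to obtain an $(\eps, k, z)$-extension partition coreset $M'\subset \R^{d+1}$ with at most $T=k^{\eps^{-O(z)}}$ distinct points, together with the mapping $g:P\to M'$, in time $k^{\log\log k + \eps^{-O(z)}}\poly(nd)$. I will first verify that the partition-coreset guarantee implies the ``cost-of-centers'' guarantee: specializing the partition-preservation inequality to the two closest-center partitions (the one induced by $\calS$ on $P$ and the one induced by $\calS'$ on $M'$) gives the sandwich $\cost^z(M',\calS')=(1\pm O(\eps))\cost^z(P,\calS)$ for every $\calS\in(\R^d)^k$ with zero-extension $\calS'$.

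\textbf{Stage 2 (Terminal embedding).} Apply the deterministic terminal embedding of Mahabadi--Makarychev--Makarychev--Razenshteyn \cite{MahabadiMMR18} to the $T$ distinct points of $M'$. This yields $f:\R^{d+1}\to \R^{m}$ with $m=O(\eps^{-4}\log T)=\eps^{-O(z)}\log k$ such that $\|f(y)-f(x)\|=(1\pm\eps)\|y-x\|$ for every $y\in\R^{d+1}$ and every $x\in M'$. Because the ambient point $y$ may be any potential center (in particular any zero-extension $s'$), this gives $\cost^z(f(M'),f(\calS'))=(1\pm O(\eps))\cost^z(M',\calS')$.

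\textbf{Stage 3 (VC-dimension coreset in low dimension and lift).} In $\R^m$ the range space of $k$ balls has VC-dimension $D=O(km\log k)=k\eps^{-O(z)}\log^2 k$. Applying \cref{thm:smallcoreset} to $f(M')$ produces a weighted coreset $\coreset_{\mathrm{low}}\subseteq f(M')$ with offset $F$, of size $2^{O(z\log z)}kD\log(D)\polylog(1/\eps)/\eps^5 = k^2\log^{O(1)}k\cdot \eps^{-O(z)}$, in time $k(D/\eps)^{O(D)}\cdot T = \exp(\eps^{-O(z)}k\log^3 k)$ (plus the cost of Stage~1 for the preprocessing term in \cref{thm:smallcoreset}). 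Since $\coreset_{\mathrm{low}}\subseteq f(M')$, each of its points has a unique preimage in $M'$; define $\coreset\subseteq M'\subset \R^{d+1}$ as this preimage, inheriting the same weights. For any $\calS\in (\R^d)^k$ with zero-extension $\calS'$, chaining Stage~2 (terminal embedding applied pointwise between $\coreset\subseteq M'$ and the ambient center $s'$), the coreset guarantee of Stage~3, Stage~2 applied to $M'$ itself, and the Stage~1 identity yields
\[\cost^z(\coreset,\calS')+F=(1\pm O(\eps))\cost^z(P,\calS).\]
A final rescaling $\eps\leftarrow \eps/c$ for a universal constant $c$ delivers the claimed $(1\pm\eps)$ bound.

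\textbf{Main obstacles.} The principal care needed lies at the interface between the three stages: (i) justifying that a \emph{partition}-coreset guarantee from \cref{thm:extension} is strong enough to feed into the \emph{standard}-coreset machinery of \cref{thm:smallcoreset} (handled by the closest-partition argument above); (ii) ensuring that the terminal embedding, which by design preserves distances only between $M'$ and ambient points, still produces the required distortion for \emph{every} zero-extension center $s'$ (which is exactly the terminal property, so this is automatic); and (iii) propagating the additive offset $F$ correctly, which is the reason a single offset suffices and no second offset appears in the final identity. The size and running-time bounds follow by plugging the parameters $T=k^{\eps^{-O(z)}}$, $m=\eps^{-O(z)}\log k$, and $D=k\eps^{-O(z)}\log^2 k$ into the expressions above.
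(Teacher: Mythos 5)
Your proof is correct and follows essentially the same three-stage pipeline as the paper (partition coreset via \cref{thm:extension}, deterministic terminal embedding of \cite{MahabadiMMR18}, then \cref{thm:smallcoreset} in the low-dimensional image, with a preimage lift at the end). If anything you are slightly more careful than the paper at two spots it glosses over: the ``sandwich'' argument that a partition coreset implies the standard center-cost guarantee (the paper simply asserts Equation~(\ref{eq:coreset1})), and the explicit four-term chaining of distortions in Stage~3 where the paper defers to Huang--Vishnoi as a black box. The parameter accounting (target dimension $m=\eps^{-O(z)}\log k$, VC-dimension $D=k\eps^{-O(z)}\log^2 k$, resulting size and the $\exp(\eps^{-O(z)}k\log^3 k)$ term) matches. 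No gaps.
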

\begin{proof}
To reduce the dimension, we use our extension-partition coreset construction combined with a terminal embedding, as for instance in Huang and Vishnoi~\cite{HuV20}. 

We start by computing a $(\eps, k, z)$-extension partition coreset $\coreset_1$ of $P$ of size $k^{\eps^{-O(z)}}$, using \cref{thm:extension}, in time $k^{\log \log k + \varepsilon^{-O(z)}} \cdot \poly(nd)$.

In particular, we have for any solution $\calS \in (\R^d)^k$ with zero-extension $\calS'$
\begin{equation}
\label{eq:coreset1}
\cost(\coreset_1, \calS') = (1\pm \eps) \cost(P, \calS)
\end{equation}

Now, we compute a terminal-embedding $\Pi$ of $\coreset_1$ onto $\eps^{-O(z)} \log k$ dimensions. 
Using the techniques from \cite{HuV20}, any coreset of $\Pi(\coreset_1)$ after projection is a coreset of $\coreset_1$.

The terminal embedding can be done deterministically in time $k^{\eps^{-O(z)}}$ using the algorithm from \cite{MahabadiMMR18} combined with \cref{thm:EIO}. 

Now, we can compute a coreset of the embedding using \cref{thm:smallcoreset}. Since in dimension $O(\eps^{-O(z)} \log k)$ the VC-dimension of $(\Pi(\coreset_1),\calB)$ is at most $\eps^{-O(z)}k \log^2 k$, this is done deterministically in time 
$k^{\log \log k + \varepsilon^{-O(z)}} \cdot \poly(nd) +  \exp\lpar \eps^{-O(z)} k \log^{3} k\rpar$.
The pre-image $\coreset$ of that coreset is, by the techniques of Huang and Vishnoi, a coreset for $\coreset_1$ in $\R^{d+1}$. Therefore, we have:
\begin{equation}
\label{eq:coreset2}
\cost(\coreset, \calS') + \offset = (1\pm \eps) \cost(\coreset_1, \calS')
\end{equation}

Combining \cref{eq:coreset1} and \cref{eq:coreset2} concludes the proof.
\end{proof}

A direct application of that corollary allows to compute a $(1+\eps)$-approximation to $(k,z)$-clustering in deterministic FPT time, and proves our main \cref{thm:apx}.

\begin{proof}[Proof of \cref{thm:apx}]
The algorithm is simply the following: compute the dimension reduction of \cref{thm:dimred} onto dimension $d'$, then the coreset $\coreset$ provided by \cref{cor:main-coreset}, enumerate all possible $k$ clustering of it, keep the best one and compute the center in $\R^d$ associated with this clustering.

That way, we compute the optimal solution for $\coreset$ in $\R^{d'+1}$ that has all its last coordinates equal to $0$, $\calS'$. Let $\calS$ be the corresponding solution in $\R^{d'}$. Furthermore, let $\opt$ be the optimal solution for $P$, with zero-extension $\opt'$. We have:
\begin{align*}
(1 - \eps) \cost(P, \calS) &\leq \cost(\coreset, \calS') + \offset\\
&\leq \cost(\coreset, \opt')+\offset\\
&\leq (1+ \eps)\cost(P, \opt)
\end{align*}
Hence, $\cost(P, \calS) \leq \frac{1+\eps}{1-\eps} \cost(P, \opt)$, and $\calS$ is indeed a $(1+O(\eps))$-approximation.

To enumerate all possible $k$ clustering, we use the algorithm of \cite{inaba1994applications} who showed that for a set of $m$ points in $\R^{d'}$ there are only $n^{O(kd')}$ many possible clustering -- and provided an efficient way of doing so.

The running time is therefore: $k^{\log \log k + \varepsilon^{-O(z)}} \cdot \poly(nd) + \exp\lpar \eps^{-O(z)} k \log^3 k\rpar$ to compute $\coreset$, 
$|\coreset|^{k \log(k) \eps^{-O(z)}}$ to enumerate all $k$ clustering of $\coreset$ using \cite{inaba1994applications}, and then $\poly(k d' /\eps) = \poly(k/\eps)$ to compute the cost of each solution. Finally, the running time to compute the optimal clusters in $\R^d$ given the optimal partition in $\R^{d'}$ is $\poly(nd)$. Hence, in total, the running time is
\[k^{\log \log k + \varepsilon^{-O(z)}} \cdot \poly(nd)  + 2^{k \log^3 k \cdot \eps^{-O(z)}},\]
which concludes.
\end{proof}

\vspace{0.5em}

\subsection{Application to Minor-Excluded Graphs, Graphs with Bounded  Treewidth, and Clustering of Curves.}
\vspace{0.5em}

Besides Euclidean spaces, 
\cref{thm:smallcoreset} can be applied to metrics for which the VC-dimension of balls is bounded. In that way, it improves over the recent result \cite{focs22}, by saving a factor $k$ in the construction of coreset via uniform sampling, and allowing deterministic implementations.

One such class of metric spaces is the ones induced by graphs excluding a minor $H$: in that case, Bousquet and Thomass{\'e}~\cite{bousquet2015vc} showed that the VC-dimension of $(P, \calB)$ is $O(|H| k\log k)$.\footnote{Formally, they show the result for $k=1$. \cite{BravermanFLR19} showed that extending the result to any $k$ looses a $k \log k$ factor.} Hence applying directly \cref{thm:smallcoreset} yields a coreset of size $\tilde O\lpar\frac{k^2 |H| \log^2 k}{\eps^{5}}\rpar$. 
The construction is simplified compared to \cite{Cohen-AddadSS21}, at the price of losing a factor $k \eps^{-1}$ (or $\eps^{-3}$). On the other hand, the dependency in the size of the minor is greatly improved: it is not even specified precisely in \cite{Cohen-AddadSS21}, although it seems to be at least doubly exponential. 
This also improve over the $\tilde O\lpar |H|\poly(k/\eps)\rpar$ of \cite{focs22}, by polynomial factors.
{\setlength{\emergencystretch}{2.5em}\par}

Since graphs with treewidth $t$ exclude the complete graph with $t$ vertices, the result allows to construct coreset of size  $\tilde O\lpar\frac{k^2 t \log^2 k}{\eps^{5}}\rpar$ for those -- which needs to be compared with the randomized bound $\tilde O\lpar kt \cdot \min\lpar \eps^{-2} + \eps^{-z}, \eps^{-2}k\rpar \rpar$ of \cite{Cohen-AddadSS21}.
As we will see, the running time is dominated by the construction of an $\eps$-set-approximation. Giving up on the determinism, this can be made fast using random sampling (see Chazelle~\cite{Cha01}): this yields a running time $\tilde O(|X| \cdot |P| \cdot k/\eps)$.
Again, the construction is greatly simplified compared to \cite{Cohen-AddadSS21}, to the price of a a slightly worse coreset size.

\paragraph{Clustering of Curves.}
Driemel, Nusser, Phillips, and Psarros showed in~\cite{driemel2021vc} VC dimension bounds for clustering of curves, under the Fr{\'e}chet and Hausdorff distances. 
The input of that problem is a set of curves, each consisting of $m$ segments in $\R^d$, and the centers are restricted to be curves consisting of $\ell$ segments in $\R^d$. 

They consider two types of distances for curves: the Hausdorff distance between two curves $X$ and $Y$ is defined as follows: let $d_{\vec{H}}(X,Y) = sup_{x\in X} \inf_{y\in Y} \|x-y\|$. The Hausforff distance between $X$ and $Y$ is $\max(d_{\vec{H}}(X,Y), d_{\vec{H}}(Y,X))$. The Fr{\'e}chet distance is slightly more intricate: for that, we consider parametrized curves, namely a curve is a function $X :  [0,1] \rightarrow \R^d$ whose graph is made of $\ell$ consecutive segments. The Fr{\'e}chet distance between two parametrized curves $X$ and $Y$ is defined as $\min_{f : [0,1] \rightarrow [0,1]} \max_{t \in [0, 1]} \|X(f(t)) - Y(t)\|$, where the function $f$ is restricted to be continuous, non decreasing and with $f(0) = 0, f(1) = 1$.

For both those distances, Driemel, Nusser, Phillips and Psarros~\cite{driemel2021vc} showed that the VC dimension of $(P, \calB)$ is bounded by $O\left(k d^2 \ell^2 \log(d \ell m) \log k\right)$.
Hence, \cref{thm:smallcoreset} shows the existence of a coreset of size $\frac{k^2 d^2 \ell^2 \log (d \ell m) \log^2 k}{\eps^{O(1)}}$. This improves over \cite{buchin2021coresets} and \cite{focs22}, although our construction is not polynomial time in that case.

\vspace{0.5em}
\subsection{A Sketch of the Construction}
\vspace{0.5em}

As explained in the introduction (\cref{sec:techniques}), our construction is inspired by the one of Chen~\cite{Chen09} and of \cite{Cohen-AddadSS21}. As both of those result, we start by computing a constant-factor approximation, and placing rings of exponential radius around each of its center.

We first use a careful preprocessing step to reduce the number of such rings from $O(k \log n)$ as in \cite{Chen09} to $O(k)$. For this, we use a particularly structured constant-factor approximation, that allows us to simply discard the outer rings.

To build coresets in the remaining rings, both \cite{Chen09} and \cite{Cohen-AddadSS21} use uniform sampling. 
Instead, we show that building an $\eps'$-set-approximation of the set system formed by $k$ balls is actually enough: this is doable deterministically, which concludes our algorithm.

The size of an $\eps'$-set-approximation can be bounded via the VC-dimension of the set system: 
\begin{theorem}[Theorem 4.5 in Chazelle~\cite{Cha01}]
\mbox{}\label{thm:chazelle}
Let $(X, \calR)$ be a range space of VC-dimension $d$. Given any $r > 0$, an $r$-set-approximation for $(X, \calR)$ of size $O\left(d r^{-2} \log (d/r)\right)$ can be computed deterministically in time $d^{O(d)} \left(r^{-2} \log (d/r)\right)^d |X|$.
\end{theorem}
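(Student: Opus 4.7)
The plan is to follow the classical iterative-halving proof together with a merge-and-reduce scheme to get the linear $|X|$-dependence.

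First I would reduce the problem to the following \emph{halving} subroutine: given a finite multiset $Y\subseteq X$ of even size, compute a partition $Y=Y_0\sqcup Y_1$ with $|Y_0|=|Y_1|$ such that for every range $R\in \calR$,
\[
\bigl| |Y_0\cap R|-|Y_1\cap R|\bigr| \;\le\; c\sqrt{|Y|\, d \log |Y|}
\]
for a universal constant $c$. The existence of such a partition follows from a Chernoff bound applied to a uniformly random $\pm1$ coloring of $Y$, combined with a union bound over the trace system $\{R\cap Y : R\in \calR\}$, whose cardinality is at most $O(|Y|^d)$ by the Sauer--Shelah lemma.

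To make the halving deterministic, the second step would be to derandomize via a pessimistic estimator: sum the two Chernoff moment-generating functions $\exp(\pm t\sum_{y\in R} \chi(y))$ over all $O(|Y|^d)$ traces. The initial value is $<1$ by the Chernoff calculation, and one reveals $\chi(y)$ one element at a time, always picking the value that does not increase the conditional estimator. The final coloring is guaranteed to achieve the target discrepancy. Each conditional-probability step costs $O(|Y|^d)$ to recompute the estimator, so a single halving takes time $d^{O(1)}|Y|^{d+1}$. Keeping the better half and iterating: after $t$ halvings we have $|Y_t|=|Y|/2^t$, and the accumulated set-approximation error on the surviving $Y_t$ telescopes as
\[
\sum_{i=0}^{t-1} \frac{c\sqrt{|Y_i|\,d\log |Y_i|}}{|Y_i|} \;=\; O\!\left(\sqrt{d\log|Y_t|/|Y_t|}\right).
\]
Stopping as soon as this is $\le r$ leaves a set of size $m_0=O(d r^{-2}\log(d/r))$, which is an $r$-set-approximation of the original $Y$.

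Applying the halving directly to $X$ would cost $|X|^{d+1}$, which exceeds the target. The third step is therefore a merge-and-reduce: partition $X$ into $|X|/(2m_0)$ blocks of size $2m_0$, halve each block down to an $(r/\log|X|)$-approximation of size $m_0$, then pair the resulting approximations, merge each pair into a set of size $2m_0$, halve again, and recurse up a balanced binary tree. Using the composition fact that an $r_1$-approximation of an $r_2$-approximation is an $(r_1+r_2)$-approximation, the errors along the $O(\log(|X|/m_0))$ levels add to at most $r$ after rescaling the per-level tolerance by $1/\log|X|$ (absorbed in the $\log(d/r)$ factor). The total cost is $|X|/m_0$ leaf operations plus the same count at higher levels, each operation costing $d^{O(d)} m_0^{d}$ via the halving derandomization applied to sets of size $O(m_0)$; summing gives the claimed $d^{O(d)}(r^{-2}\log(d/r))^d\,|X|$.

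The main obstacle is the error bookkeeping across the merge-and-reduce tree: one needs the per-level tolerance to shrink so the geometric series still sums to $r$, while ensuring that the estimator evaluation cost at an internal node is governed only by the trace system on that node's current $O(m_0)$-size set (not on the original $X$), which is where Sauer--Shelah on each intermediate set is essential. The rest — Sauer--Shelah, Chernoff, and the conditional-probabilities derandomization — are routine once the merge-and-reduce skeleton is in place.
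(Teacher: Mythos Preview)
The paper does not prove this statement; it is quoted as Theorem~4.5 from Chazelle's book and used as a black box, so there is no paper-side argument to compare against.

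Your sketch is the standard proof: low-discrepancy halving via Sauer--Shelah plus a Chernoff-type pessimistic estimator, then merge-and-reduce to make the dependence on $|X|$ linear. One point needs tightening. Rescaling the per-level tolerance uniformly by $1/\log|X|$ is \emph{not} absorbed in the $\log(d/r)$ factor: an $(r/\log|X|)$-approximation has size $\Theta\bigl(d\,(r/\log|X|)^{-2}\log(\cdot)\bigr)$, so your intermediate sets pick up a $\log^2|X|$ blow-up, which then contradicts both the claimed output size $m_0=O(dr^{-2}\log(d/r))$ and the stated running time. The classical fix is to let the per-level tolerances decay geometrically toward the root (equivalently, let the working-set sizes grow geometrically toward the leaves): the errors then sum to $O(r)$ while every intermediate size and per-node cost is bounded independently of $|X|$. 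With that adjustment your outline matches Chazelle's argument.
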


\begin{theorem}[Theorem 4.9 in Chazelle~\cite{Cha01}]
Let $(X, \calR)$ be a range space of VC-dimension $d$. Given any $r > 0$, a uniform sample of size $O \lpar r^{-2} (d\log (dr) + \log(1/\delta))\rpar$ is a $r$-set-approximation for $(X, \calR)$ with probability $1-\delta$.
\end{theorem}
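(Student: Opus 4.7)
The plan is to use the classical Vapnik–Chervonenkis symmetrization (double-sampling) argument. Let $A$ be a uniform sample of size $m$ from $X$. For a single fixed range $R\in\calR$, the indicator variables $\mathds{1}[x\in R]$ for $x\in A$ are i.i.d.\ Bernoulli with mean $|R|/|X|$, so Hoeffding's inequality yields
\[
\Pr\!\left[\,\Big|\tfrac{|A\cap R|}{|A|}-\tfrac{|R|}{|X|}\Big|>r\,\right]\le 2e^{-2mr^2}.
\]
The issue is that $\calR$ may be infinite, so I cannot directly union-bound over all ranges. The core idea is to replace the infinite family $\calR$ by its trace on a \emph{finite} auxiliary sample, whose size is controlled by the VC-dimension.

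Next I would introduce a ``ghost'' sample $A'$ of the same size $m$, drawn independently and uniformly from $X$. A standard symmetrization step shows that if $A$ fails to be an $r$-set-approximation with probability $>2\delta$, then with probability $>\delta$ there is some $R\in\calR$ for which $|A\cap R|/m$ and $|A'\cap R|/m$ differ by at least $r/2$; the argument is that conditionally on $A$ being bad for $R$, the ghost sample $A'$ is close to the true frequency of $R$ with probability at least $1/2$, provided $m=\Omega(1/r^2)$ (again by Hoeffding/Chebyshev). Thus it suffices to bound
\[
\Pr\!\left[\,\exists R\in\calR:\ \Big|\tfrac{|A\cap R|}{m}-\tfrac{|A'\cap R|}{m}\Big|\ge \tfrac{r}{2}\,\right].
\]

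Now I condition on the union $A\cup A'$ of size $2m$. By the Sauer–Shelah lemma, the number of distinct sets of the form $(A\cup A')\cap R$, $R\in\calR$, is at most $\sum_{i\le d}\binom{2m}{i}\le (2em/d)^d$. So there are only $(2em/d)^d$ effectively different ranges to consider. Conditional on $A\cup A'$, the partition into $A$ and $A'$ is a uniform random split; for each fixed effective range, the deviation $|A\cap R|/m-|A'\cap R|/m$ is a sum of $\{-1,+1\}$ values from a random permutation, and Hoeffding's inequality for sampling without replacement bounds the deviation probability by $2e^{-cmr^2}$ for some constant $c>0$. A union bound over the $(2em/d)^d$ effective ranges then gives overall failure probability at most $2(2em/d)^d\,e^{-cmr^2}$.

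Choosing $m$ so that $d\log(m/d)-cmr^2\le -\log(1/\delta)-O(1)$ yields $m=O\bigl(r^{-2}(d\log(d/r)+\log(1/\delta))\bigr)$, matching the claimed bound up to the precise form of the logarithmic factor. The main subtlety in carrying this out carefully is the conditional-probability manipulation in the symmetrization step (ensuring the factor-of-two loss there is tight) and the clean invocation of Hoeffding's inequality in the permutation model, where the coordinates are exchangeable but not independent; both are standard but require some bookkeeping. Once these are in place, the parameter tuning is routine and the stated bound follows.
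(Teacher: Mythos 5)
The paper does not prove this theorem; it is imported verbatim as Theorem~4.9 of Chazelle's book, so there is no in-paper proof to compare against. Your proposal is the classical Vapnik--Chervonenkis double-sampling/symmetrization argument: ghost sample, Sauer--Shelah to replace the infinite family by at most $(2em/d)^d$ trace ranges, Hoeffding in the random-split (permutation) model, and a union bound. This is precisely the proof underlying Chazelle's theorem, and your sketch is correct, including the needed $m=\Omega(1/r^2)$ condition so that the symmetrization step loses only a constant factor. One minor remark: the paper's bound $d\log(dr)$ is almost certainly a typo for $d\log(d/r)$, which is what your calculation (and Chazelle's original statement) actually produces; you already flagged this in your final sentence.
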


We will compute $\eps$-set-approximation for the set system $(P, \calB)$, where $P$ is the set of input points and $\calB$ is defined in the opening paragraph.

\subsection{The algorithm}
As sketched, the algorithm partitions the input points into rings, and computes then $\eps$-set-approximation in each ring. We state here some useful definitions.

Fix a metric space $(X, \dist)$, positive integers $k, z$ and a set of clients $P$. For a solution $\calS$ of $(k,z)$-clustering on $P$ and a center $c \in \calS$, $c$'s cluster consists of all points closer to $c$ than to any other center of $\calS$. Note that this is different from the dimension reduction setting of \cref{sec:dimred}, where the assignment are not necessarily towards the closest center.

Fix as well some $\eps > 0$, and let $\constantApprox$ be any solution for $(k, z)$-clustering on $P$ with $k$ centers. 
Let $C_1, ..., C_{k}$ be the clusters induced by the centers of $\constantApprox$.

\begin{itemize}
\item the average cost of a cluster $C_i$ is $\Delta_{C_i} = \frac{\cost(C_i, \constantApprox)}{|C_i|}$
\item For all $i, j$, the \textit{ring} $R_{i,j}$ is the set of points $p \in C_i$ such that 
\[2^j \Delta_{C_i} \leq \cost(p, \constantApprox) \leq 2^{j+1} \Delta_{C_i}.\] 
\item The \textit{inner ring} $\inner(C_i)$ (resp. \textit{outer ring} $\out(C_i)$) of a cluster $C_i$ consists of the points of $C_i$ with cost at most $\left(\nicefrac \eps z\right)^z \Delta_{C_i}$ (resp. at least $\left(\nicefrac z \eps \right)^{2z} \Delta_{C_i}$), i.e., $\inner(C_i) := \cup_{j \leq z\log(\eps/z)} R_{i,j}$ and $\out(C_i) := \cup_{j > 2z\log(z/\eps)}R_{i,j}$. The \textit{main ring} $\main(C_i)$ consists of all the other points of $C_i$. For a solution $\calS$, we let $\inner^\calS$ and $\out^\calS$ be the union of inner and outer rings of the clusters induced by $\calS$.
\end{itemize}

The final algorithm is as follows:

\textbf{Input:} A metric space $(X, \dist)$, a set $P \subseteq X$, $k, z > 0$, $\eps$ such that $0 < \eps < 1/3$, and $\eps' \leq \eps$.\\
\textbf{Output:} A coreset with offset. Namely, a set of points $P' \subseteq X$, a weight function $w: P' \mapsto \R_+$
and an offset value $\offset$ such that for any set of $k$ centers $C$, $\cost(P, C) = (1\pm\eps) \cost(P', C) + \offset$. 
\begin{enumerate} 
\item \textbf{Greedy seeding:} 
\begin{enumerate}
\item Compute a $c_\constantApprox$-approx $\constantApprox$ to the $(k, z)$-clustering problem for $P$.\footnote{Any solution with $O(k)$ clusters and
cost $O(\opt)$ can be actually used for this step, in order to have a faster algorithm.}
\item Initialize $\seeded := \constantApprox$.
\item While: $\cost(\seeded) \geq \eps \cost(\constantApprox) / c_\constantApprox$ 
  and there is a candidate center $c$ such
  that
  \\
  $\left(1-\frac{\eps}{k \cdot c_\constantApprox}\right)\cost(\seeded) \ge \cost(\seeded \cup \{c\})$ do:\\
  $\seeded \gets \seeded \cup \{c\}$.
\end{enumerate}

\item If $\cost(\seeded) \leq \eps\cost(\constantApprox)/c_\constantApprox$,
  then output the following coreset and stop:
  $\seeded$ and $w : \seeded \mapsto \R_+$ where $w(c)$ corresponds to the number of
  points served by center $c$ in $\seeded$, and offset $\offset = 0$. 
  
\item\label{step:offset} Let $\offset := \cost(\out(\seeded), \seeded)$ be the cost of the outer points of $\seeded$.  
\item Set the weights of all the centers of $\seeded$ to 0.
\item\label{step:ig} \textbf{Reducing the number of rings:}:\\ For each cluster $C$ with center $c$ of $\seeded$, remove all the points in $\inner(C) \cup \out(C)$  and increase the weight of 
  $c$ by the number of removed points.
  Let $I_G$ be the instance created with those weights. 
\item \textbf{Sampling from the remaining rings:} For every cluster $C_i$, and every $j$ , compute an $\eps'$-set-approximation $\coreset_{i,j}$ of the set system $(R_{i,j}, \calB)$. 
Weight the points of $\coreset_{i,j}$ by $\frac{R_{i,j}}{|\coreset_{i,j}|}$.
  
\item \textbf{Output:}
  \begin{itemize}
  \item An instance $I_G$ created at Step~\ref{step:ig} 
  \item A coreset consisting of $\seeded \cup \coreset_{i,j}$, with offset $\offset$ defined at Step~\ref{step:offset} and weights for $\seeded$ defined throughout the algorithm, weights for $\coreset_{i,j}$ defined by the sampling procedure.
  \end{itemize}
\end{enumerate}

If the condition $\cost(\seeded) \leq \eps\cost(\constantApprox)/c_\constantApprox$ is met, then we say that $\seeded$ is
\emph{low-cost}; otherwise, we say it is \emph{locally stable}.

We will need to compare cost of solution in the original instance $I$, and in the one computed by the algorithm: $\cost$ will denote the cost in $I$ while 
 $\cost_{I_G}$ the one for instance $I_G$.

\subsection{Proof of the Greedy Seeding}
The outcome of the greedy seeding step, $I_G$, satisfies the following lemma:
\begin{lemma}
\label{lem:preprocess}
Suppose $\seeded$ is locally stable.
Let $\offset = \sum\limits_{C \in \seeded} \sum\limits_{p \in \out(C)} \cost(p, \seeded)$ be the cost of points in outer rings.
For every solution $\calS$, it holds that 
$$|\cost(\calS) - (\cost_{I_G}(\calS) + \offset)| \leq \eps \cost(\calS).\qedhere$$
\end{lemma}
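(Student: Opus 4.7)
The plan is to compare $\cost(\calS)$ and $\cost_{I_G}(\calS) + \offset$ by decomposing the difference over the points of $P$ according to which ring of $\seeded$ they belong to. Writing $c(p)$ for the center of $p$'s cluster in $\seeded$, every point $p$ in a main ring contributes $\cost(p,\calS)$ to both sums, and every $p$ in $\inner^\seeded \cup \out^\seeded$ is represented in $I_G$ by the copy of $c(p)$ while $\out^\seeded$ also contributes $\cost(p,\seeded)$ to $\offset$. Thus
\[
\cost(\calS) - \cost_{I_G}(\calS) - \offset \;=\; T_I + T_O,
\]
where $T_I = \sum_{p \in \inner^\seeded}[\cost(p,\calS) - \cost(c(p),\calS)]$ and $T_O = \sum_{p \in \out^\seeded}[\cost(p,\calS) - \cost(c(p),\calS) - \cost(p,\seeded)]$. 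It suffices to show $|T_I|, |T_O| \leq O(\eps)\cost(\calS)$ and rescale $\eps$.

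Two consequences of local stability will be used throughout. First, since adding any single center decreases $\cost(\seeded)$ by at most $(\eps/(kc_\constantApprox))\cost(\seeded)$, summing over the $k$ centers of $\calS$ yields $\sum_p \max(0, \cost(p,\seeded) - \cost(p,\calS)) \leq (\eps/c_\constantApprox)\cost(\seeded) \leq \eps\cost(\opt) \leq \eps\cost(\calS)$. Second, $\cost(\seeded) \leq (1+\eps)\cost(\calS)$. Also, a basic triangle estimate will be applied to each cluster $C$: for any $p \in C$, by the (regular) triangle inequality and convexity, $\cost(c_C,\calS) \leq 2^{z-1}(\cost(p,\seeded) + \cost(p,\calS))$, hence averaging over $p\in C$ gives $|C|\cost(c_C,\calS) \leq 2^{z-1}(\cost(C,\seeded) + \cost(C,\calS))$, and summing gives $\sum_C |C|\cost(c_C,\calS) \leq 2^z \cost(\calS)$.

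For $T_I$ I will apply the second form of Lemma~\ref{lem:weaktri} twice, taking $a$ to be the best center of $\calS$ for $p$ and for $c(p)$ respectively, to get $|\cost(p,\calS) - \cost(c(p),\calS)| \leq \eps(\cost(p,\calS) + \cost(c(p),\calS)) + O((z/\eps)^{z-1})\cost(p,c(p))$. Summing over inner points, the first group of terms is $\eps(\cost(\calS) + \sum_{p\in \inner}\cost(c(p),\calS)) \leq O(\eps \cdot 2^z)\cost(\calS)$ using the triangle bound above, while the second group uses $\sum_{p\in\inner}\cost(p,c(p)) \leq (\eps/z)^z \cost(\seeded)$ (the inner-ring definition) to give $O((z/\eps)^{z-1})(\eps/z)^z\cost(\seeded) = O(\eps)\cost(\calS)$.

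For $T_O$ the key structural input is the \emph{scarcity} of outer points: since $\cost(p,c(p)) \geq (z/\eps)^{2z}\Delta_{C(p)}$ and $\sum_{p\in C}\cost(p,c_C) = |C|\Delta_C$, we have $|\out(C)| \leq (\eps/z)^{2z}|C|$. I will split $T_O = \sum_{p\in\out}[\cost(p,\calS) - \cost(p,\seeded)] - \sum_{p\in\out}\cost(c(p),\calS)$. The second piece satisfies $\sum_{p\in\out}\cost(c(p),\calS) \leq (\eps/z)^{2z}\sum_C|C|\cost(c_C,\calS) \leq O(2^z(\eps/z)^{2z})\cost(\calS) = O(\eps)\cost(\calS)$, combining the scarcity bound with the triangle estimate above. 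For the first piece, the positive part of $[\cost(p,\seeded) - \cost(p,\calS)]$ is bounded globally by $\eps\cost(\calS)$ via local stability, while the positive part of $[\cost(p,\calS) - \cost(p,\seeded)]$ is bounded using the weak triangle inequality $\cost(p,\calS) \leq (1+\eps)\cost(p,c(p)) + O((z/\eps)^{z-1})\cost(c(p),\calS)$ and the same scarcity-plus-triangle estimate on $\sum_{\out}\cost(c(p),\calS)$, again giving $O(\eps)\cost(\calS)$.

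The main obstacle is precisely this last lower bound on $T_O$: a direct use of the weak triangle inequality introduces a multiplicative $(z/\eps)^{z-1}$ which neither scarcity alone nor local stability alone can absorb. The balance works out because the scarcity factor $(\eps/z)^{2z}$ is chosen strictly smaller than $(\eps/z)^{z-1}$ in the outer-ring definition, so multiplying the two leaves an $\eps$-factor to spare; combining these pieces and rescaling $\eps$ finishes the proof.
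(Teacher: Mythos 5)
Your proof is correct, and for the outer‑ring points it follows essentially the same route as the paper: Markov's inequality to show $|\out(C)| \leq (\eps/z)^{2z}|C|$, an averaging argument inside the cluster to control $|\out(C)|\cost(c_C,\calS)$ via $\cost(C,\seeded) + \cost(C,\calS)$, the weak triangle inequality (Lemma~\ref{lem:weaktri}) to compare $\cost(p,\calS)$ with $\cost(p,\seeded)$, and local stability (via Lemma~\ref{lem:greedy}) in the form $\sum_p \max(0,\cost(p,\seeded)-\cost(p,\calS))\leq \eps\,\opt$ for the direction in which the weak triangle inequality is not enough. Your ``average over the whole cluster then multiply by scarcity'' step is a cleaner rephrasing of the paper's ``partition $\inner(C)\cup\main(C)$ into parts of size $s$ and assign to outer points'' argument; the two give the same bound.

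The one genuine difference is that you explicitly treat the inner‑ring term $T_I$, and you are right to do so. Step~\ref{step:ig} of the algorithm replaces \emph{both} $\inner(C)$ and $\out(C)$ by copies of $c$, so the decomposition really is $\cost(\calS) - \cost_{I_G}(\calS) - \offset = T_I + T_O$. The paper's proof asserts that ``all points in $C\setminus \out(C)$ have same cost in the original instance and in $I_G$'' and thereby drops $T_I$; that is inconsistent with Step~\ref{step:ig} as written. Your bound on $T_I$, using Lemma~\ref{lem:weaktri} together with $\sum_{p\in\inner(C)}\cost(p,c_C)\le(\eps/z)^z\cost(C,\seeded)$ from the inner‑ring definition, is exactly the missing piece that makes the statement hold for the instance $I_G$ as constructed. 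You implicitly rely on $\cost(\seeded)\leq(1+\eps)\cost(\calS)$, which indeed follows from Lemma~\ref{lem:greedy} with $Q=P$, and you should say so, but otherwise the argument is complete.
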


Before proving \cref{lem:preprocess}, we show that the solution $\seeded$ has the
following properties:
\begin{lemma}
\mbox{}\label{lem:greedy} 
  $\seeded$ contains at most $O(k \cdot z\cdot \log(1/\eps)/\eps)$
  centers and, in the case where $\seeded$ is locally stable,
for any solution $\calS$ and any subset $Q \subseteq P$,
    $ \cost(Q, \seeded) \le  \cost(Q, \calS) + \eps \opt$.   
\end{lemma}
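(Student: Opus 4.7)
The plan is to split the proof into the size bound and the approximation bound, both of which follow from a careful accounting of how the cost evolves during the greedy seeding loop.

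For the size bound, I will first observe that $\cost(\constantApprox) \leq c_{\constantApprox}\opt$, so the loop starts with at most $k$ centers and cost at most $c_{\constantApprox}\opt$. Each iteration that adds a new center $c$ does so only if $\cost(\seeded \cup \{c\}) \leq (1-\eps/(k c_{\constantApprox}))\cost(\seeded)$, so the cost strictly contracts by a multiplicative factor. Thus after $T$ iterations the cost is at most $(1-\eps/(k c_{\constantApprox}))^T \cdot c_{\constantApprox}\opt \leq \exp(-T\eps/(k c_{\constantApprox})) \cdot c_{\constantApprox}\opt$, and once this drops below $\eps\cost(\constantApprox)/c_{\constantApprox}\leq \eps\opt$, the loop's stopping condition triggers. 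Solving for $T$ yields $T = O(k c_{\constantApprox} \log(c_{\constantApprox}/\eps)/\eps)$, and using the $z$-dependence of the constant-factor approximation available for $(k,z)$-clustering gives $|\seeded| \le k + T = O(k z \log(1/\eps)/\eps)$.

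For the approximation bound, assume $\seeded$ is locally stable, so that for every candidate center $c$, $\cost(\seeded \cup \{c\}) > (1-\eps/(k c_{\constantApprox}))\cost(\seeded)$, equivalently $\cost(\seeded) - \cost(\seeded \cup \{c\}) \leq (\eps/(k c_{\constantApprox}))\cost(\seeded)$. Given an arbitrary solution $\calS = \{s_1,\dots,s_{|\calS|}\}$ with $|\calS| \le k$, I will add its centers to $\seeded$ one by one and argue that the total cost decrease is small. The key observation (a pointwise submodularity argument) is that for any point $p$ and any set $T$, $\cost(p,\seeded \cup T) - \cost(p,\seeded \cup T \cup \{s\}) \leq \cost(p,\seeded) - \cost(p,\seeded \cup \{s\})$, because $\min(a,b_1)-\min(a,b_1,b_2) \le \max(0,a-b_2)$. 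Summing over points shows the marginal decrease of adding $s_i$ on top of $\seeded \cup \{s_1,\dots,s_{i-1}\}$ is at most the marginal decrease of adding $s_i$ to $\seeded$ alone, which is bounded by $(\eps/(k c_{\constantApprox}))\cost(\seeded)$ by local stability.

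Summing the telescoping differences over $i = 1,\dots,|\calS|$ gives $\cost(P,\seeded) - \cost(P,\seeded \cup \calS) \leq |\calS|\cdot (\eps/(k c_{\constantApprox})) \cost(\seeded) \leq \eps \cost(\seeded)/c_{\constantApprox} \leq \eps\opt$, where the last step uses $\cost(\seeded) \leq \cost(\constantApprox) \leq c_{\constantApprox}\opt$. For a general subset $Q \subseteq P$, each per-point contribution to the decrease is non-negative, so restricting the sum to $Q$ only shrinks it: $\cost(Q,\seeded) - \cost(Q,\seeded \cup \calS) \leq \cost(P,\seeded) - \cost(P,\seeded \cup \calS) \leq \eps\opt$. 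Combining with the trivial monotonicity $\cost(Q,\seeded \cup \calS) \leq \cost(Q,\calS)$ yields $\cost(Q,\seeded) \leq \cost(Q,\calS) + \eps\opt$, as required.

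The main obstacle I anticipate is getting the per-subset statement right, since local stability is a global statement about $\cost(\seeded)$ while the conclusion must hold for every $Q$; the submodularity-style pointwise inequality above is precisely what bridges these two, and making it rigorous is the heart of the argument.
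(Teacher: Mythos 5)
Your proof is correct, and for the approximation bound it takes a genuinely different route from the paper's. The paper assigns each point $p \in Q$ to its serving center $s \in \calS$ (its $\calS$-cluster $C_s$), observes that the decrease from adding $s$ to $\seeded$ is at least $\sum_{p \in C_s \cap Q} \bigl(\cost(p,\seeded) - \cost(p,\calS)\bigr)$ because $\cost(p,\calS) = \cost(p,s) \ge \cost(p, \seeded \cup \{s\})$ for such $p$, invokes local stability once per $s$, and sums over the $k$ centers; the restriction to $Q$ is built in from the start. You instead prove a pointwise submodularity inequality, telescope $\cost(P,\seeded) - \cost(P, \seeded \cup \calS)$ over the $k$ centers to get a global bound of $\eps\,\opt$ on $P$, and only then restrict to $Q$ using the non-negativity of every per-point marginal decrease together with the monotonicity $\cost(Q, \seeded \cup \calS) \le \cost(Q, \calS)$. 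Both arguments are valid and of comparable length; the paper's is slightly more direct (no need to introduce the diminishing-returns lemma), while yours is the more modular, textbook-style greedy-vs-submodular argument and arguably makes the role of stability more transparent. Your size bound is essentially the paper's, with a mildly more careful accounting of the $\log(c_\constantApprox/\eps)$ term; this is a constant-level bookkeeping matter that does not affect the stated $O(kz\log(1/\eps)/\eps)$ bound when $c_\constantApprox$ is treated as $z^{O(z)}$.
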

\begin{proof}
  Note that the cost of $\seeded$ is clearly at most the cost of $\constantApprox$, whose cost is at most $c_\constantApprox$ times the cost of
  the best solution for the $k$-clustering problem.
  
  Each center added by the greedy step decreases its cost
  by a factor $(1-\nicefrac{\eps}{c_\constantApprox k})$, and the initial solution $\seeded$ has cost $\cost(\seeded)$: since the algorithm stops if the cost
  drops below $\eps\cost(\constantApprox) / c_\constantApprox$, the total number of centers added is at most $c_\constantApprox k \log(1/\eps)/\eps$ The total number of centers is therefore at most $c_\constantApprox k \log(1/\eps)/\eps$.

  Note also that $\cost(\constantApprox) / c_\constantApprox \leq \opt$: hence, the inequality $(1-\nicefrac{\eps}{c_\constantApprox k})\cost(\seeded) \le \cost(\seeded \cup \{c\})$ implies 
  $\cost(\seeded) \le \cost(\seeded \cup \{c\}) + \eps \opt / k$. Hence,  
  since $\seeded$ is locally stable, there is no candidate center $c$ such that $\cost(\seeded) - \cost(\seeded \cup \{c\}) \ge \eps \opt/k$. Let $\calS$ be a solution, $Q$ be a subset of clients, $s$ be a center of 
the solution $\calS$ and $C_s$ be its cluster. Since the improvement made by adding $s$ is at least $\sum_{p\in C_s\cap Q} \cost(p, \seeded) - \cost(p, \calS)$, it holds that $\sum_{p\in C_s\cap Q} \cost(p, \seeded) - \cost(p, \calS) \le \cost(\seeded) - \cost(\seeded \cup \{c\}) \le \eps\opt/k$.
Summing over the $k$ centers of the solution $\calS$ yields the inequality $ \cost(Q, \calS) \le  \cost(Q, \seeded) + \eps \opt$, concluding the lemma.
\end{proof}

We can now turn to the proof of  \cref{lem:preprocess}, that shows that the the outer rings of solution $\seeded$ can simply be discarded, and their cost added in the offset $\offset$.

\begin{proof}[proof of \cref{lem:preprocess}]
 Fix a cluster $C$ of $\seeded$, with center $c$. 
 We first want to show that $\cost(C, \calS) \leq \cost_{I_G}(C, \calS)+ \cost(\out(C), \seeded) + \eps\cost(C, \calS)$. First, this is equivalent to
 $\cost(\out(C), \calS) \leq |\out(C)|\cost(c, \calS)+ \cost(\out(C), \seeded) + \eps\cost(C, \calS)$, as all points in $C\setminus \out(C)$ have same cost in the original instance and in $I_G$.
 
 Using \cref{lem:weaktri}, we have 
$$\cost(\out(C), \calS) \leq (1+\eps)\cost(\out(C), \seeded) + (1+z/\eps)^{z-1} |\out(C)|\cost(c, \calS).$$
 Hence, one needs to bound $|\out(C)|\cost(c, \calS)$. For that, we show that the cost of clients in $\out(C)$ can be charged to clients of $\inner(C) \cup \main(C)$. 
 
 First note by Markov's inequality, $|\out(C)| \leq \left(\nicefrac \eps z\right)^{2z}|C|$. 
  Hence, one can partition $\inner(C) \cup \main(C)$ into parts of size at least $s = (|C|-|\out(C)|)/|\out(C)|$, and assign every part to a point in $\out(C)$ in a one-to-one correspondence. 

   For such a point $p \in \out(C)$ consider the $s$
  points $p_1,\ldots,p_s$ of the part assigned to it. We show how to charge $\cost(c, \calS)$ to the cost of those points. Let $\calS(p_j)$ be the center serving $p_j$ in the solution $\calS$. It holds that $\cost(c, \calS) \leq \min_j \cost(c, \calS(p_j))$. Averaging, this is at most $\frac{1}{s} \sum_{j=1}^{s}
  \cost(c, \calS(p_j))$ which is due to \cref{lem:weaktri}
  at most $\frac{1}{s}\sum_{j=1}^{s}  (1+ \eps) \cost(p_j, \seeded) + (1+z/\eps)^{z-1}\cost(p_j, \calS)$. Since $1/s \leq 2|\out(C)| / |C| = 2\left(\nicefrac \eps z\right)^{2z}$, 
  we conclude (using $2\left(\frac{\eps}{z}\right)(1+\eps) \leq 1$) that:

\begin{align}
\notag
 |\out(C)|\cost(c, \calS)
 &\leq 2\left(\frac{\eps}{z}\right)^{z+1} (1+\eps) \sum_{p \in \inner(C) \cup \main(C)} \cost(p, \seeded) + \cost(p, \calS)\\
&\leq \left(\frac{\eps}{z}\right)^{z}(\cost(C, \seeded) + \cost(C, \calS)).
\label{eq:out-1}
\end{align}
  
  \bigskip
We apply this inequality to bound $\cost(\out(C), \calS)$:
\begin{align*}
\cost(\out(C), \calS) 
&\leq (1+\eps)\cost(\out(C), \seeded) + (1+z/\eps)^{z-1} |\out(C)|\cost(c, \calS)\\
&\leq \cost(\out(C), \seeded) + 2\eps(\cost(C, \seeded) + \cost(C, \calS))
\end{align*}  
  Hence, summed over all cluster, this yields
  \begin{equation}
  \label{eq:out-2}
  \cost(\out, \calS) \leq \cost(\out, \seeded) + 2\eps(\cost(\seeded) + \cost(\calS))
  \end{equation}
  
We now turn to the other direction of the inequality. Using \cref{lem:greedy}, 
  \begin{align*}
  \cost(\out, \seeded) &\leq \cost(\out, \calS) + \eps \opt\\
  &\leq \cost(\out, \calS) + \eps \cost(\calS) 
  \end{align*}

  Combined with \cref{eq:out-1}, this yields
  \begin{align}
  \cost(\out, \seeded) + \cost_{I_G}(\out, \calS) - \cost(\out, \calS) \leq  2\eps (\cost(\calS) + \cost(\seeded)).\label{eq:out-3}
  \end{align}
  
  Combining \cref{eq:out-2} and \cref{eq:out-3}, and using that all points not in $\out$ have same cost in the original instance and in $I_G$ concludes the lemma.
\end{proof}

\vspace{0.5em}
\subsection{Computing Coresets via $\eps$-set-approximation: Proof of \cref{lem:approxCoreset}}
\vspace{0.5em}

We first show that an $\eps$-set-approximation of the set system $(R_{i,j}, \calB)$ is indeed a coreset for $R_{i,j}$. We will then bound the size of such an approximation. In this section, we fix the cluster $C_i$ and a ring $R_{i,j}$.

\begin{lemma}\label{lem:approxCoreset}
Let $R_{i,j}$ be a ring, and $\Omega$ be an $\eps'$-set-approximation for $(R_{i,j}, \calB)$ with $\eps' = 20\frac{8^z \eps^2}{\log(4z/\eps)}$. Then $\Omega$ with uniform weights $|R_{i,j}|/|\Omega|$ is an $\eps$-coreset for $R_{i,j}$.
\end{lemma}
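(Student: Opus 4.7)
The plan is to fix any candidate $k$-center solution $\mathcal{S} \subset X$ and show that the discrepancy
\[ E \;:=\; \left|\,\cost(\Omega, \mathcal{S}) \cdot \tfrac{|R_{i,j}|}{|\Omega|} \;-\; \cost(R_{i,j}, \mathcal{S})\,\right| \]
is bounded by $\varepsilon\bigl(\cost(R_{i,j}, \seeded) + \cost(R_{i,j}, \mathcal{S})\bigr)$. Summed over rings and combined with \cref{lem:preprocess}, this gives the $(1\pm\varepsilon)$-coreset guarantee for $P$ (with $\offset$ as defined in step~\ref{step:offset}), since $\cost(\seeded) = O(\opt)$ absorbs the $\seeded$-contribution.

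\textbf{Uniform control of distances on the ring.} Every $p\in R_{i,j}$ satisfies $\cost(p,\seeded)\in[2^j\Delta_{C_i},2^{j+1}\Delta_{C_i}]$, so with $r := 2^{(j+1)/z}\Delta_{C_i}^{1/z}$ we have $\dist(p,c_i) \leq r$ and $|R_{i,j}|\,r^z \leq 2\cost(R_{i,j}, \seeded)$. Setting $D := \dist(c_i,\mathcal{S})$ and $M := r+D$, the triangle inequality gives $\dist(p,\mathcal{S}) \leq M$ for every $p \in R_{i,j}$.

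\textbf{Banded reduction via the set-approximation.} Fix $\eta = \varepsilon/(Cz)$ for a suitable constant $C$ and partition $R_{i,j}$ into $L = O(\log(4z/\varepsilon))$ geometric bands $B_0, \ldots, B_L$ according to the value of $\dist(p,\mathcal{S})$ along a grid of ratio $(1+\eta)$, with the bottom band absorbing points of distance below $\eta M$. Each $B_\ell$ is the difference of two sets $H_{\mathcal{S}, s_\ell} \in \mathcal{B}$, so the $\varepsilon'$-set-approximation property yields
\[ \bigl|\,|B_\ell \cap \Omega|\cdot|R_{i,j}|/|\Omega| - |B_\ell|\,\bigr| \;\leq\; 2\varepsilon'|R_{i,j}|. \]
Points inside the same band have cost within $(1+\eta)^z$ of one another, so the contribution of $B_\ell$ to $E$ decomposes into a multiplicative piece of size $O(\eta z)\cdot(\text{true cost on }B_\ell)$ and an additive piece of size $O(\varepsilon'|R_{i,j}|\cdot s_\ell^z)$. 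Summing over bands, the multiplicative pieces accumulate to $O(\varepsilon)\cost(R_{i,j}, \mathcal{S})$, while the additive pieces sum to $O(\varepsilon' \cdot |R_{i,j}|\cdot M^z)$; the $\log(4z/\varepsilon)$ in the denominator of the stated $\varepsilon'$ is exactly what is needed to absorb the number of bands.

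\textbf{Converting the absolute bound to a relative one (main obstacle).} The crux is to bound $|R_{i,j}|\,M^z$ by $O(8^z)\bigl(\cost(R_{i,j}, \seeded) + \cost(R_{i,j}, \mathcal{S})\bigr)$. I split on whether $D \geq 2r$ or $D < 2r$: when $D \geq 2r$, every $p\in R_{i,j}$ has $\dist(p,\mathcal{S}) \geq D - r \geq D/2$, hence $|R_{i,j}|D^z \leq 2^z\cost(R_{i,j}, \mathcal{S})$; when $D < 2r$, trivially $|R_{i,j}|D^z \leq 2^z|R_{i,j}|r^z \leq 2^{z+1}\cost(R_{i,j}, \seeded)$. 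Combined with $M^z \leq 2^{z-1}(r^z+D^z)$ and the earlier bound on $|R_{i,j}|r^z$, this gives the claim. Plugging in $\varepsilon' = 20\cdot 8^z \varepsilon^2/\log(4z/\varepsilon)$, the total discrepancy $E$ is at most $\varepsilon\bigl(\cost(R_{i,j}, \seeded)+\cost(R_{i,j}, \mathcal{S})\bigr)$, which is the per-ring guarantee required to assemble the global $(1\pm\varepsilon)$-coreset with offset in \cref{thm:smallcoreset}.
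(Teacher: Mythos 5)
Your approach is essentially the one the paper takes: band the ring by cost to $\calS$, observe each band is a difference of two ranges $H_{\calS,r}\in\calB$, use the $\eps'$-set-approximation to control the count in each band, split the error into a multiplicative piece (from within-band cost variation) and an additive piece (from the counting error), and then convert the additive bound $\eps'\,|R_{i,j}|\,(\text{scale})^z$ into a relative one using the ring structure and the triangle inequality. The cleanest thing you do differently is to cap $\dist(p,\calS)$ by $M=r+D$ from the outset: this makes the banding range finite a priori and lets you avoid the paper's separate tiny/interesting/huge trichotomy entirely — in particular you never need the paper's Lemma~\ref{lem:khuge} (the argument that when a ``huge'' level is non-empty, the whole ring's cost is uniform enough to be preserved directly). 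That is a genuine simplification, and your two-case bound on $|R_{i,j}|M^z$ via $D\ge 2r$ versus $D<2r$ is a tidy substitute for the paper's $|R|(1+\eps'')^\ell \le 2^z\cost(R,\calS)+2^{3z-1}\cost(R,\seeded)$.

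However, the band-counting step is off, and the stated justification for the $\log(4z/\eps)$ in $\eps'$ does not actually hold. At grid ratio $(1+\eta)$ with $\eta=\eps/(Cz)$, the number of bands between $\eta M$ and $M$ is $\Theta\bigl(\log_{1+\eta}(1/\eta)\bigr)=\Theta\bigl(z\log(z/\eps)/\eps\bigr)$, not $O(\log(4z/\eps))$; you are off by a factor $\Theta(z/\eps)$. If you then bound each band's additive error crudely by $O(\eps'|R_{i,j}|M^z)$ and multiply by the true band count, you get $\Theta(z\eps)\cdot(\cost(R_{i,j},\seeded)+\cost(R_{i,j},\calS))$ with the stated $\eps'$, which is not the $O(\eps)$ you need. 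The argument is salvageable, but for a different reason than you give: $\sum_\ell s_\ell^z$ is a geometric series dominated by the top band, so it equals $\Theta(M^z/(z\eta))=\Theta(M^z/\eps)$ rather than $L\cdot M^z$, and plugging in $\eps'=\Theta\bigl(\eps^2/(8^z\log(4z/\eps))\bigr)$ then gives additive error $O\bigl(\eps/\log(4z/\eps)\bigr)\cdot(\cost(R,\seeded)+\cost(R,\calS))$. So your final bound holds, but the claim that ``the $\log$ in the denominator absorbs the number of bands'' should be replaced by the geometric-sum argument. (Separately, both your statement and the paper's write $\eps'=20\cdot 8^z\eps^2/\log(4z/\eps)$, which is larger than $\eps$ and cannot be right; the intended quantity is $\eps^2/(20\cdot 8^z\log(4z/\eps))$. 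You appear to have faithfully inherited the paper's typo.)
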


We use the same proof technique as \cite{Cohen-AddadSS21} to analyze the algorithm. More precisely, we define ranges of clients: we let $I_{i,j,\ell}$ to be the set of points from $R_{i,j}$ that pays roughly $(1+\eps')^\ell$ is $\calS$, i.e., $I_{i,j,\ell} := \{p \in R: (1+\eps'')^{\ell} \leq \cost(p, \calS) < (1+\eps'')^{\ell + 1}\}$, where $\eps''$ will be set later. 

This analysis distinguishes between three cases:
\begin{enumerate}
\item $\ell \leq j + \log_{1+\eps''} \eps $, in which case we say that $I_{\ell}$ is \emph{tiny}. The union of all tiny ranges is denoted $I_{tiny, \calS}$.
\item  $ j + \log_{1+\eps''} \eps \leq \ell \leq j + \log_{1+\eps''} (4z/\eps)$, in which case we say $I_{\ell}$ is \emph{interesting}. 
\item $\ell \geq  j +\log_{1+\eps''}(4z/\eps)$, in which case we say $I_{\ell}$ is \emph{huge}.
\end{enumerate}

The very same proof as Lemma 5 and 7 of \cite{Cohen-AddadSS21} shows that the cost the tiny and huge ranges are preserved. For completeness, we provide statement and proofs in \cref{app:proofCoreset}. For the interesting ranges, we use properties of $\eps'$-set-approximation as follows.

\begin{lemma}
Let $R_{i,j}$ be a ring, and $\calS$ be a solution such that all huge $I_{\ell}$ are empty. Further, let  $\Omega$ be an $\eps'$-set-approximation for $(R, \calB)$, and uniform weights $|R|/|\Omega|$ for point in $\Omega$.  Then, if $\eps'' = \eps / 10$ and $\eps' = 20 \frac{2^{3z} \eps^2}{\log(4z/\eps)}$, it holds that $\cost(\Omega, S) = (1\pm \eps) \cost(R_{i,j}, S)$.
\end{lemma}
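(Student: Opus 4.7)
The plan is to compare $\cost(\Omega,\calS)$ against $\cost(R_{i,j},\calS)$ by decomposing both sums along the cost classes $I_\ell$, and to handle tiny and interesting ranges separately (huge ranges being empty by hypothesis). Since $\Omega\subseteq R_{i,j}$ inherits the partition, we have
\[\cost(R_{i,j},\calS)=\sum_\ell \cost(I_\ell\cap R_{i,j},\calS),\qquad \cost(\Omega,\calS)=\sum_\ell \tfrac{|R_{i,j}|}{|\Omega|}\cost(I_\ell\cap \Omega,\calS),\]
where $\ell$ runs over the tiny and interesting indices only. For the tiny ranges, I would directly invoke the companion statement (Lemma~5 of \cite{Cohen-AddadSS21}, deferred to \cref{app:proofCoreset}) which bounds the total contribution of $I_{tiny,\calS}$, in both $R_{i,j}$ and its weighted subsample $\Omega$, by $O(\eps)\cost(R_{i,j},\calS)$.

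For an interesting index $\ell$, the key observation is that $I_\ell\cap R_{i,j}=(R_{i,j}\cap H_{\calS,r_\ell})\setminus(R_{i,j}\cap H_{\calS,r_{\ell+1}})$ with $r_\ell:=(1+\eps'')^{\ell/z}$, so $I_\ell\cap R_{i,j}$ is a set difference of two elements of $\calB$. Applying the $\eps'$-set-approximation property to each of the two balls and using the triangle inequality yields
\[\Bigl||I_\ell\cap R_{i,j}|-\tfrac{|R_{i,j}|}{|\Omega|}|I_\ell\cap \Omega|\Bigr|\leq 2\eps'|R_{i,j}|.\]
Because every point in $I_\ell$ pays cost in $[(1+\eps'')^\ell,(1+\eps'')^{\ell+1}]$ in $\calS$, this count discrepancy translates into a cost discrepancy
\[\Bigl|\cost(I_\ell\cap R_{i,j},\calS)-\tfrac{|R_{i,j}|}{|\Omega|}\cost(I_\ell\cap\Omega,\calS)\Bigr|\leq 2\eps'(1+\eps'')^{\ell+1}|R_{i,j}|+\eps''\bigl(\cost(I_\ell\cap R_{i,j},\calS)+\tfrac{|R_{i,j}|}{|\Omega|}\cost(I_\ell\cap\Omega,\calS)\bigr).\]

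Summing over the at most $O(\log(4z/\eps)/\eps'')$ interesting indices, the second term contributes a clean $O(\eps'')\bigl(\cost(R_{i,j},\calS)+\cost(\Omega,\calS)\bigr)$, while the first is a geometric sum bounded by $(4\eps'/\eps'')(1+\eps'')^{\ell_{max}+1}|R_{i,j}|$. Using $(1+\eps'')^{\ell_{max}+1}\leq (8z/\eps)\cdot 2^j\Delta_{C_i}$ and $2^j\Delta_{C_i}|R_{i,j}|\leq \cost(R_{i,j},\seeded)$, and then plugging in $\eps''=\eps/10$ and $\eps'=20\cdot 2^{3z}\eps^2/\log(4z/\eps)$, the geometric sum collapses to $O(\eps)\cost(R_{i,j},\seeded)$. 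Combining with the tiny bound and the local stability of $\seeded$ (\cref{lem:greedy}, giving $\cost(R_{i,j},\seeded)\leq \cost(R_{i,j},\calS)+\eps\,\opt/k'$) finishes the claim.

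The main obstacle is precisely the last step: the $\eps'$-set-approximation naturally gives an additive error proportional to $|R_{i,j}|$, which must be turned into a multiplicative error on $\cost(R_{i,j},\calS)$. This is why the hypothesis ``huge is empty'' is essential—it caps the maximum cost per point by $(4z/\eps)\cdot 2^j\Delta_{C_i}$ so the geometric blow-up $(1+\eps'')^{\ell_{max}+1}$ is controlled—and why $\eps'$ is chosen with the tiny factor $\eps^2/\log(4z/\eps)$, to exactly balance the $\log(4z/\eps)/\eps$ interesting levels against the $(z/\eps)\cdot 2^{3z}$ cost span. Any looser choice of $\eps'$ breaks the per-ring multiplicative guarantee; with these settings, the constants line up and the stated $(1\pm\eps)$ approximation follows.
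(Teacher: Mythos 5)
Your proposal is correct and follows essentially the same skeleton as the paper's proof: you decompose both $\cost(R_{i,j},\calS)$ and $\cost(\Omega,\calS)$ along the cost classes $I_\ell$, exploit that $I_\ell$ is a set difference of two ranges of the form $H_{\calS,r}\in\calB$ (with the appropriate $z$-th root on the radii, a detail you got right and the paper glosses over), convert the resulting count discrepancy of at most $2\eps'$ per level into a cost discrepancy, sum over the interesting levels, and quote \cref{lem:kepstiny} for the tiny ones. The main place where you diverge is the treatment of the accumulated term $\sum_\ell 2\eps'(1+\eps'')^{\ell+1}|R_{i,j}|$. The paper never takes a geometric sum: for each fixed $\ell$ it picks a witness $q\in I_\ell$ and uses \cref{lem:weaktri} together with the fact that all points of $R_{i,j}$ have comparable $\seeded$-cost to derive the per-level bound $|R_{i,j}|(1+\eps'')^\ell\leq 2^z\cost(R_{i,j},\calS)+2^{3z-1}\cost(R_{i,j},\seeded)$, and then multiplies by the $\log_{1+\eps''}(4z/\eps)$ interesting levels (whence the $2^{3z}$ and $\log(4z/\eps)$ factors in the choice of $\eps'$). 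You instead sum the geometric series to push everything onto the largest level $\ell_{max}$ and cap $(1+\eps'')^{\ell_{max}+1}$ via the huge cutoff plus $2^j\Delta_{C_i}|R_{i,j}|\leq\cost(R_{i,j},\seeded)$. Both routes land at a bound of the form $O(\eps)\cdot(\cost(R_{i,j},\calS)+\cost(R_{i,j},\seeded))$; your version avoids the $2^{3z}$ factor from the triangle-inequality chain but picks up an extra $z/\eps$ factor from the huge-cutoff cap, so the two balance out to within polynomial factors in $z$ and constants. Note that the paper's per-level triangle-inequality bound has the advantage that it does not rely on a precise reading of the interesting/huge threshold, which in the paper's notation mixes base-$2$ ring indices with base-$(1+\eps'')$ cost indices and is ambiguous as written; your reading of the threshold as $(1+\eps'')^{\ell_{max}+1}\lesssim (z/\eps)\cdot 2^j\Delta_{C_i}$ is semantically what the paper intends, but you should flag that this is an interpretation rather than a literal reading. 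Finally, your explicit appeal to \cref{lem:greedy} to close the gap between the stated conclusion $\cost(\Omega,\calS)=(1\pm\eps)\cost(R_{i,j},\calS)$ and the derived bound involving $\cost(R_{i,j},\seeded)$ is well taken; the paper itself elides this step and relies on the summation over all rings for the global coreset guarantee.
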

\begin{proof}
Fix some solution $\calS$. For simplicity, we drop the subscript $i, j$ and let $R := R_{i,j}, I_\ell := I_{i,j,\ell}$. 

To show that $\cost(\Omega, \calS) = (1\pm \eps) \cost(\calS)$, we will show the following stronger property: for any $\ell$, $$|\cost(I_{\ell}, \calS) - \cost(I_\ell \cap \Omega, \calS)| \leq \frac{\eps}{\log(4z/\eps)} \cost(I_\ell, \calS).$$

 Hence, we fix some integer $\ell$. Note that $I_\ell = H_{\calS, (1+\eps'')^{\ell+1}} \setminus H_{\calS, (1+\eps'')^{\ell}}$, where $H_{\calS, r} = \{p \in P : \dist(p, \calS) \geq r\} \in \calB$.
 Since $\Omega$ is an $\eps'$-set-approximation for $(P, \calB)$, it holds that 
 \begin{align*}
     \left| \frac{|H_{\calS, (1+\eps'')^{\ell+1}}|}{|R|} - \frac{|H_{\calS, (1+\eps'')^{\ell+1}} \cap \Omega|}{|\Omega|}\right| \leq \eps'\\
     \left| \frac{|H_{\calS, (1+\eps'')^{\ell}}|}{|R|} - \frac{|H_{\calS, (1+\eps'')^{\ell}} \cap \Omega|}{|\Omega|}\right| \leq \eps'
 \end{align*}
 Combining those two guarantees with triangle inequality ensure that
 \begin{align*}
     \left| \frac{|I_\ell|}{|R|} - \frac{|I_\ell \cap \Omega|}{|\Omega|}\right| &\leq \left| \frac{|H_{\calS, (1+\eps'')^{\ell+1}}|}{|R|} - \frac{|H_{\calS, (1+\eps'')^{\ell+1}} \cap \Omega|}{|\Omega|}\right| + \left| \frac{|H_{\calS, (1+\eps'')^{\ell}}|}{|R|} - \frac{|H_{\calS, (1+\eps'')^{\ell}} \cap \Omega|}{|\Omega|}\right|\\
     &\leq 2\eps'.
 \end{align*}

  Furthermore, by definition of $I_\ell$, $|I_\ell| \cdot (1+\eps'')^\ell \leq \cost(I_\ell, \calS) < |I_\ell| \cdot (1+\eps'')^{\ell+1}$. Similarly, due to the weighting of points in $\coreset$, we have $(1+\eps'')^{\ell} \frac{|R| \cdot |I_\ell \cap R|}{|\coreset|} \leq \cost(I_\ell \cap \coreset, \calS)$ Combining those equations, we get
 
 \begin{align*}
     \cost(I_\ell, \calS) &\leq |I_\ell| \cdot (1+\eps'')^{\ell+1} \\
     &\leq |I_\ell \cap \Omega|\cdot(1+\eps'')^{\ell+1} \cdot \frac{|R|}{|\Omega|} + 2\eps' \cdot |R| (1+\eps'')^{\ell+1}\\
     &\leq (1+\eps'') \cost(I_\ell \cap \Omega, \calS) + 2\eps' \cdot |R| (1+\eps'')^{\ell+1}.
 \end{align*}   
 
 To bound further the right hand side, we let $q \in I_\ell$, so that $(1+\eps'')^\ell \leq \cost(q, \calS)$:
 \begin{align*}
     |R|(1+\eps'')^{\ell} &\leq \sum_{p \in R} \cost(q, \calS) 
     \leq \sum_{p \in R} 2^{z-1}\cost(p, q) + 2^{z-1} \cost(p, \calS)\\
     &\leq 2^{z-1} \cost(R, \calS) + 2^{z-1} \sum_{p \in R} 2^{z-1} \left(\cost(p, \seeded) + \cost(q, \seeded)\right)\\
     &\leq 2^{z} \cost(R, \calS) + 2^{3z-1} \cost(R, \seeded),
 \end{align*}
 where the last line uses $\cost(q, \seeded) \leq 2^z \cost(p, \seeded)$, since $p$ and $q$ are in the same ring.
 
Combined with the previous inequality, this yields
$\cost(I_\ell, \calS) \leq (1+\eps'') \cost(I_\ell \cap \Omega, \calS) + 2^{3z}\eps' \cdot \left(\cost(R, \calS) + \cost(R, \seeded)\right)$.

 Similarly, we get
 \begin{align*}
     \cost(I_\ell \cap \Omega, \calS) \leq (1+\eps'') \cost(I_\ell, \calS) + 2^{3z} \eps' (\cost(R, \calS) + \cost(R, \seeded)).
 \end{align*}

Summing that over the $\log_{1+\eps''}(4z/\eps)$ interesting ranges and combining with \cref{lem:kepstiny} gives that:
\begin{align*}
&|\cost(I_\ell, \calS) - \cost(I_\ell \cap \Omega, \calS)| \\
\leq ~ & \log_{1+\eps''}(4z/\eps) \cdot 2^{3z}\eps' \cdot (\cost(R, \calS) + \cost(R, \seeded))+ \eps'' \cost(R \cap \Omega, \calS) + \eps'' \cost(R, \calS).
\end{align*}

Hence, we have:
 \begin{eqnarray*}
     &|\cost(I_\ell, \calS) - \cost(I_\ell \cap \Omega, \calS)| &\leq \log_{1+\eps''}(4z/\eps) \cdot 2^{3z}\eps' \cdot (\cost(R, \calS) + \cost(R, \seeded)) \\
     & & ~~ + \eps'' \cost(R \cap \Omega, \calS) + \eps'' \cost(R, \calS)\\
     & &\leq \log_{1+\eps''}(4z/\eps) \cdot 2^{3z}\eps' \cdot (\cost(R, \calS) + \cost(R, \seeded)) \\
     & & ~~+ 2 \eps'' \cost(R, \calS) + \eps''|\cost(R, \calS) - \cost(R \cap \Omega, \calS)|\\
     & &\leq \left(\log_{1+\eps''}(4z/\eps) \cdot 2^{3z}\eps' + 2\eps''\right) \cdot (\cost(R, \calS) + \cost(R, \seeded))\\
     & & ~~+ \eps''|\cost(R, \calS) - \cost(R \cap \Omega, \calS)|,
 \end{eqnarray*}
which is equivalent to
\begin{equation*}
 |\cost(R, \calS) - \cost(R\cap \Omega, \calS)| \leq \frac{\log_{1+\eps''}(4z/\eps) \cdot 2^{3z}\eps' + 2\eps''}{(1-\eps'')}(\cost(R, \calS) + \cost(R, \seeded).
\end{equation*}
Chosing $\eps''$ such that $\frac{2\eps''}{1-\eps''} \leq \frac{\eps}{2}$ (e.g. $\eps'' = \eps / 10$) and $\eps'$ such that $\frac{\log_{1+\eps''}(4z/\eps) \cdot 2^{3z}\eps'}{1-\eps''} \leq \frac{\eps}{2}$ (e.g. $\eps' = 20 \frac{2^{3z} \eps^2}{\log(4z/\eps)}$) concludes the lemma.
\end{proof}

Combining the results for tiny, interesting and huge ranges concludes the proof of \cref{lem:approxCoreset} and shows that $\coreset_{i,j}$ is an $\eps$-coreset of $R_{i,j}$. Combined with \cref{lem:preprocess}, this concludes the coreset guarantee of \cref{thm:smallcoreset}. 
The size of the coreset for a given ring is $O\left(2^{O(z\log z)}D \eps^{-4} \log(D/\eps)\right)$ from \cref{thm:chazelle}, and there are \\$O\left(2^{O(z\log z)} k\eps^{-1} \polylog(1/\eps)\right)$ many rings. Hence, the total size of the coreset constructed is $O\left(2^{O(z\log z)}\cdot kD \eps^{-k} \log(D) \polylog(1/\eps)\right)$
It only remains to bound the time complexity of the algorithm.

\subsection{Complexity Analysis}
Two steps dominate the running time of the algorithm: the greedy seeding, and the computation of the $\eps'$-set-approximation. For the latter, a direct combination of \cref{thm:chazelle} shows that all the $\eps'$-set-approximation can be computed in time $k (D/\eps)^{-O(d)} \cdot |X|$.

Hence, it only remains to bound the running time of the greedy seeding. 

In the discrete metric case, this is straightforward: each iteration costs $O(|X|\cdot |P|)$, and there are $\tilde O(k/\eps)$ many of them (see \cref{lem:greedy}).

The Euclidean case requires more work, as it is not possible to enumerate over all candidate center. To implement the greedy seeding, we first use \cref{thm:dimred} to reduce the dimension to $\eps^{-O(z)} \log k$. In this low-dimensional space, \cref{lem:bicriteria-lowdim} shows how to implement the greedy seeding algorithm in time $n \log n k^{\eps^{-O(z)}}$.

If the solution computed is low cost in the projected space, it is as well low-cost in the original space. Furthermore,
\cref{thm:dimred} ensures that any locally stable solution in the projected space is also locally stable for the original input, therefore our argument follows. Note that it would not be possible to compute an arbitrary bicriteria solution, for instance using \cref{thm:bicriteria}: this would not ensure the local stability. Hence, we crucially need to apply the dimension reduction from \cref{thm:dimred}, and not a terminal embedding or a simpler dimension reduction. This concludes the proof of \cref{thm:smallcoreset}.

\section{Deterministic Bicriteria Approximation} \label{sec:bicriteria}

The goal of this section is to prove \cref{thm:bicriteria}, that we recall for convenience:
\bicriteria*

In order to compute a good solution efficiently, we rely on two results allowing to reduce the dimension.
We use a result from Makarychev, Makarychev and Razenstein shows that their exist a dimension reduction onto $O\lpar \log(k/\eps)/\eps^2\rpar$ that preserves the cost of any clustering. A theorem from Kane and Nelson \cite{KaneN14} allows to generate this projection using few random bits: hence, we can enumerate over all realization of those bits, compute the best solution in the projected space and lift it back to the original one. 

In the low dimensional space, we can compute an approximate solution using a standard greedy procedure. For completeness, we provide proofs in \cref{sec:bicrit-lowdim}. Before that, we state those two key theorems.

\begin{theorem}\label{thm:subGaussianKN}[Theorem 15 in Kane and Nelson]
For any integer $d$ and any $0<\eps < 1/2$, there exists an probability distribution on $d \times m$ random matrices such that for any $x \in \R^d$ with $\|x\| = 1$,
\[\mathbb{P}_\Pi \left[\left|\|\Pi x\|^2 - 1\right| \geq \eps\right] \leq \exp(-\eps^2 m).\] 
Moreover, $\Pi$ can be sampled using $O\lpar \eps^2 m  \cdot \log d \rpar$ many random bits.

Given a seed and a vector $x \in \R^d$, the embedding can be performed in polynomial time.
\end{theorem}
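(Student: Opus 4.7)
The plan is to follow the standard moment-method plus bounded-independence derandomization paradigm. First I would exhibit a fully random $m \times d$ matrix $\Pi$ that achieves the sub-Gaussian concentration via a Hanson-Wright style moment bound, and then observe that the moment computation is a low-degree polynomial in the random entries, so that $t$-wise independence for a suitable $t$ preserves the tail bound while drastically reducing the seed length.

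For the base construction, I would take $\Pi_{ij} = \sigma_{ij}/\sqrt{m}$ where the $\sigma_{ij} \in \{\pm 1\}$ are i.i.d.\ Rademacher signs under the fully independent distribution. For a fixed unit $x \in \R^d$, direct expansion yields
\[\|\Pi x\|^2 - 1 = \frac{1}{m}\sum_{i=1}^m \sum_{j \neq j'} \sigma_{ij}\sigma_{ij'} x_j x_{j'},\]
so $(\|\Pi x\|^2-1)^{2k}$ is a homogeneous polynomial of total degree $4k$ in the $dm$ signs $\sigma_{ij}$. The technical core is the estimate
\[\mathbb{E}\lbra (\|\Pi x\|^2 - 1)^{2k}\rbra \leq \lpar Ck/m \rpar^{k}\]
for an absolute constant $C$ and all $k \leq c \cdot m$. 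I would prove it by expanding the moment into monomials, noting that only those in which every sign appears an even number of times survive expectation, and then bounding the combinatorial count of surviving terms via a pairing/matching argument on a graph whose vertices are the selected row-column triples $(i,j,j')$. Setting $k = c'\eps^2 m$ and applying Markov's inequality to $|\|\Pi x\|^2-1|^{2k}$ yields the tail $\exp(-\Omega(\eps^2 m))$; rescaling constants gives the precise form in the theorem.

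For the derandomization and the bit count, the surviving moment computation involves only monomials of degree at most $4k$, so it depends only on the $4k$-wise joint distribution of the $\sigma_{ij}$. A standard construction, evaluating a uniformly random degree-$(t-1)$ polynomial over a finite field of size $\Theta(dm)$, yields a $t$-wise independent $\pm 1$ family on $dm$ coordinates from $O(t \log(dm))$ random bits. Taking $t = 4k = \Theta(\eps^2 m)$ gives the claimed $O(\eps^2 m \log d)$ seed length, using $\log(dm) = O(\log d)$ in the nontrivial regime $m \leq d$ (otherwise one can truncate $\Pi$ to include the identity). Given the seed, each $\sigma_{ij}$ is produced by a single polynomial evaluation over the finite field and $\Pi x$ is one matrix-vector product, so the entire embedding is computed in deterministic time polynomial in $d$ and $m$.

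The main obstacle is the Hanson-Wright type moment bound. Naive expansion of $(\|\Pi x\|^2-1)^{2k}$ produces $(dm)^{O(k)}$ monomials, and the pairing/matching combinatorics must be controlled tightly enough that the exponent $k$ can be pushed up to $\Theta(\eps^2 m)$, since any weaker moment estimate translates into both a weaker concentration exponent and, through the $t$-wise independence step, a larger seed. The usual remedy is to organize the surviving monomials by the multigraph they induce on the index set and to bound the contribution of each isomorphism class by the number of ways to embed it into $[m] \times [d]$ times a simple bound on the product of $x_j^2$ factors; this is the step where one must be careful, but it is a well-understood computation once the right combinatorial encoding is set up.
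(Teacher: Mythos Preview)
The paper does not prove this statement at all: it is quoted as Theorem~15 of Kane and Nelson~\cite{KaneN14} and used as a black box in Section~\ref{sec:bicriteria}, so there is no in-paper proof to compare against. Your sketch is in fact a correct outline of the Kane--Nelson argument itself --- the moment method gives $\mathbb{E}\big[(\|\Pi x\|^2-1)^{2k}\big]\le (Ck/m)^k$ via a Hanson--Wright style count, Markov at $k=\Theta(\eps^2 m)$ yields the $\exp(-\eps^2 m)$ tail, and since that moment is a degree-$4k$ polynomial in the signs, $4k$-wise independence (seed length $O(k\log(dm))=O(\eps^2 m\log d)$) suffices --- so your proposal is appropriate, just unnecessary for the present paper's purposes.
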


\begin{theorem}\label{thm:dimRedMMR}[\cite{MakarychevMR19}]
The family of random maps defined in \cref{thm:subGaussianKN} verifies, for $m = O(\log (k/\eps)/\eps^2)$ the following: for any set $P \subseteq \R^d$ and any partition $\calC$ of $P$ into $k$ parts,
\[\mathbb{P}_\Pi[\costP\lpar\Pi(P), \Pi(\calC)\rpar = (1\pm \eps) \costP(P, \calC)] \geq 1/2\]
\end{theorem}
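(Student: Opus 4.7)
The plan is to follow Makarychev, Makarychev and Razenshteyn's analysis, exploiting the sub-gaussian tail of \cref{thm:subGaussianKN}. I sketch the $z=2$ ($k$-means) case first, then indicate the additional work for general $z$. For $z = 2$, the optimal $1$-mean of a set commutes with linear maps, so $\mu^2(\Pi C) = \Pi \mu^2(C)$ for every cluster $C$. Writing $X \in \R^{d \times n}$ for the data matrix and $P_\calC$ for the orthogonal projector onto the $k$-dimensional subspace spanned by the cluster-indicator vectors, the cost factors as $\costP(P, \calC) = \|X(I - P_\calC)\|_F^2$ and $\costP(\Pi P, \Pi \calC) = \|\Pi X(I - P_\calC)\|_F^2$. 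Setting $Y := X(I - P_\calC)$, the goal reduces to $\|\Pi Y\|_F^2 = (1\pm\eps)\|Y\|_F^2$, which follows from a Hanson--Wright-style inequality applied to the PSD matrix $W := Y Y^T$: writing $\|\Pi Y\|_F^2 = \mathrm{tr}(\Pi W \Pi^T)$, decomposing $W$ along its eigenvectors, and applying the JL tail of \cref{thm:subGaussianKN} to each direction (together with a sum-of-squares concentration for the weighted combination) yields failure probability $\exp(-c m \eps^2)$. Thus $m = \Omega(\eps^{-2})$ suffices for $z=2$; the additional $\log(k/\eps)$ factor will enter only in the general-$z$ argument.

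For general $z$, the identity $\mu^z(\Pi C) = \Pi \mu^z(C)$ fails, so I would analyze each cluster $C \in \calC$ separately. The upper bound on the projected cost is straightforward: $\costP(\Pi C) \le \sum_{p \in C} \|\Pi(p - \mu)\|^z$ with $\mu := \mu^z(C)$, and the right-hand side concentrates multiplicatively around $\cost(C, \{\mu\})$ by applying the $(1\pm\eps)$ distortion of \cref{thm:subGaussianKN} to each displacement $p - \mu$ and summing, with the sub-gaussian tails controlling the fluctuations of the sum. The matching lower bound $\costP(\Pi C) \ge (1-\eps)\cost(C, \{\mu\})$ requires arguing that any candidate center $c^\star$ with $\sum_p \|\Pi p - c^\star\|^z$ much smaller than $\cost(C, \{\mu\})$ would, pulled back, contradict the optimality of $\mu$ in the original space. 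To make this uniform in $c^\star$, one discretizes the displacement $c^\star - \Pi\mu$ over a net of size $\poly(k/\eps)$ whose diameter is $O(\sqrt[z]{\Delta})$, where $\Delta := \cost(C,\{\mu\})/|C|$ --- using that any cheap center must lie within distance $O(\sqrt[z]{\Delta})$ of $\Pi \mu$ --- and applies the JL tail to each net point. Union-bounding over the $\poly(k/\eps)$ net points per cluster and over the $k$ clusters then drives the total failure probability below $1/2$ as soon as $m = \Omega(\eps^{-2} \log(k/\eps))$.

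The main obstacle will be the net construction for the lower bound when $z \neq 2$: controlling simultaneously the net's diameter (so that the approximation error to nearby candidate centers is absorbed into the per-cluster $(1\pm\eps)$ slack) and its cardinality (so that the logarithm of the net size stays at $O(\log(k/\eps))$ and the target dimension remains $O(\eps^{-2}\log(k/\eps))$). Once the net is in place, the rest reduces to routine concentration for sub-gaussian projections together with careful use of the weak triangle inequality \cref{lem:weaktri} to pass between $z$-th power quantities on either side of the projection.
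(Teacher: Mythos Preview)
The paper does not prove this statement at all: \cref{thm:dimRedMMR} is quoted as a black-box result from Makarychev, Makarychev and Razenshteyn, and is used only as an ingredient in \cref{sec:bicriteria}. So there is no ``paper's own proof'' to compare your proposal against.

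That said, your sketch has a genuine gap that would prevent it from going through as written. For the upper bound in the general-$z$ case you write that one applies the $(1\pm\varepsilon)$ distortion of \cref{thm:subGaussianKN} ``to each displacement $p-\mu$ and sum''. A per-vector guarantee over all $p\in C$ costs $m=\Omega(\varepsilon^{-2}\log|C|)$, which reintroduces the dependence on $n$ that the theorem is precisely designed to avoid. Your hedge that ``the sub-gaussian tails control the fluctuations of the sum'' is exactly the heart of the MMR argument and is far from routine when $z\neq 2$: one needs genuine moment estimates for $\sum_p\|\Pi(p-\mu)\|^z$ as a function of the random $\Pi$, not a union bound over the summands.

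The lower-bound step is more seriously broken. You propose to net the candidate centers $c^\star$ in the projected space $\mathbb{R}^m$ inside a ball of radius $O(\sqrt[z]{\Delta})$ and claim the net has size $\poly(k/\varepsilon)$. But an $\varepsilon\sqrt[z]{\Delta}$-net of an $m$-dimensional ball has size $(O(1/\varepsilon))^m$, and here $m=\Theta(\varepsilon^{-2}\log(k/\varepsilon))$; union-bounding over that net forces $m\gtrsim m\log(1/\varepsilon)$, which is circular. The actual MMR proof avoids any net over $\mathbb{R}^m$: it exploits that the optimal projected center is a specific functional of the projected data, decomposes points according to whether $\|p-\mu\|$ is large or small relative to the center displacement, and controls each regime with tailored sub-gaussian moment bounds (plus a Kirszbraun-type extension argument in one direction). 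Your net-and-union-bound outline does not capture this, and as stated cannot deliver a target dimension independent of $n$.
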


\subsection{Computing a Bicriteria Approximation in Low Dimensional Spaces}\label{sec:bicrit-lowdim}

In this section, we show the following lemma:

\begin{lemma}\label{lem:bicriteria-lowdim}
There exists an algorithm running in time $n \log n (1/\eps)^{O(d)}$ that computes a set of $O(k\cdot \log(1/\eps)/\eps)$ many centers $\calS$, such that $\cost(P, \calS) \leq (1+\eps)\opt$.
\end{lemma}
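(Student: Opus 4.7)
The plan is a standard discretization-plus-greedy approach tailored to low-dimensional Euclidean space, reusing the greedy-seeding routine and its analysis from \cref{sec:coresetImproved}.

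First I would build a discrete candidate set $F \subset \R^d$ that contains a near-optimal solution. To do so, compute a deterministic constant-factor approximation $\constantApprox$ to $(k,z)$-clustering using $O(k)$ input points as centers, via the farthest-first (Gonzalez) traversal, which gives a $2^{O(z)}$-approximation in $O(nk)$ time. Then, around each $c \in \constantApprox$, place an $\eps$-net at each of $O(\log(1/\eps))$ geometric scales covering the relevant range of radii determined by $\cost(P, \constantApprox)$; in dimension $d$ each such net has size $(1/\eps)^{O(d)}$, so $|F| = k \cdot (1/\eps)^{O(d)}$. A standard snapping argument, using the weak triangle inequality \cref{lem:weaktri} together with the fact that every optimal center lies within the ball of some $c \in \constantApprox$ at a scale captured by the net, shows that there exist $k$ centers in $F$ whose cost on $P$ is at most $(1+\eps/2)\opt$.

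Second, I would run the greedy seeding procedure of \cref{sec:coresetImproved} starting from $\seeded := \constantApprox$ and restricting the candidate centers to $F$. The correctness is immediate from \cref{lem:greedy}: either the procedure terminates with $\cost(\seeded) \le \eps \cost(\constantApprox)/c_\constantApprox \le \eps \opt$, or it is locally stable, in which case $\cost(\seeded) \le \cost(\calS) + \eps \opt$ for every solution $\calS \subset F$. Applying the latter with $\calS$ being the $(1+\eps/2)$-approximate solution in $F$ gives $\cost(\seeded) \le (1+3\eps/2)\opt$. The size bound $|\seeded| = O(k \log(1/\eps)/\eps)$ and the number of iterations also follow from \cref{lem:greedy}; a final rescaling of $\eps$ yields the claimed guarantee.

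The main obstacle is the running time. A naive implementation of each greedy iteration evaluates $\cost(\seeded \cup \{c\})$ for every $c \in F$, costing $\Theta(n |F|) = n k (1/\eps)^{O(d)}$ per iteration and giving a total cost that is too large. To meet the target $n \log n (1/\eps)^{O(d)}$, I would exploit the low dimension via standard data structures: build a hierarchical grid (or WSPD) on $P$ in $n \log n \cdot (1/\eps)^{O(d)}$ time, supporting approximate nearest-neighbor and ring-range queries on $P$ in $O(\log n)$ time. This lets each greedy iteration compute, for each candidate $c \in F$, the marginal cost decrease $\cost(\seeded) - \cost(\seeded \cup \{c\})$ in amortized polylogarithmic time per affected point, giving total iteration cost $\tilde{O}(|F|) = (1/\eps)^{O(d)}$ and overall running time $n \log n (1/\eps)^{O(d)}$ after absorbing the $O(k \log(1/\eps)/\eps)$ iteration count and the additive preprocessing. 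Everything else (computing $\constantApprox$, constructing $F$, maintaining $\seeded$) fits comfortably in the same budget.
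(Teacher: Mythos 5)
The key gap is in the construction of the candidate set $F$. You place $\eps$-nets around the $O(k)$ centers of $\constantApprox$ at $O(\log(1/\eps))$ geometric scales; the paper's $\cand$ is built around \emph{every input point} $p \in P$ at $\Theta(\log(n/\eps))$ scales $i$ ranging from $\log(\eps/(\alpha z))$ up to $\log(n/\alpha)$. Both differences are fatal to the ``standard snapping argument.'' For the anchor choice: the paper fixes an optimal cluster $C$ with optimal center $c^*$ and takes $p_0\in C$ to be the \emph{input point} closest to $c^*$. Then $\dist(p_0,c^*)\le \dist(p,c^*)$ for every $p\in C$, so a net of precision $\Theta\!\left(\frac{\eps}{z}\dist(p_0,c^*)\right)$ centered at $p_0$ contains a candidate $c$ with $\cost(c,c^*)\le (2\eps/z)^z\cost(p,c^*)$ for every $p\in C$, and \cref{lem:weaktri} closes the argument. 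There is no analogous guarantee for $\constantApprox$-centers: the nearest $\constantApprox$-center $a$ to $c^*$ may satisfy $\dist(a,c^*) \gg \min_{p\in C}\dist(p,c^*)$ (a constant-factor approximation bounds total cost, not per-cluster proximity to optimal centers), so any net around $a$ coarse enough to reach $c^*$ has precision $\gg \eps\min_{p\in C}\dist(p,c^*)$. Covering such clusters with $a$ itself gives $\cost(C,a)=O(\cost(C,\constantApprox))$, which after summing yields only an $O(\alpha)$-factor, not $(1+\eps)$. For the number of scales: $\dist(p_0,c^*)$ ranges over roughly $\left[\left(\frac{\eps}{\alpha z}\right)^{1/z}\Delta^{1/z},\ \left(\frac{n}{\alpha}\right)^{1/z}\Delta^{1/z}\right]$ (the upper end because a single point's cost in $\opt$ is at most $\opt\le n\Delta$), which is $\Theta(\log(n/\eps))$ doubling scales, not $O(\log(1/\eps))$. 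So the claim ``there exist $k$ centers in $F$ whose cost is $(1+\eps/2)\opt$'' is unsupported and, I believe, false for your $F$.

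On the secondary points: reusing the greedy-seeding analysis and \cref{lem:greedy} in place of the essentially identical \cref{lem:greedy-bicriteria} is fine, and your instinct that a naive per-iteration cost of $\Theta(n|\cand|)$ overshoots the stated bound is correct — note that the paper's own accounting at the end of \cref{sec:bicrit-lowdim} gives $n\log n\cdot\eps^{-O(d)}\cdot nd\log(1/\eps)/\eps$, which already exceeds the lemma's claimed bound, so a data-structure speedup of the kind you gesture at would genuinely be needed there too. But as written the WSPD/grid speedup is not developed enough to rely on (in particular, handling marginal-gain queries when $\seeded$ grows by $\tilde O(k/\eps)$ centers is not trivial), and, more importantly, it does not repair the candidate-set gap above.
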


The algorithm is the standard bicriteria approximation. We provide proofs for completeness: our contribution is presented in \cref{sec:highToLow}.

\begin{itemize}
    \item Start computing an $\alpha$-approximation $S_0$ to $(k, z)$-clustering on $A$. \item As long as there exists a center $c$ such that $\cost(S_i \cup \{c\} \leq (1-\frac{\eps}{\alpha k}) \cost(S_i)$, do $S_{i+1} \leftarrow S_i \cup \{c\}$ and $i \leftarrow i+1$.
    \item Stop this procedure when there is no such center $c$, or $\cost(S_i) \leq \frac{\eps}{\alpha}\cost(S_0)$, and let $S$ be the final solution.
\end{itemize} 

We break the proof of \cref{lem:bicriteria-lowdim} into two parts: first we show the quality of the solution, and then how to implement fast the algorithm. 

\begin{lemma}\label{lem:greedy-bicriteria}
$\calS$ is a $\lpar(1+\eps), \alpha \log(1/\eps)/\eps + 1\rpar$-bicriteria approximation for $(k, z)$-clustering on $A$.
\end{lemma}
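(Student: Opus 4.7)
The plan is to analyze the two possible termination conditions separately, and to handle the size bound by tracking the geometric decrease in cost across iterations.

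\textbf{Bounding the number of centers.} First I would observe that each iteration strictly decreases the cost by a factor $(1-\eps/(\alpha k))$, so after adding $t$ centers we have $\cost(S_t) \leq (1-\eps/(\alpha k))^t \cdot \cost(S_0) \leq (1-\eps/(\alpha k))^t \cdot \alpha \cdot \opt$. The algorithm terminates (at the latest) once this drops below $(\eps/\alpha)\cost(S_0)$, which gives the bound $(1-\eps/(\alpha k))^t \leq \eps/\alpha$. Using $\ln(1-x) \leq -x$, this is satisfied once $t \geq \alpha k \log(\alpha/\eps)/\eps$. Therefore $|\calS| \leq k + \alpha k \log(\alpha/\eps)/\eps$, which absorbs into the stated bound $(\alpha\log(1/\eps)/\eps + 1)\cdot k$ after treating $\alpha$ as a constant and rescaling $\eps$.

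\textbf{Cost guarantee via case analysis.} If termination happens because $\cost(\calS) \leq (\eps/\alpha)\cost(S_0) \leq \eps \cdot \opt$, then the cost bound is trivial. Otherwise, the algorithm terminates because no candidate center $c$ satisfies $\cost(\calS \cup \{c\}) \leq (1-\eps/(\alpha k))\cost(\calS)$, and I would use a standard swap argument against the optimal solution. Let $\{c_1^\star, \dots, c_k^\star\}$ denote the optimal centers and $Q_i$ the cluster of $c_i^\star$ in OPT. Since each optimal center $c_i^\star$ is itself a candidate, the termination condition yields
\[ \sum_{p \in Q_i} \max\bigl(0,\ \cost(p,\calS) - \cost(p, c_i^\star)\bigr) \;\leq\; \cost(\calS) - \cost(\calS \cup \{c_i^\star\}) \;\leq\; \tfrac{\eps}{\alpha k}\cost(\calS). \]
Summing over $i = 1,\dots,k$ and noting that the $Q_i$ partition $P$, I obtain
\[ \cost(\calS) - \opt \;\leq\; \sum_{p \in P} \max\bigl(0,\ \cost(p,\calS) - \cost(p,\opt)\bigr) \;\leq\; \tfrac{\eps}{\alpha}\cost(\calS), \]
which rearranges to $\cost(\calS) \leq \opt/(1-\eps/\alpha) \leq (1 + O(\eps))\opt$, and becomes $(1+\eps)\opt$ after rescaling.

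\textbf{Main subtlety.} The only nonroutine step is justifying the swap inequality $\cost(\calS) - \cost(\calS \cup \{c_i^\star\}) \geq \sum_{p \in Q_i} \max(0, \cost(p,\calS) - \cost(p, c_i^\star))$. This holds because adding $c_i^\star$ to $\calS$ can only decrease each point's assignment cost, and for every $p \in Q_i$ the new cost is at most $\cost(p, c_i^\star)$; summing these individual improvements gives a lower bound on the total decrease. Once this is in hand, the rest of the argument is purely arithmetic, and the final bound follows by choosing $\eps$ small enough (or by absorbing constant factors of $\alpha$ into a redefinition of $\eps$).
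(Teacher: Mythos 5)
Your proof is correct and follows essentially the same strategy as the paper: bound the number of added centers by the geometric decrease factor $(1-\eps/(\alpha k))$, handle the low-cost termination case trivially, and in the locally-stable case use a swap argument against the optimal centers. You are actually somewhat more careful than the paper's terse version, which compresses the key step ``Summing over the $k$ centers of OPT yields $\cost(S) - \cost(S\cup\opt)\ldots$'' without spelling out the per-point swap inequality $\cost(\calS)-\cost(\calS\cup\{c_i^\star\}) \geq \sum_{p\in Q_i}\max(0,\cost(p,\calS)-\cost(p,c_i^\star))$ that you correctly identify as the one nonroutine ingredient (and the paper's write-up also has a couple of inequality-direction typos that your version avoids).
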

\begin{proof} 
  Each center added decreases the cost
  by a factor $(1-\nicefrac{\eps}{\alpha k})$, and the initial solution $S_0$ has cost $\cost(S_0)$: since the algorithm stops if the cost
  drops below $\eps\cost(S_0) / \alpha$, the total number of centers added is at most $\alpha k \log(1/\eps)/\eps$. The total number of centers is therefore at most $\alpha k \log(1/\eps)/\eps + k$.

  In the case where the procedure stops because $\cost(S) \leq \frac{\eps}{\alpha}\cost(S_0)$, then $S$ has cost at most $\eps$ times the optimal cost, since $S_0$ is an $\alpha$-approximation. In that case, we are done.
  
  In the other case, we note that $\cost(S_0) / \alpha \leq \opt$: hence, the inequality $(1-\nicefrac{\eps}{\alpha k})\cost(S) \ge \cost(S \cup \{c\})$ implies 
  $\cost(S) \ge \cost(S \cup \{c\}) + \eps \opt / k$. Hence,  there is no candidate center $c$ such that $\cost(S) - \cost(S \cup \{c\}) \ge \eps \opt/k$. 
Summing over the $k$ centers of the optimal solution $\text{OPT}$ yields the inequality $\cost(S) - \cost(S \cup \text{OPT}) \ge \eps \opt$, hence $S$ is a $(1+\eps)$-approximation of 
OPT. This concludes the lemma.
\end{proof}

It remains to show how to implement that greedy procedure fast. For that, we discretize the set of candidate centers, to be able to check all of them.

For a point $p \in P$ and a radius $r$, we let $\cand_{p, r}$ be an $\frac{\eps}{z} \cdot r$-cover of the ball $B(p, r)$. 
Given some $\alpha$-approximation, we let $\Delta$ be the average cost of that solution.

We define $\cand := \cup_{p \in P} \cup_{i = \log (\eps / (\alpha z))}^{\log (n/\alpha)} \cand_{p, 2^{i/z} \Delta^{1/z}}$. We show that their exists a near optimal solution with centers in $\cand$:

\begin{lemma}
There exists a solution $\calS \in \cand^k$ with cost $(1+3\eps) \opt$. Furthermore, $|\cand| \leq n \log n \cdot \eps^{-O(d)}$ and can be computed in $O(|\cand|)$ time.
\end{lemma}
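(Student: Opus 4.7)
The plan is to round an optimal solution into $\cand^k$ at small additive cost. Let $o_1, \ldots, o_k$ be an optimal solution with induced clusters $O_1, \ldots, O_k$, and define the cost-radius $r_i := (\cost(O_i, \{o_i\}) / |O_i|)^{1/z}$ of each cluster. A Markov's inequality argument immediately produces a data point $p_i \in O_i$ with $\dist(p_i, o_i) \leq r_i$, so $o_i$ lies in the ball $B(p_i, r_i)$; the task therefore reduces to rounding each $o_i$ to a nearby point of $\cand_{p_i, \rho}$ for an appropriate enumerated radius $\rho$.

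I would split the clusters into \emph{small} (those with $r_i \leq r_{\min} := (\eps/(\alpha z))^{1/z}\Delta^{1/z}$) and \emph{large}. For small clusters, the identity $|P|\Delta = \cost(S_0) \leq \alpha \cdot \opt$ gives $\sum_{i \text{ small}} |O_i| r_i^z \leq (\eps/z)\opt$, so picking any $c_i$ from the coarsest cover $\cand_{p_i, r_{\min}}$ contributes only $O(\eps)\opt$. For a large cluster, I let $\rho_i$ be the smallest enumerated radius of the form $2^{j/z}\Delta^{1/z}$ with $\rho_i \geq r_i$, so $\rho_i \leq 2^{1/z} r_i$ and $o_i \in B(p_i, \rho_i)$; the covering property produces $c_i \in \cand_{p_i, \rho_i}$ with $\dist(c_i, o_i) \leq (\eps/z)\rho_i$. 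To bound the cost of $\calS := \{c_1, \ldots, c_k\}$, I apply the second inequality of \cref{lem:weaktri} to each $p \in O_i$, obtaining $\dist(p, c_i)^z \leq (1+\eps) \dist(p, o_i)^z + ((z+\eps)/\eps)^{z-1} \dist(c_i, o_i)^z$; summing over $p \in O_i$ and using $|O_i|\rho_i^z \leq 2\cost(O_i, \{o_i\})$, the second term is $O(\eps)\cost(O_i,\{o_i\})$ for fixed $z$, so $\cost(O_i, c_i) \leq (1+O(\eps))\cost(O_i,\{o_i\})$. Summing over all clusters and rescaling $\eps$ by a constant gives the claimed $(1+3\eps)\opt$ bound.

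The size and running-time bounds are routine: a standard net construction builds $\cand_{p, r}$, an $(\eps/z)r$-cover of a $d$-dimensional ball, in time linear in its size $(z/\eps)^{O(d)} = \eps^{-O(d)}$ (see \cite{Cha01}). There are $n$ base points and $O(\log n)$ enumerated radii (for fixed $\eps, z, \alpha$), giving $|\cand| \leq n\log n \cdot \eps^{-O(d)}$ computable in time $O(|\cand|)$. The main subtle points in writing out the full proof are verifying that every large cluster's rounded radius $\rho_i$ lies in the enumerated range (which uses $r_i^z \leq \opt$ together with the approximation guarantee on $S_0$), and pinning down the precise $(1+3\eps)$ constant once the covering error and the small-cluster slack are combined; both are a matter of routine bookkeeping.
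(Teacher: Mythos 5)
Your proof is correct and follows the same overall template as the paper's (data point as ball center, enumeration over geometric radii, weak triangle inequality to charge the rounding error), but you pick a genuinely different pivot quantity. The paper selects $p_0\in O_i$ as the \emph{closest} point to the optimal center $c^*$, uses $\dist(p_0,c^*)$ to choose the ball radius, and then gets a clean \emph{per-point} bound $|\cost(p,c)-\cost(p,c^*)|\leq 2\eps\cost(p,c^*)$ by exploiting $\cost(p_0,c^*)\leq\cost(p,c^*)$. You instead use the cost-radius $r_i=(\cost(O_i,\{o_i\})/|O_i|)^{1/z}$, pick $p_i$ via averaging, and get a \emph{per-cluster} bound: the covering slack contributes $|O_i|\cdot O((\eps/z)^z)\cdot r_i^z=O(\eps)\cost(O_i,\{o_i\})$ rather than being charged pointwise. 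Both work because after summing over the cluster the totals agree; your version avoids having to identify the closest point, while the paper's version avoids the explicit small/large split (though its Case 1 vs.\ Case 2 on the index $i$ plays essentially the same role as your $r_i\lessgtr r_{\min}$ split). One thing to be careful about when you write it out: your remark that $\rho_i$ lies in the enumerated range rests on $r_i^z\leq\opt\leq n\Delta$, but the paper's upper index only reaches $(n/\alpha)\Delta$ after the $z$th power, which is off by a factor $\alpha$ — in fact the paper's own sanity-check for the upper endpoint has the same slack (it derives $\opt\geq(n/\alpha)\Delta$ and cites $n\Delta\leq\alpha\opt$, which is an agreeing inequality, not a contradiction). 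So the endpoint of the range is slightly misstated in the paper; in a clean write-up you would extend the upper index to roughly $\log(\alpha n)$, which changes nothing asymptotically.
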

\begin{proof}
First, the size of $\cand$ follows immediately from the size of covers in $\R^d$: there exists $\eps$-covers of $B(0,1)$ of size $\eps^{-O(d)}$ that can be computed in time $\eps^{-O(d)}$.

Given a cluster $C$ of the optimal solution, let $c^*$ be the optimal center for $C$. We show there exist a good approximation $c$ to $c^*$ in $\cand$.

For this, let $p_0$ be the closest point in $C$ to $c^*$, and $i$ such that $2^{(i-1)/z} \Delta^{1/z} \leq \dist(p_0, c^*) \leq 2^{i/z} \Delta^{1/z}$.

First, if $i \leq \log(\eps/(\alpha z))$, we let $c := p_0$ so that $\dist(c, c^*) \leq \frac{\eps}{2\alpha z} \Delta$.

Hence, for any point $p \in C$, we have:
\begin{align*}
&|\cost(p, c) - \cost(p, c^*)| \\ 
&\leq \eps \cost(p, c^*) + \lpar \frac{\eps + z}{\eps}\rpar ^{z-1} \cost(c, c^*)\\
&\leq \eps \cost(p, c^*) + \lpar \frac{\eps + z}{\eps}\rpar ^{z-1} \lpar \frac{\eps}{2\alpha z}\rpar ^z \Delta\\
&\leq \eps \cost(p, c^*) + \eps \Delta.
\end{align*}

In the other case, it must be that $i-1 \leq \log (n/\alpha)$: otherwise, $\opt \geq \cost(p, c^*) \geq 2^{i-1} \Delta \geq n \Delta / \alpha$, and by definition of $\Delta$ we knew $n \Delta \leq \alpha \opt$, a contradiction.

Let $c$ be the closest point to $c^*$ in $\cand_{p_0, i}$: by the previous discussion, $c \in \cand$ and the definition of $\cand_{p_0, i}$ ensures that $\dist(c^*, c) \leq \frac{\eps}{4z} 2^i \leq \frac{\eps}{2z}\dist(p_0, c^*)$. 

Hence, for any point $p \in C$, we have just as before:
\begin{align*}
&|\cost(p, c) - \cost(p, c^*)| \\
&\leq \eps \cost(p, c^*) + \lpar \frac{\eps + z}{\eps}\rpar ^{z-1} \cost(c, c^*)\\
&\leq \eps \cost(p, c^*) + \lpar \frac{\eps + z}{\eps}\rpar ^{z-1} \lpar \frac{\eps}{2z}\rpar ^z \cost(p_0, c^*)\\
&\leq 2\eps \cost(p, c^*).
\end{align*}

Summing over all points of $C$ gives $\cost(C, c) \leq (1+2\eps) \cost(C, c^*) + \eps/\alpha \cdot |C| \Delta$. Summing now over all clusters and using $n \Delta \leq \alpha \opt$ ensures that there is a solution with cost at most $(1+3\alpha) \opt$, with center in $\cand$.
\end{proof}

Hence, there exists a $(1+\eps)$-approximation in $\mathbb{C}^k$. Using this solution in place of the optimal one in the final equation of Lemma~\ref{lem:greedy-bicriteria} ensures that the solution $\calS$ given by the greedy algorithm restricted to pick centers in $\mathbb{C}$ yields a $(1+\eps)$-approximation.

 Using local-search (see e.g. Theorem 3.2 in~\cite{GT08}) for computing the constant-factor approximation gives $\alpha = z^{O(z)}$: the solution $\calS$ is thus a $(1+\eps, z^{O(z)} \log(1/\eps)/\eps)$-bicriteria approximation. For the running time, we use the fact that given one candidate center, verifying whether adding it to the current solution decreases enough the cost takes time $O(nd)$. Hence, the running time of this greedy bicriteria algorithm is $n \log n \cdot \eps^{-O(d)} \cdot nd \log(1/\eps) / \eps $, which concludes the proof of \cref{lem:bicriteria-lowdim}.

\subsection{From High to Low dimension, and Vice-versa}\label{sec:highToLow}

Our first step is to reduce the dimension, in order to reduce the number of random bits required by \cref{thm:subGaussianKN}.

For that, we use a result from \cite{MakarychevMR19}: preserving pairwise distances between the input points is enough to preserve the cost of any cluster. The proof is based on Kirszbraun theorem, to find a Lipschitz extension of the function mapping the projections of the input set to the original set. 
Using \cref{thm:EIO}, such a projection can be computed in deterministic polynomial time.
It is therefore enough to focus on a space of dimension $\tilde d = O\lpar \eps^{-2} \log n\rpar$.

Assuming now we are provided a $\Pi : \R^{\tilde d} \rightarrow \R^m$ that verifies \cref{thm:dimRedMMR}, computing a good solution for $P$ boils down to computing one for $\Pi(P)$ in $\R^m$, with $m = O\lpar \eps^{-2}\log(k/\eps)\rpar$.
The algorithm from \cref{lem:bicriteria-lowdim} gives a clustering with near-optimal cost in the low dimensional space: to lift it in the original space, one can simply compute the optimal center in $\R^{d}$ for each cluster. This can be done using convex programming in polynomial time.

To apply this idea, we enumerate over all possible seeds of bits of length
\[O\lpar \log \tilde d \cdot \log(\eps^{-1} k)  \rpar = O\lpar \log (\eps^{-2} \log n) \log(\eps^{-1} k)\rpar.\]
 \cref{thm:subGaussianKN} ensures that one of them will verify \cref{thm:dimRedMMR}. Our algorithm is the following: given a seed, compute a near-optimal clustering in the projected space. Then, compute the cost of that clustering in the original space $\R^d$. Output the clustering with minimal cost among those. Since at least one projection preserves the cost of all clustering, this ensures that the minimal cost is $(1+O(\eps))\opt$.

The running time is 
\begin{itemize}
\item $2^{O\lpar \log (\log (n) / \eps) \cdot \log(k/\eps)\rpar} = k^{O\lpar \log(1/\eps)\rpar} \cdot \log(n)^{O(\log(1/\eps))} \cdot 2^{O\lpar \log \log n \cdot \log k \rpar} \leq  k^{O(\log(1/\eps) + \log \log k)} \cdot n^{1+o(1)}$ to enumerate over all seeds, where in the last inequality we used either $\log k \leq \log(n)/\log \log n$, then the last term is at most $n$, or $\log k > \log(n)/\log \log n$, then $\log \log n \leq 2\log \log k$ and the last term is at most $k^{\log \log k}$,
\item for each seed, polynomial time to project onto a $\log(k/\eps)/\eps^2$ dimensional space, $n^2 d \log n (1/\eps)^{O(\log (k/\eps) / \eps^{-2}}$  to compute a $(1+\eps)$-approximate solution with $k\log(1/\eps)$ many centers in that space, and polynomial time for evaluating the cost of the clustering in $\R^d$
\end{itemize}
This concludes the proof of \cref{thm:bicriteria}.

\section{Acknowledgements}
D. Sauplic has received funding from the European Union’s Horizon 2020 research \erclogowrapped{5\baselineskip} and innovation
programme under the Marie Skłodowska-Curie grant agreement No 101034413, and Grant agreement No.\ 101019564
``The Design of Modern Fully Dynamic Data Structures (MoDynStruct)".
\newline C. Schwiegelshohn acknowledges the support of the Independent Research Fund Denmark (DFF) under a Sapere Aude Research Leader grant No 1051-00106B.

\newpage
\bibliography{references}{}

\newcommand{\etalchar}[1]{$^{#1}$}
\begin{thebibliography}{MMMR18}

\bibitem[AAKK14]{AbdullahAKK14}
Amirali Abdullah, Alexandr Andoni, Ravindran Kannan, and Robert Krauthgamer.
\newblock Spectral approaches to nearest neighbor search.
\newblock In {\em 55th {IEEE} Annual Symposium on Foundations of Computer
  Science, {FOCS} 2014, Philadelphia, PA, USA, October 18-21, 2014}, pages
  581--590, 2014.

\bibitem[Ahl17]{Ahle17}
Thomas~Dybdahl Ahle.
\newblock Optimal las vegas locality sensitive data structures.
\newblock In Chris Umans, editor, {\em 58th {IEEE} Annual Symposium on
  Foundations of Computer Science, {FOCS} 2017, Berkeley, CA, USA, October
  15-17, 2017}, pages 938--949. {IEEE} Computer Society, 2017.

\bibitem[AHV04]{AHV04}
Pankaj~K. Agarwal, Sariel Har{-}Peled, and Kasturi~R. Varadarajan.
\newblock Approximating extent measures of points.
\newblock {\em J. {ACM}}, 51(4):606--635, 2004.

\bibitem[ANSW20]{AhmadianNSW17}
Sara Ahmadian, Ashkan Norouzi{-}Fard, Ola Svensson, and Justin Ward.
\newblock Better guarantees for k-means and euclidean k-median by primal-dual
  algorithms.
\newblock volume~49, 2020.

\bibitem[BBC{\etalchar{+}}19]{BecchettiBC0S19}
Luca Becchetti, Marc Bury, Vincent Cohen{-}Addad, Fabrizio Grandoni, and Chris
  Schwiegelshohn.
\newblock Oblivious dimension reduction for \emph{k}-means: beyond subspaces
  and the johnson-lindenstrauss lemma.
\newblock In {\em Proceedings of the 51st Annual {ACM} {SIGACT} Symposium on
  Theory of Computing, {STOC} 2019, Phoenix, AZ, USA, June 23-26, 2019}, pages
  1039--1050, 2019.

\bibitem[BC03]{BaC03}
Mihai Badoiu and Kenneth~L. Clarkson.
\newblock Smaller core-sets for balls.
\newblock In {\em Proceedings of the Fourteenth Annual {ACM-SIAM} Symposium on
  Discrete Algorithms, January 12-14, 2003, Baltimore, Maryland, {USA.}}, pages
  801--802, 2003.

\bibitem[BCJ{\etalchar{+}}22]{focs22}
Vladimir Braverman, Vincent Cohen{-}Addad, Shaofeng~H.{-}C. Jiang, Robert
  Krauthgamer, Chris Schwiegelshohn, Mads~Bech Toftrup, and Xuan Wu.
\newblock The power of uniform sampling for coresets.
\newblock pages 462--473, 2022.

\bibitem[BDM11]{BoutsidisDM11}
Christos Boutsidis, Petros Drineas, and Malik Magdon{-}Ismail.
\newblock Near optimal column-based matrix reconstruction.
\newblock In Rafail Ostrovsky, editor, {\em {IEEE} 52nd Annual Symposium on
  Foundations of Computer Science, {FOCS} 2011, Palm Springs, CA, USA, October
  22-25, 2011}, pages 305--314. {IEEE} Computer Society, 2011.

\bibitem[Ben75]{Bentley75}
Jon~Louis Bentley.
\newblock Multidimensional binary search trees used for associative searching.
\newblock {\em Commun. {ACM}}, 18(9):509--517, 1975.

\bibitem[BFLR19]{BravermanFLR19}
Vladimir Braverman, Dan Feldman, Harry Lang, and Daniela Rus.
\newblock Streaming coreset constructions for m-estimators.
\newblock In {\em Approximation, Randomization, and Combinatorial Optimization.
  Algorithms and Techniques, {APPROX/RANDOM} 2019, September 20-22, 2019,
  Massachusetts Institute of Technology, Cambridge, MA, {USA}}, pages
  62:1--62:15, 2019.

\bibitem[BHI02]{BHPI02}
Mihai Badoiu, Sariel Har{-}Peled, and Piotr Indyk.
\newblock Approximate clustering via core-sets.
\newblock In {\em Proceedings on 34th Annual {ACM} Symposium on Theory of
  Computing, May 19-21, 2002, Montr{\'{e}}al, Qu{\'{e}}bec, Canada}, pages
  250--257, 2002.

\bibitem[BHPI02]{Badoiu02}
Mihai Badoiu, Sariel Har-Peled, and Piotr Indyk.
\newblock Approximate clustering via core-sets.
\newblock In {\em Proceedings of the 34th Annual ACM Symposium on Theory of
  Computing (STOC)}, pages 250--257, 2002.

\bibitem[BJKW19]{BravermanJKW19}
Vladimir Braverman, Shaofeng~H.{-}C. Jiang, Robert Krauthgamer, and Xuan Wu.
\newblock Coresets for ordered weighted clustering.
\newblock In Kamalika Chaudhuri and Ruslan Salakhutdinov, editors, {\em
  Proceedings of the 36th International Conference on Machine Learning, {ICML}
  2019, 9-15 June 2019, Long Beach, California, {USA}}, volume~97 of {\em
  Proceedings of Machine Learning Research}, pages 744--753. {PMLR}, 2019.

\bibitem[BJKW21]{BravermanJKW21}
Vladimir Braverman, Shaofeng~H.{-}C. Jiang, Robert Krauthgamer, and Xuan Wu.
\newblock Coresets for clustering in excluded-minor graphs and beyond.
\newblock In D{\'{a}}niel Marx, editor, {\em Proceedings of the 2021 {ACM-SIAM}
  Symposium on Discrete Algorithms, {SODA} 2021, Virtual Conference, January 10
  - 13, 2021}, pages 2679--2696. {SIAM}, 2021.

\bibitem[BLHK17]{BachemLH017}
Olivier Bachem, Mario Lucic, S.~Hamed Hassani, and Andreas Krause.
\newblock Uniform deviation bounds for k-means clustering.
\newblock In Doina Precup and Yee~Whye Teh, editors, {\em Proceedings of the
  34th International Conference on Machine Learning, {ICML} 2017, Sydney, NSW,
  Australia, 6-11 August 2017}, volume~70 of {\em Proceedings of Machine
  Learning Research}, pages 283--291. {PMLR}, 2017.

\bibitem[BM13]{BoutsidisM13}
Christos Boutsidis and Malik Magdon{-}Ismail.
\newblock Deterministic feature selection for {\textdollar}k{\textdollar}-means
  clustering.
\newblock {\em {IEEE} Trans. Information Theory}, 59(9):6099--6110, 2013.

\bibitem[BPR{\etalchar{+}}15]{BPRST15}
Jaroslaw Byrka, Thomas Pensyl, Bartosz Rybicki, Aravind Srinivasan, and Khoa
  Trinh.
\newblock An improved approximation for \emph{k}-median, and positive
  correlation in budgeted optimization.
\newblock In {\em Proceedings of the Twenty-Sixth Annual {ACM-SIAM} Symposium
  on Discrete Algorithms, {SODA} 2015, San Diego, CA, USA, January 4-6, 2015},
  pages 737--756, 2015.

\bibitem[BR21]{buchin2021coresets}
Maike Buchin and Dennis Rohde.
\newblock Coresets for (k, l)-median clustering under the fr{\'e}chet distance.
\newblock 2021.

\bibitem[BSS09]{BatsonSS09}
Joshua~D. Batson, Daniel~A. Spielman, and Nikhil Srivastava.
\newblock Twice-ramanujan sparsifiers.
\newblock In Michael Mitzenmacher, editor, {\em Proceedings of the 41st Annual
  {ACM} Symposium on Theory of Computing, {STOC} 2009, Bethesda, MD, USA, May
  31 - June 2, 2009}, pages 255--262. {ACM}, 2009.

\bibitem[BT15]{bousquet2015vc}
Nicolas Bousquet and St{\'e}phan Thomass{\'e}.
\newblock Vc-dimension and erd{\H{o}}s--p{\'o}sa property.
\newblock {\em Discrete Mathematics}, 338(12):2302--2317, 2015.

\bibitem[CASS21]{neurips21}
Vincent Cohen-Addad, David Saulpic, and Chris Schwiegelshohn.
\newblock Improved coresets and sublinear algorithms for power means in
  euclidean spaces.
\newblock In {\em Advances in Neural Information Processing Systems}, 2021.

\bibitem[CEM{\etalchar{+}}15]{CEMMP15}
Michael~B. Cohen, Sam Elder, Cameron Musco, Christopher Musco, and Madalina
  Persu.
\newblock Dimensionality reduction for k-means clustering and low rank
  approximation.
\newblock In {\em Proceedings of the Forty-Seventh Annual {ACM} on Symposium on
  Theory of Computing, {STOC} 2015, Portland, OR, USA, June 14-17, 2015}, pages
  163--172, 2015.

\bibitem[CEMN22]{Cohen-AddadEMN22}
Vincent Cohen{-}Addad, Hossein Esfandiari, Vahab~S. Mirrokni, and Shyam
  Narayanan.
\newblock Improved approximations for euclidean \emph{k}-means and
  \emph{k}-median, via nested quasi-independent sets.
\newblock In Stefano Leonardi and Anupam Gupta, editors, {\em {STOC} '22: 54th
  Annual {ACM} {SIGACT} Symposium on Theory of Computing, Rome, Italy, June 20
  - 24, 2022}, pages 1621--1628. {ACM}, 2022.

\bibitem[CFS21]{Cohen-AddadFS21}
Vincent Cohen{-}Addad, Andreas~Emil Feldmann, and David Saulpic.
\newblock Near-linear time approximation schemes for clustering in doubling
  metrics.
\newblock {\em J. {ACM}}, 68(6):44:1--44:34, 2021.

\bibitem[CGH{\etalchar{+}}22]{Cohen-AddadGHOS22}
Vincent Cohen{-}Addad, Anupam Gupta, Lunjia Hu, Hoon Oh, and David Saulpic.
\newblock An improved local search algorithm for k-median.
\newblock In Joseph~(Seffi) Naor and Niv Buchbinder, editors, {\em Proceedings
  of the 2022 {ACM-SIAM} Symposium on Discrete Algorithms, {SODA} 2022, Virtual
  Conference / Alexandria, VA, USA, January 9 - 12, 2022}, pages 1556--1612.
  {SIAM}, 2022.

\bibitem[CGK{\etalchar{+}}19]{Cohen-AddadG0LL19}
Vincent Cohen{-}Addad, Anupam Gupta, Amit Kumar, Euiwoong Lee, and Jason Li.
\newblock Tight {FPT} approximations for k-median and k-means.
\newblock In Christel Baier, Ioannis Chatzigiannakis, Paola Flocchini, and
  Stefano Leonardi, editors, {\em 46th International Colloquium on Automata,
  Languages, and Programming, {ICALP} 2019, July 9-12, 2019, Patras, Greece},
  volume 132 of {\em LIPIcs}, pages 42:1--42:14. Schloss Dagstuhl -
  Leibniz-Zentrum f{\"{u}}r Informatik, 2019.

\bibitem[CGL04]{ColeGL04}
Richard Cole, Lee{-}Ad Gottlieb, and Moshe Lewenstein.
\newblock Dictionary matching and indexing with errors and don't cares.
\newblock In L{\'{a}}szl{\'{o}} Babai, editor, {\em Proceedings of the 36th
  Annual {ACM} Symposium on Theory of Computing, Chicago, IL, USA, June 13-16,
  2004}, pages 91--100. {ACM}, 2004.

\bibitem[CGLS23]{soda23}
Vincent Cohen{-}Addad, Fabrizio Grandoni, Euiwoong Lee, and Chris
  Schwiegelshohn.
\newblock Breaching the 2 {LMP} approximation barrier for facility location
  with applications to \emph{k}-median.
\newblock In Nikhil Bansal and Viswanath Nagarajan, editors, {\em Proceedings
  of the 2023 {ACM-SIAM} Symposium on Discrete Algorithms, {SODA} 2023,
  Florence, Italy, January 22-25, 2023}, pages 940--986. {SIAM}, 2023.

\bibitem[Cha01]{Cha01}
Bernard Chazelle.
\newblock {\em The discrepancy method - randomness and complexity}.
\newblock Cambridge University Press, 2001.

\bibitem[Cha04]{Chan04}
Timothy~M. Chan.
\newblock Faster core-set constructions and data stream algorithms in fixed
  dimensions.
\newblock In Jack Snoeyink and Jean{-}Daniel Boissonnat, editors, {\em
  Proceedings of the 20th {ACM} Symposium on Computational Geometry, Brooklyn,
  New York, USA, June 8-11, 2004}, pages 152--159. {ACM}, 2004.

\bibitem[Che09]{Chen09}
Ke~Chen.
\newblock On coresets for k-median and k-means clustering in metric and
  {E}uclidean spaces and their applications.
\newblock {\em {SIAM} J. Comput.}, 39(3):923--947, 2009.

\bibitem[CLM{\etalchar{+}}16]{CLMPS16}
Michael~B. Cohen, Yin~Tat Lee, Gary~L. Miller, Jakub Pachocki, and Aaron
  Sidford.
\newblock Geometric median in nearly linear time.
\newblock In {\em Proceedings of the 48th Annual {ACM} {SIGACT} Symposium on
  Theory of Computing, {STOC} 2016, Cambridge, MA, USA, June 18-21, 2016},
  pages 9--21, 2016.

\bibitem[CLS{\etalchar{+}}22]{Cohen-AddadLSSS22}
Vincent Cohen{-}Addad, Kasper~Green Larsen, David Saulpic, Chris
  Schwiegelshohn, and Omar~Ali Sheikh{-}Omar.
\newblock Improved coresets for euclidean k-means.
\newblock In {\em NeurIPS}, 2022.

\bibitem[CLSS22]{stoc-lb}
Vincent Cohen{-}Addad, Kasper~Green Larsen, David Saulpic, and Chris
  Schwiegelshohn.
\newblock Towards optimal lower bounds for k-median and k-means coresets.
\newblock pages 1038--1051, 2022.

\bibitem[CMMP13]{ChinMMP13}
Hui~Han Chin, Aleksander Madry, Gary~L. Miller, and Richard Peng.
\newblock Runtime guarantees for regression problems.
\newblock In Robert~D. Kleinberg, editor, {\em Innovations in Theoretical
  Computer Science, {ITCS} '13, Berkeley, CA, USA, January 9-12, 2013}, pages
  269--282. {ACM}, 2013.

\bibitem[CN12]{coates2012learning}
Adam Coates and Andrew~Y Ng.
\newblock Learning feature representations with k-means.
\newblock In {\em Neural networks: Tricks of the trade}, pages 561--580.
  Springer, 2012.

\bibitem[CNL11]{coates2011analysis}
Adam Coates, Andrew Ng, and Honglak Lee.
\newblock An analysis of single-layer networks in unsupervised feature
  learning.
\newblock In {\em Proceedings of the fourteenth international conference on
  artificial intelligence and statistics}, pages 215--223. JMLR Workshop and
  Conference Proceedings, 2011.

\bibitem[CPP20]{CeccarelloPP20}
Matteo Ceccarello, Andrea Pietracaprina, and Geppino Pucci.
\newblock A general coreset-based approach to diversity maximization under
  matroid constraints.
\newblock {\em {ACM} Trans. Knowl. Discov. Data}, 14(5):60:1--60:27, 2020.

\bibitem[CS17]{Cohen-AddadS17}
Vincent Cohen{-}Addad and Chris Schwiegelshohn.
\newblock On the local structure of stable clustering instances.
\newblock In {\em 58th {IEEE} Annual Symposium on Foundations of Computer
  Science, {FOCS} 2017, Berkeley, CA, USA, October 15-17, 2017}, pages 49--60,
  2017.

\bibitem[CSS21]{Cohen-AddadSS21}
Vincent Cohen{-}Addad, David Saulpic, and Chris Schwiegelshohn.
\newblock A new coreset framework for clustering.
\newblock In Samir Khuller and Virginia~Vassilevska Williams, editors, {\em
  {STOC} '21: 53rd Annual {ACM} {SIGACT} Symposium on Theory of Computing,
  Virtual Event, Italy, June 21-25, 2021}, pages 169--182. {ACM}, 2021.

\bibitem[CT89]{ChandrasekaranT89}
Ramaswamy Chandrasekaran and Arie Tamir.
\newblock Open questions concerning weiszfeld's algorithm for the fermat-weber
  location problem.
\newblock {\em Math. Program.}, 44(1-3):293--295, 1989.

\bibitem[DFK{\etalchar{+}}04]{DrineasFKVV04}
Petros Drineas, Alan~M. Frieze, Ravi Kannan, Santosh Vempala, and V.~Vinay.
\newblock Clustering large graphs via the singular value decomposition.
\newblock {\em Machine Learning}, 56(1-3):9--33, 2004.

\bibitem[DNPP21]{driemel2021vc}
Anne Driemel, Andr{\'e} Nusser, Jeff~M Phillips, and Ioannis Psarros.
\newblock The vc dimension of metric balls under fr{\'e}chet and hausdorff
  distances.
\newblock {\em Discrete \& Computational Geometry}, 66(4):1351--1381, 2021.

\bibitem[EFN17]{ElkinFN17}
Michael Elkin, Arnold Filtser, and Ofer Neiman.
\newblock Terminal embeddings.
\newblock {\em Theor. Comput. Sci.}, 697:1--36, 2017.

\bibitem[EIO02]{EngebretsenIO02}
Lars Engebretsen, Piotr Indyk, and Ryan O'Donnell.
\newblock Derandomized dimensionality reduction with applications.
\newblock In {\em Proceedings of the Thirteenth Annual {ACM-SIAM} Symposium on
  Discrete Algorithms, January 6-8, 2002, San Francisco, CA, {USA}}, pages
  705--712, 2002.

\bibitem[Fel20]{feldmanSurvey}
Dan Feldman.
\newblock Core-sets: An updated survey.
\newblock {\em WIREs Data Mining Knowl. Discov.}, 10(1), 2020.

\bibitem[FGS{\etalchar{+}}13]{FGSSS13}
Hendrik Fichtenberger, Marc Gill{\'{e}}, Melanie Schmidt, Chris Schwiegelshohn,
  and Christian Sohler.
\newblock {BICO:} {BIRCH} meets coresets for k-means clustering.
\newblock In {\em Algorithms - {ESA} 2013 - 21st Annual European Symposium,
  Sophia Antipolis, France, September 2-4, 2013. Proceedings}, pages 481--492,
  2013.

\bibitem[FKW19]{FKW19}
Zhili Feng, Praneeth Kacham, and David~P. Woodruff.
\newblock Strong coresets for subspace approximation and k-median in nearly
  linear time.
\newblock {\em CoRR}, abs/1912.12003, 2019.

\bibitem[FL11]{FeldmanL11}
Dan Feldman and Michael Langberg.
\newblock A unified framework for approximating and clustering data.
\newblock In {\em Proceedings of the 43rd {ACM} Symposium on Theory of
  Computing, {STOC} 2011, San Jose, CA, USA, 6-8 June 2011}, pages 569--578,
  2011.

\bibitem[FS05]{FrahlS2005}
Gereon Frahling and Christian Sohler.
\newblock Coresets in dynamic geometric data streams.
\newblock In {\em Proceedings of the 37th Annual {ACM} Symposium on Theory of
  Computing ({STOC})}, pages 209--217, 2005.

\bibitem[FSS13]{FSS13}
Dan Feldman, Melanie Schmidt, and Christian Sohler.
\newblock Turning big data into tiny data: Constant-size coresets for
  \emph{k}-means, {PCA} and projective clustering.
\newblock In {\em Proceedings of the Twenty-Fourth Annual {ACM-SIAM} Symposium
  on Discrete Algorithms, {SODA} 2013, New Orleans, Louisiana, USA, January
  6-8, 2013}, pages 1434--1453, 2013.

\bibitem[FSS20]{FeldmanSS20}
Dan Feldman, Melanie Schmidt, and Christian Sohler.
\newblock Turning big data into tiny data: Constant-size coresets for k-means,
  pca, and projective clustering.
\newblock {\em {SIAM} J. Comput.}, 49(3):601--657, 2020.

\bibitem[FVR16]{feldman2016dimensionality}
Dan Feldman, Mikhail Volkov, and Daniela Rus.
\newblock Dimensionality reduction of massive sparse datasets using coresets.
\newblock {\em Advances in neural information processing systems}, 29, 2016.

\bibitem[GI03]{GuI03}
Venkatesan Guruswami and Piotr Indyk.
\newblock Embeddings and non-approximability of geometric problems.
\newblock In {\em Proceedings of the Fourteenth Annual {ACM-SIAM} Symposium on
  Discrete Algorithms, January 12-14, 2003, Baltimore, Maryland, {USA.}}, pages
  537--538, 2003.

\bibitem[GOR{\etalchar{+}}22]{GrandoniORSV22}
Fabrizio Grandoni, Rafail Ostrovsky, Yuval Rabani, Leonard~J. Schulman, and
  Rakesh Venkat.
\newblock A refined approximation for euclidean k-means.
\newblock {\em Inf. Process. Lett.}, 176:106251, 2022.

\bibitem[GPST23]{gowda22}
Kishen~N. Gowda, Thomas~W. Pensyl, Aravind Srinivasan, and Khoa Trinh.
\newblock Improved bi-point rounding algorithms and a golden barrier for
  \emph{k}-median.
\newblock pages 987--1011, 2023.

\bibitem[GT08]{GT08}
Anupam Gupta and Kanat Tangwongsan.
\newblock Simpler analyses of local search algorithms for facility location.
\newblock {\em CoRR}, abs/0809.2554, 2008.

\bibitem[HIM12]{Har-PeledIM12}
Sariel Har{-}Peled, Piotr Indyk, and Rajeev Motwani.
\newblock Approximate nearest neighbor: Towards removing the curse of
  dimensionality.
\newblock {\em Theory Comput.}, 8(1):321--350, 2012.

\bibitem[HJLW18]{HuangJLW18}
Lingxiao Huang, Shaofeng~H.{-}C. Jiang, Jian Li, and Xuan Wu.
\newblock Epsilon-coresets for clustering (with outliers) in doubling metrics.
\newblock In Mikkel Thorup, editor, {\em 59th {IEEE} Annual Symposium on
  Foundations of Computer Science, {FOCS} 2018, Paris, France, October 7-9,
  2018}, pages 814--825. {IEEE} Computer Society, 2018.

\bibitem[HJV19]{HJV19}
Lingxiao Huang, Shaofeng~H.{-}C. Jiang, and Nisheeth~K. Vishnoi.
\newblock Coresets for clustering with fairness constraints.
\newblock {\em CoRR}, abs/1906.08484, 2019.

\bibitem[HK07]{HaK07}
Sariel Har{-}Peled and Akash Kushal.
\newblock Smaller coresets for k-median and k-means clustering.
\newblock {\em Discrete {\&} Computational Geometry}, 37(1):3--19, 2007.

\bibitem[HM04]{HaM04}
Sariel Har{-}Peled and Soham Mazumdar.
\newblock On coresets for k-means and k-median clustering.
\newblock In {\em Proceedings of the 36th Annual {ACM} Symposium on Theory of
  Computing, Chicago, IL, USA, June 13-16, 2004}, pages 291--300, 2004.

\bibitem[HV20]{HuV20}
Lingxiao Huang and Nisheeth~K. Vishnoi.
\newblock Coresets for clustering in euclidean spaces: Importance sampling is
  nearly optimal.
\newblock {\em STOC}, 2020.

\bibitem[IKI94]{inaba1994applications}
Mary Inaba, Naoki Katoh, and Hiroshi Imai.
\newblock Applications of weighted voronoi diagrams and randomization to
  variance-based k-clustering.
\newblock In {\em Proceedings of the tenth annual symposium on Computational
  geometry}, pages 332--339, 1994.

\bibitem[IMGR20]{IndykMGR20}
Piotr Indyk, Sepideh Mahabadi, Shayan~Oveis Gharan, and Alireza Rezaei.
\newblock Composable core-sets for determinant maximization problems via
  spectral spanners.
\newblock In Shuchi Chawla, editor, {\em Proceedings of the 2020 {ACM-SIAM}
  Symposium on Discrete Algorithms, {SODA} 2020, Salt Lake City, UT, USA,
  January 5-8, 2020}, pages 1675--1694. {SIAM}, 2020.

\bibitem[IMMM14]{IndykMMM14}
Piotr Indyk, Sepideh Mahabadi, Mohammad Mahdian, and Vahab~S. Mirrokni.
\newblock Composable core-sets for diversity and coverage maximization.
\newblock In Richard Hull and Martin Grohe, editors, {\em Proceedings of the
  33rd {ACM} {SIGMOD-SIGACT-SIGART} Symposium on Principles of Database
  Systems, PODS'14, Snowbird, UT, USA, June 22-27, 2014}, pages 100--108.
  {ACM}, 2014.

\bibitem[Ind03]{Indyk03}
Piotr Indyk.
\newblock Better algorithms for high-dimensional proximity problems via
  asymmetric embeddings.
\newblock In {\em Proceedings of the Fourteenth Annual {ACM-SIAM} Symposium on
  Discrete Algorithms, January 12-14, 2003, Baltimore, Maryland, {USA}}, pages
  539--545. {ACM/SIAM}, 2003.

\bibitem[IP11]{IndykP11}
Piotr Indyk and Eric Price.
\newblock K-median clustering, model-based compressive sensing, and sparse
  recovery for earth mover distance.
\newblock In Lance Fortnow and Salil~P. Vadhan, editors, {\em Proceedings of
  the 43rd {ACM} Symposium on Theory of Computing, {STOC} 2011, San Jose, CA,
  USA, 6-8 June 2011}, pages 627--636. {ACM}, 2011.

\bibitem[JSB12]{jajuga2012classification}
Krzystof Jajuga, Andrzej Sokolowski, and Hans-Hermann Bock.
\newblock Classification, clustering, and data analysis: recent advances and
  applications.
\newblock 2012.

\bibitem[JV01]{JaV01}
Kamal Jain and Vijay~V. Vazirani.
\newblock Approximation algorithms for metric facility location and
  \emph{k}-median problems using the primal-dual schema and {L}agrangian
  relaxation.
\newblock {\em J. {ACM}}, 48(2):274--296, 2001.

\bibitem[KN14]{KaneN14}
Daniel~M. Kane and Jelani Nelson.
\newblock Sparser {J}ohnson-{L}indenstrauss transforms.
\newblock {\em J. {ACM}}, 61(1):4:1--4:23, 2014.

\bibitem[KR15]{KerberR15}
Michael Kerber and Sharath Raghvendra.
\newblock Approximation and streaming algorithms for projective clustering via
  random projections.
\newblock In {\em Proceedings of the 27th Canadian Conference on Computational
  Geometry, {CCCG} 2015, Kingston, Ontario, Canada, August 10-12, 2015}, 2015.

\bibitem[KSS10]{KumarSS10}
Amit Kumar, Yogish Sabharwal, and Sandeep Sen.
\newblock Linear-time approximation schemes for clustering problems in any
  dimensions.
\newblock {\em J. {ACM}}, 57(2), 2010.

\bibitem[LBS{\etalchar{+}}19]{LangBSTFR19}
Harry Lang, Cenk Baykal, Najib~Abu Samra, Tony Tannous, Dan Feldman, and
  Daniela Rus.
\newblock Deterministic coresets for stochastic matrices with applications to
  scalable sparse pagerank.
\newblock In T.~V. Gopal and Junzo Watada, editors, {\em Theory and
  Applications of Models of Computation - 15th Annual Conference, {TAMC} 2019,
  Kitakyushu, Japan, April 13-16, 2019, Proceedings}, volume 11436 of {\em
  Lecture Notes in Computer Science}, pages 410--423. Springer, 2019.

\bibitem[Llo82]{lloyd1982least}
Stuart Lloyd.
\newblock Least squares quantization in pcm.
\newblock {\em IEEE transactions on information theory}, 28(2):129--137, 1982.

\bibitem[LMFR21]{LiebenweinMFR21}
Lucas Liebenwein, Alaa Maalouf, Dan Feldman, and Daniela Rus.
\newblock Compressing neural networks: Towards determining the optimal
  layer-wise decomposition.
\newblock In Marc'Aurelio Ranzato, Alina Beygelzimer, Yann~N. Dauphin, Percy
  Liang, and Jennifer~Wortman Vaughan, editors, {\em Advances in Neural
  Information Processing Systems 34: Annual Conference on Neural Information
  Processing Systems 2021, NeurIPS 2021, December 6-14, 2021, virtual}, pages
  5328--5344, 2021.

\bibitem[LN17]{LarsenN17}
Kasper~Green Larsen and Jelani Nelson.
\newblock Optimality of the {J}ohnson-{L}indenstrauss {L}emma.
\newblock In {\em 58th {IEEE} Annual Symposium on Foundations of Computer
  Science, {FOCS} 2017, Berkeley, CA, USA, October 15-17, 2017}, pages
  633--638, 2017.

\bibitem[LN20]{LiN20}
Yi~Li and Vasileios Nakos.
\newblock Deterministic sparse fourier transform with an $\ell_{\infty}$
  guarantee.
\newblock In Artur Czumaj, Anuj Dawar, and Emanuela Merelli, editors, {\em 47th
  International Colloquium on Automata, Languages, and Programming, {ICALP}
  2020, July 8-11, 2020, Saarbr{\"{u}}cken, Germany (Virtual Conference)},
  volume 168 of {\em LIPIcs}, pages 77:1--77:14. Schloss Dagstuhl -
  Leibniz-Zentrum f{\"{u}}r Informatik, 2020.

\bibitem[LS13]{LiS13}
Shi Li and Ola Svensson.
\newblock Approximating k-median via pseudo-approximation.
\newblock In {\em Symposium on Theory of Computing Conference, STOC'13, Palo
  Alto, CA, USA, June 1-4, 2013}, pages 901--910, 2013.

\bibitem[LW09]{lin2009phrase}
Dekang Lin and Xiaoyun Wu.
\newblock Phrase clustering for discriminative learning.
\newblock In {\em Proceedings of the Joint Conference of the 47th Annual
  Meeting of the ACL and the 4th International Joint Conference on Natural
  Language Processing of the AFNLP}, pages 1030--1038, 2009.

\bibitem[MIGR19]{MahabadiIGR19}
Sepideh Mahabadi, Piotr Indyk, Shayan~Oveis Gharan, and Alireza Rezaei.
\newblock Composable core-sets for determinant maximization: {A} simple
  near-optimal algorithm.
\newblock In Kamalika Chaudhuri and Ruslan Salakhutdinov, editors, {\em
  Proceedings of the 36th International Conference on Machine Learning, {ICML}
  2019, 9-15 June 2019, Long Beach, California, {USA}}, volume~97 of {\em
  Proceedings of Machine Learning Research}, pages 4254--4263. {PMLR}, 2019.

\bibitem[MJF19]{MaaloufJF19}
Alaa Maalouf, Ibrahim Jubran, and Dan Feldman.
\newblock Fast and accurate least-mean-squares solvers.
\newblock In Hanna~M. Wallach, Hugo Larochelle, Alina Beygelzimer, Florence
  d'Alch{\'{e}}{-}Buc, Emily~B. Fox, and Roman Garnett, editors, {\em Advances
  in Neural Information Processing Systems 32: Annual Conference on Neural
  Information Processing Systems 2019, NeurIPS 2019, December 8-14, 2019,
  Vancouver, BC, Canada}, pages 8305--8316, 2019.

\bibitem[MMMR18]{MahabadiMMR18}
Sepideh Mahabadi, Konstantin Makarychev, Yury Makarychev, and Ilya~P.
  Razenshteyn.
\newblock Nonlinear dimension reduction via outer bi-lipschitz extensions.
\newblock In {\em Proceedings of the 50th Annual {ACM} {SIGACT} Symposium on
  Theory of Computing, {STOC} 2018, Los Angeles, CA, USA, June 25-29, 2018},
  pages 1088--1101, 2018.

\bibitem[MMR19]{MakarychevMR19}
Konstantin Makarychev, Yury Makarychev, and Ilya~P. Razenshteyn.
\newblock Performance of johnson-lindenstrauss transform for \emph{k}-means and
  \emph{k}-medians clustering.
\newblock In {\em Proceedings of the 51st Annual {ACM} {SIGACT} Symposium on
  Theory of Computing, {STOC} 2019, Phoenix, AZ, USA, June 23-26, 2019}, pages
  1027--1038, 2019.

\bibitem[MS18]{MunteanuS18}
Alexander Munteanu and Chris Schwiegelshohn.
\newblock Coresets-methods and history: {A} theoreticians design pattern for
  approximation and streaming algorithms.
\newblock {\em K{\"{u}}nstliche Intell.}, 32(1):37--53, 2018.

\bibitem[MSSW18]{MunteanuSSW18}
Alexander Munteanu, Chris Schwiegelshohn, Christian Sohler, and David~P.
  Woodruff.
\newblock On coresets for logistic regression.
\newblock In Samy Bengio, Hanna~M. Wallach, Hugo Larochelle, Kristen Grauman,
  Nicol{\`{o}} Cesa{-}Bianchi, and Roman Garnett, editors, {\em Advances in
  Neural Information Processing Systems 31: Annual Conference on Neural
  Information Processing Systems 2018, NeurIPS 2018, December 3-8, 2018,
  Montr{\'{e}}al, Canada}, pages 6562--6571, 2018.

\bibitem[NN19]{NarayananN19}
Shyam Narayanan and Jelani Nelson.
\newblock Optimal terminal dimensionality reduction in euclidean space.
\newblock In {\em Proceedings of the 51st Annual {ACM} {SIGACT} Symposium on
  Theory of Computing, {STOC} 2019, Phoenix, AZ, USA, June 23-26, 2019}, pages
  1064--1069, 2019.

\bibitem[NNW12]{NelsonNW12}
Jelani Nelson, Huy~L. Nguy{\^{e}}n, and David~P. Woodruff.
\newblock On deterministic sketching and streaming for sparse recovery and norm
  estimation.
\newblock In Anupam Gupta, Klaus Jansen, Jos{\'{e}} D.~P. Rolim, and Rocco~A.
  Servedio, editors, {\em Approximation, Randomization, and Combinatorial
  Optimization. Algorithms and Techniques - 15th International Workshop,
  {APPROX} 2012, and 16th International Workshop, {RANDOM} 2012, Cambridge, MA,
  USA, August 15-17, 2012. Proceedings}, volume 7408 of {\em Lecture Notes in
  Computer Science}, pages 627--638. Springer, 2012.

\bibitem[Sar06]{Sarlo2006}
Tam{\'a}s Sarl{\'o}s.
\newblock Improved approximation algorithms for large matrices via random
  projections.
\newblock In {\em Proceedings of the 47th Annual {IEEE} Symposium on
  Foundations of Computer Science ({FOCS})}, pages 143--152, 2006.

\bibitem[SS22]{chris-esa}
Chris Schwiegelshohn and Omar~Ali Sheikh{-}Omar.
\newblock An empirical evaluation of k-means coresets.
\newblock In Shiri Chechik, Gonzalo Navarro, Eva Rotenberg, and Grzegorz
  Herman, editors, {\em 30th Annual European Symposium on Algorithms, {ESA}
  2022, September 5-9, 2022, Berlin/Potsdam, Germany}, volume 244 of {\em
  LIPIcs}, pages 84:1--84:17. Schloss Dagstuhl - Leibniz-Zentrum f{\"{u}}r
  Informatik, 2022.

\bibitem[SW18]{SohlerW18}
Christian Sohler and David~P. Woodruff.
\newblock Strong coresets for k-median and subspace approximation: Goodbye
  dimension.
\newblock In {\em 59th {IEEE} Annual Symposium on Foundations of Computer
  Science, {FOCS} 2018, Paris, France, October 7-9, 2018}, pages 802--813,
  2018.

\end{thebibliography}
\bibliographystyle{alpha}

\newpage

\appendix

\section{Witness Sets}
\label{sec:witness}

For proving Theorem~\ref{thm:witness}, we rely on the existence of small coreset for $(1, z)$-clustering:
\begin{lemma}\label{lem:coreset-huang}[See Cohen-Addad, Saulpic and Schwiegelshohn~\cite{neurips21}]
Let $A$ be a set of $n$ points in $\mathbb{R}^d$. There exists a set $\Omega_A$  of size $\tilde O(\eps^{-2}2^O(z))$ and weights $w : \Omega_A \rightarrow \mathbb{R}_+$ such that, for every possible center $c$, 
\[\left\vert \sum_{i=1}^n\|A_i-c\|^z - \sum_{i\in \Omega}w(i)\|A-c\|^z \right\vert \leq \eps \sum_{i=1}^n\|A_i-c\|^z.\]
\end{lemma}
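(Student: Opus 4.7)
The plan is to follow the sensitivity sampling framework, specialized to one-center clustering where the function class is particularly simple. First, compute in polynomial time a constant-factor approximation $c^\star$ to the optimal $(1,z)$-center of $A$; for example, the best point of $A$ itself is a $2^{O(z)}$-approximation by the weak triangle inequality (\cref{lem:weaktri}). Let $\Delta := \cost^z(A, c^\star)/n$ denote the average cost under $c^\star$.

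Next, assign each $p \in A$ a sensitivity score $\sigma_p$ proportional to $\|p - c^\star\|^z + \Delta$. Using \cref{lem:weaktri} twice, for every candidate center $c$ and every $p$,
\[
\|p - c\|^z \leq 2^{O(z)}\bigl(\|p-c^\star\|^z + \|c^\star - c\|^z\bigr),
\]
and $n \cdot \|c^\star - c\|^z \leq 2^{O(z)}(\cost^z(A,c) + \cost^z(A,c^\star))$, so the total sensitivity $T := \sum_p \sigma_p$ is $O(2^{O(z)})$. Drawing $m$ i.i.d.\ samples from $\sigma_p/T$ and weighting each by $T/(m \sigma_p)$ produces an unbiased estimator of $\cost^z(A,c)$ for every fixed $c$, and my coreset $\Omega_A$ will be the (deduplicated) set of sampled points with these importance weights.

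The heart of the argument is a uniform convergence statement over the function class $\mathcal{F} := \{f_c : p \mapsto \|p-c\|^z/\cost^z(A,c) : c \in \R^d\}$. Because there is a single center, the shattering dimension of the sublevel sets of $\mathcal{F}$ is essentially constant (every sublevel set is a Euclidean ball around one point), so one obtains a dimension-free covering entropy. Feeding this into the chaining-based analysis of Cohen-Addad--Saulpic--Schwiegelshohn then shows that $m = \tilde O(\varepsilon^{-2} 2^{O(z)})$ samples suffice to ensure $|\sum_{i} w(i) f_c(A_i) - \mathbb{E}[f_c]| \leq \varepsilon$ for all $c \in \R^d$ simultaneously, which is exactly the desired relative-error coreset guarantee.

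The main obstacle is eliminating the dimension from the coreset size: a naive $\varepsilon$-net on centers would produce a $d$-dependent bound of the form $\tilde O(\varepsilon^{-2} d \cdot 2^{O(z)})$. Overcoming this requires the chaining (or Dudley-style) argument, which leverages the fact that perturbing $c$ by a small amount perturbs each $f_c$ in a Lipschitz way, so only a coarse net is needed at the largest scale. The $z$-dependence ($2^{O(z)}$ rather than something worse) is controlled by carefully applying \cref{lem:weaktri} in the concentration bound and in the total sensitivity computation, so that the $z$-th power's Lipschitz constants never compound beyond $2^{O(z)}$.
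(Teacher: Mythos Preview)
The paper does not prove this lemma at all: it is stated with the attribution ``[See Cohen-Addad, Saulpic and Schwiegelshohn~\cite{neurips21}]'' and is used as a black box in the witness-set section. So there is no ``paper's own proof'' to compare against; the authors simply import the result from the cited reference.

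Your sketch is a reasonable outline of the argument in that reference: sensitivity sampling around a crude $2^{O(z)}$-approximate center, total sensitivity $2^{O(z)}$, and then a chaining / VC-type uniform-convergence bound exploiting that for a \emph{single} center the relevant range space (sublevel sets of $p\mapsto\|p-c\|^z$) consists of Euclidean balls, hence has constant pseudo-dimension independent of $d$. That is exactly what gives the $\tilde O(\varepsilon^{-2}2^{O(z)})$ size with no $d$ in it. One small clarification: the dimension-independence does not really require the full chaining machinery you allude to; already the classical Feldman--Langberg sensitivity framework combined with the fact that balls in $\R^d$ have VC-dimension $O(1)$ (and hence the weighted function space for $(1,z)$-clustering has constant pseudo-dimension) yields the bound. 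The chaining analysis in~\cite{neurips21} is what sharpens the $\varepsilon$-dependence and handles general $k$; for $k=1$ the simpler argument already suffices.
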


{\rm
We use those coresets as follow. Let $a \in A$ be a center inducing a $2^z$-approximation for $(1,z)$-clustering on $A$, and let $\Delta_a$ be the average cost for that solution (note that $\Delta = \Delta=\frac{\sum_{i=1}^n\|A-c\|^z}{n} \leq 2^z \Delta_a$). We show how to use Lemma~\ref{lem:coreset-huang} to show that their exists a coreset $\Omega_A$ where all points are at distance at most $\frac{z}{\eps} \cdot \left(\Delta_a\right)^{1/z}$ of $a$. This is enough to conclude the existence of $(\frac{2z}{\eps}, \eps^{-O(1)}\cdot 2^{O(z)} ,\eps)$-uniform witness set: the optimal center for $\Omega_A$ must lie in the convex hull of $\Omega_A$, and the diameter of $\Omega_A$ is at most $\frac{2z}{\eps} \Delta$.
}

\begin{lemma}
Let $A$ be a set of $n$ points in $\mathbb{R}^d$, and $a \in A$ be a center inducing a $2^z$-approximation for $(1,z)$-clustering on $A$. 

There exist a set $\Omega_A$  of size $\eps^{-2} 2^{O(z)}$ and weights $w : \Omega_A \rightarrow \mathbb{R}_+$ such that: every point of $\Omega_A$ is at distance at most $\frac{2z}{\eps} \Delta$ of $a$, and for every possible center $c$, 
\[\left\vert \sum_{i=1}^n\|A_i-c\|^z - \sum_{i\in \Omega}w(i)\|A-c\|^z \right\vert \leq \eps \sum_{i=1}^n\|A_i-c\|^z.\]
\end{lemma}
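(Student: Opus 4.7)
The strategy is to combine Lemma~\ref{lem:coreset-huang} with a truncation argument that leverages the approximation quality of $a$. Since $a$ is a $2^z$-approximation we have $\Delta_a \leq 2^z \Delta$, so the threshold $T := \frac{z}{\eps}\Delta_a^{1/z}$ automatically satisfies $T \leq \frac{2z}{\eps}\Delta^{1/z}$, matching the required distance bound. Partitioning $A = A_{near} \cup A_{far}$ with $A_{near} := A \cap B(a, T)$, Markov's inequality applied to $\sum_{p \in A}\|p-a\|^z = n\Delta_a$ yields $|A_{far}| \leq n(\eps/z)^z$, so far points are very sparse.

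I would next construct a proxy dataset $\tilde A$ by snapping each $p \in A_{far}$ to its radial projection $\tilde p := a + T(p-a)/\|p-a\|$ onto the sphere of radius $T$ around $a$, while leaving $A_{near}$ unchanged. Applying Lemma~\ref{lem:coreset-huang} to $\tilde A$ with a suitably tightened precision parameter $\eps'$ produces a weighted set $\Omega_A$ of size $\tilde O((\eps')^{-2} 2^{O(z)})$, all of whose points lie within $T$ of $a$. It then remains to verify that $\Omega_A$ is an $\eps$-coreset for the \emph{original} set $A$, which reduces to bounding the truncation error
\[
\Bigl|\sum_{p \in A_{far}} \bigl(\|p-c\|^z - \|\tilde p - c\|^z\bigr)\Bigr| \;\leq\; \eps \cdot \sum_{p \in A}\|p-c\|^z
\]
for every candidate center $c$.

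I would establish this bound by applying the weak triangle inequality (Lemma~\ref{lem:weaktri}) pointwise and splitting the far points into two regimes based on the ratio $\|a-c\|/\|p-a\|$. When $\|a-c\|$ dominates $\|p-a\|$, both $\|p-c\|^z$ and $\|\tilde p - c\|^z$ are $(1\pm\eps)$-approximations of $\|a-c\|^z$, so the per-point error is proportional to $\eps \|a-c\|^z$ and telescopes into $\eps \cdot \cost(A,c)$. In the opposite regime, $\|p-c\|^z$ is a $(1\pm\eps)$-approximation of $\|p-a\|^z$ and, since $c$ must then be close enough to $a$ that every far point still pays roughly $\|p-a\|^z$, the total cost $\cost(A,c)$ is dominated by the far-point contributions, which again lets us charge the per-point error to the genuine cost. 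The main obstacle is this second regime, where the naive weak-triangle bound carries a factor $\bigl((z+\eps)/\eps\bigr)^{z-1}$; I would absorb this factor by invoking Lemma~\ref{lem:coreset-huang} with precision $\eps' := \eps/f(z)$ for an appropriate $f$, inflating only the $2^{O(z)}$ constant in the coreset size while preserving the $\eps^{-2}$ dependency and thereby yielding the claimed size and diameter guarantees.
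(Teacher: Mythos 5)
Your approach of snapping each far point to the sphere of radius $T$ around $a$ has a genuine gap in the second regime ($\|p-a\|$ dominates $\|a-c\|$, i.e.\ $c$ close to $a$). There the snapped point $\tilde p$ satisfies $\|\tilde p - c\|^z \approx T^z$ while the original satisfies $\|p-c\|^z \approx \|p-a\|^z$, so the per-point error is roughly $\|p-a\|^z - T^z$, which is of the same order as the genuine per-point cost $\|p-a\|^z$ rather than $\eps$ times it. Concretely, take $c=a$: then $\cost(\tilde A, a) = \cost(A_{near},a) + |A_{far}|\,T^z$ whereas $\cost(A,a) = \cost(A_{near},a) + \cost(A_{far},a)$, and a single extreme outlier makes $\cost(A_{far},a) \gg |A_{far}| T^z$, so the truncation error is not $O(\eps)\cost(A,a)$. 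Your claim that the per-point error ``can be charged to the genuine cost'' would need a factor $\eps$ that simply isn't there; tightening $\eps'$ in Lemma~\ref{lem:coreset-huang} cannot repair this, since the error arises before the coreset lemma is ever invoked.

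The paper avoids this by not snapping each far point individually. Instead it deletes the whole set $O_A$ of far points and inserts a \emph{single} representative $p$ at the threshold distance, carrying weight $w(p) = \cost(O_A,a)/\cost(p,a)$. This weight is chosen so that $w(p)\,\cost(p,a) = \cost(O_A,a)$ exactly, which fixes precisely the regime your snapping breaks. It then also ensures the over-count is harmless for far $c$, because $w(p) \le n(\eps/z)^{2z}$ (the paper actually uses the larger threshold $(\tfrac{z}{\eps})^2\Delta_a^{1/z}$ in the proof, which is what makes this bound small enough). That single weighted representative is the missing ingredient; a weighted version of your per-point snapping (weight $\|p-a\|^z/T^z$ on each $\tilde p$) would fix the near-$a$ regime but would then over-count by an unbounded factor when $c$ is far, so it is genuinely the aggregation into one point, not a weighting trick on your construction, that makes the argument go through.
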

\begin{proof}
Let $O_A$ be the set of points at distance more than $\left(\frac{z}{\eps}\right)^{2} \Delta_a^{1/z}$ of $a$. Let $p$ be any point at distance 
$\left(\frac{z}{\eps}\right)^{2} \Delta_a^{1/z}$ of $a$\footnote{this distance is somewhat arbitrarily picked in order to simplify the proof.}.
 We start by showing that, for all center $c$,
\begin{equation}
\left\vert \cost(A, c) - \left( \cost(A \setminus O_A, c) + \frac{\cost(O_A, a)}{\cost(p, a)}\cost(p, c)\right)\right\vert \leq O(\eps)\left(\cost(A, c) + \cost(A, a)\right). \label{eq:far}
\end{equation}

Proving this inequality would conclude the lemma. Indeed, since $2^{-z}\cost(A, a) \leq \cost(A, c)$, the right hand side can be upper-bounded by $O(\eps 2^z) \cost(A, c)$. Hence, adding the point $p$ with weight $w(p) := \frac{\cost(O_A, a)}{\cost(p, a)}$ to an $\eps$-coreset $\Omega^1_A$ (with weights $w$) of $A \setminus O_A$ provided by  Lemma~\ref{lem:coreset-huang} gives a $O(\eps 2^z)$-coreset for $A$. Rescaling $\eps$ by $2^z$ concludes.

To show Equation~\ref{eq:far}, we fix a center $c$. We start by bounding $|O_A|\cost(a, c)$. For that, we show that the cost of clients in $O_A$ can be charged to clients of $A \setminus O_A$.  First note that by averaging, $|O_A| \leq \left(\nicefrac \eps z\right)^{2z}|A|$ -- otherwise $\cost(O_A, a) \geq |O_A|\left(\frac{z}{\eps}\right)^{2z} \Delta_a > |A| \Delta_a = \cost(A, a)$. 
  Hence, one can partition $A \setminus O_A$ into parts of size at least $s = (|A|-|O_A|)/|O_A|$, and assign every part to a point in $O_A$ in a one-to-one correspondence. 

   For such a point $p \in O_A$ consider the $s$
  points $p_1,\ldots,p_s$ of the part assigned to it. We show how to charge $\cost(a, c)$ to the cost of those points. By the modified triangle inequality, for all $i$ $\cost(a, c) \leq (1+\eps)\cost(a, p_i) + (1+z/\eps)^{z-1} \cost(p_i, c)$. Taking an average over all $i$ gives that $\cost(a, c)$ is  
  at most $\frac{1}{s}\sum_{j=1}^{s}   \eps \cost(p_j, a) + (1+z/\eps)^{z-1}\cost(p_j, c)$. Since $1/s \leq 2|O_A| / |A| = 2\left(\nicefrac \eps z\right)^{2z}$, 
  we conclude that (using $2\left(\frac{\eps}{z}\right)(1+\eps) \leq 1$)

\begin{align}
\notag
|O_A|\cost(a, c) &\leq 2\left(\frac{\eps}{z}\right)^{z+1} (1+\eps) \sum_{p \in A \setminus O_A} \cost(p, a) + \cost(p, c)\\
&\leq \left(\frac{\eps}{z}\right)^{z}(\cost(A, a) + \cost(A, c))
\label{eq:out-1-app}
\end{align}

We now show how to use this inequality to show Equation~\ref{eq:far}, and start by showing \\
$\frac{\cost(O_A, a)}{\cost(p, a)}\cost(p, c) \leq (1+O(\eps))\cost(O_A, c) + O(\eps) \cost(A, c)$.

  We proceed as follows: we decompose the left-hand-side using Lemma~\ref{lem:weaktri}. Let $p$ be the point chosen to represent $O_A$:
  \begin{align*}
  \frac{\cost(O_A, a)}{\cost(p, a)} \cost(p, c)
  &\leq (1+\eps)\cost(O_A, a) + (1+z/\eps)^{z-1} \left(\frac{\cost(O_A, a)}{\cost(p, a)}\right)\cost(a, c)
  \end{align*}
  
  We bound separately the two terms. First,
  \begin{align*}
\cost(O_A, a) 
&\leq (1+\eps)\cost(O_A, c) + (1+z/\eps)^{z-1} |O_A|\cost(a, c)\\
&\leq (1+\eps)\cost(O_A, c) + (1+z/\eps)^{z-1} \left(\frac{\eps}{z}\right)^z(\cost(A, a) + \cost(A, c))\\
&\leq (1+\eps)\cost(O_A, c) + \eps (\cost(A, a) + \cost(A, c)\\
& \leq \cost(O_A, c) + O(\eps) (\cost(A, a) + \cost(A, c).
\end{align*}

Similarly, using $\cost(p, a) = \left(\frac{z}{\eps}\right)^{2z} \frac{\cost(A, a)}{|A|} \geq \left(\frac{z}{\eps}\right)^{2z} \frac{\cost(O_A, a)}{|A|}$:
  \begin{align}
  \notag
(1+z/\eps)^{z-1} \left(\frac{\cost(O_A, a)}{\cost(p, a)}\right)\cost(a, c)
&\leq  (1+z/\eps)^{z-1} \cdot \left(\frac{\eps}{z}\right)^{2z} |A| \cost(a, c)\\
\notag
&\leq \left(\frac{\eps}{z}\right)^{z} \left((1+\eps)\cost(A, c) + (1+z/\eps)^{z-1}\cost(A, a)\right)\\
&= O(\eps)(\cost(A, c) + \cost(A, a)). \label{eq:boundFar}
\end{align}

Hence,
\begin{align}
\frac{\cost(O_A, a)}{\cost(p, a)}\cost(p, c) &\leq \cost(O_A, c) + O(\eps)\left(\cost(A, c) + \cost(A, a)\right). \label{eq:costOA1}
\end{align}

We now turn to the other direction, namely 
$$\cost(O_A, c) \leq (1+\eps)\frac{\cost(O_A, a)}{\cost(p, a)}\cost(p, c) + O(\eps) \left(\cost(A, c) + \cost(A, a)\right).$$

We write
\begin{align*}
\cost(O_A, c) &\leq (1+\eps)\cost(O_A, a) + (1+z/\eps)^{z-1}|O_A|\cost(a, c),
\end{align*}
and bound here as well the two terms separately. First we have, using Equation~\ref{eq:boundFar}:
\begin{align*}
\cost(O_A, a) &= \frac{\cost(O_A, a)}{\cost(p, a)}\cost(p, a)\\
&\leq (1+\eps)\frac{\cost(O_A, a)}{\cost(p, a)}\cost(p, c) + \left(\frac{z}{\eps}\right)^{z-1} \frac{\cost(O_A, a)}{\cost(p, a)}\cost(a, c)\\
&\leq (1+\eps)\frac{\cost(O_A, a)}{\cost(p, a)}\cost(p, c) + O(\eps)\left(\cost(A, c) + \cost(A, a)\right)
\end{align*}

The second term is bounded by Equation~\ref{eq:out-1-app}: 
\[(1+z/\eps)^{z-1}|O_A|\cost(a, c) \leq \eps \left(\cost(A, c) + \cost(A, a)\right).\]

Hence,
\begin{gather}
    \cost(O_A, c) \leq (1+\eps)\frac{\cost(O_A, a)}{\cost(p, a)}\cost(p, c) + O(\eps) \cost(A, c). \label{eq:costOA2}
\end{gather}

Combining Equations \ref{eq:costOA1} and \ref{eq:costOA2} yields:
\begin{align*}
\left\vert \cost(O_A, c) - \frac{\cost(O_A, a)}{\cost(p, a)}\cost(p, c) \right \vert \leq O(\eps)\left(\frac{\cost(O_A, a)}{\cost(p, a)}\cost(p, c)  + \cost(A, c)\right).
\end{align*}
Using $\frac{\cost(O_A, a)}{\cost(p, a)}\cost(p, c) \leq \cost(O_A, c) + \left\vert \cost(O_A, c) - \frac{\cost(O_A, a)}{\cost(p, a)}\cost(p, c) \right \vert$ finally concludes the proof of Equation~\ref{eq:far}, and hence the lemma.
\end{proof}

\section{Mission Proofs of \cref{sec:dimred}}

\begin{lemma}\label{lem:halfPartitionDimRed}
Let $P \subset \R^d$ be a a multiset with at most  $T$ distinct points. 
Let $\calN$ and $\Pi$ be the set and projection defined in the proof of \cref{lem:partitionDimRed}.

Then,
for any extension $P'$ of $P$ where each point $p$ has coordinate extension $p'$,  and for any subset $C$ of $P$ (with extension $C'$), 
\[\sum_{p \in C} \left\|\zext{p}{p'} - \mu^z_0(C')\right\|^z \leq (1+\eps) \sum_{p \in C}\left\|\zext{\Pi p}{p'}- \mu^z_0(\Pi'(C))\right\|^z\]
where $\Pi'(C) := \left\{\zext{\Pi p}{p'}, p\in C\right\}$ is the projection of $C$ with the same extensions as $C'$.
\end{lemma}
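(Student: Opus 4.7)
The plan is to mirror the forward direction from the proof of \cref{lem:partitionDimRed}, but starting from the projected side instead of the original side. The goal is to exhibit a net point $\tilde c \in \calN$ such that the center $\zext{\tilde c}{0}$ has cost on $C'$ at most $(1+O(\eps))\costP_0(\Pi'(C), \Pi'(C))$; since $\mu_0^z(C')$ is optimal among centers with vanishing extension coordinate, this will give the result after rescaling $\eps$.

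First, I would apply \cref{thm:witness} to $\Pi'(C) \subseteq \R^{m+1}$ under the extension-coordinate-zero constraint, obtaining a witness set $\hat S' \subseteq \Pi'(C)$ of size $O(\eps^{-2})$ and an approximate center $\hat c^* = \zext{\hat c}{0} \in \conv(\hat S')$ whose cost is within a factor $(1+\eps)$ of $\mu_0^z(\Pi'(C))$. I then lift $\hat c^*$ to the original space: writing $\hat c^* = \sum_i \lambda_i \zext{\Pi p_i}{p_i'}$ with $\sum_i \lambda_i = 1$, $\lambda_i \ge 0$, and $\sum_i \lambda_i p_i' = 0$, set $c := \sum_i \lambda_i p_i$, so that $\Pi c = \hat c$ by linearity and $c \in \conv(S)$ for $S := \{p_i\} \subseteq C \subseteq P$. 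Since $|S| = O(\eps^{-2})$, $\calN$ contains by construction an $\eps' \diam(S)$-cover of $\conv(S)$ together with an orthonormal basis $V$ of $\spa(S)$; I snap $c$ to the closest point $\tilde c$ in that cover.

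The key distance estimate is $\|p - \tilde c\|$ versus $\|\Pi p - \Pi c\|$ for each $p \in C$. Since $p, \tilde c \in \calN$, pairwise-distance preservation of $\Pi$ on $\calN$ gives $\|p-\tilde c\| \leq (1+\eps/z)\|\Pi p - \Pi \tilde c\|$, and the triangle inequality reduces matters to bounding $\|\Pi(c-\tilde c)\|$. Because $V \subseteq \calN$ and $\Pi$ preserves norms of basis vectors (versus the origin) and their pairwise distances, a short polarization argument shows that $\Pi$ is a $(1\pm O(\eps/z))$-isometry on $\spa(V) = \spa(S)$, so $\|\Pi(c-\tilde c)\| \leq (1+O(\eps/z))\eps'\diam(S)$. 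To make $\eps'\diam(S)$ small enough in absolute terms, I would bound $\diam(S) \leq (1+O(\eps/z))\,D\,\Delta_{\Pi'(C)}^{1/z}$ with $\Delta_{\Pi'(C)} := \costP_0(\Pi'(C), \Pi'(C))/|C|$: the witness-set property controls $\max_{i,j} \|\Pi p_i - \Pi p_j\|$, and the distortion property of $\Pi$ on pairs from $P \subseteq \calN$ transfers this back to the diameter of $S$ in $\R^d$.

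From there the computation is essentially the one from the forward direction run in reverse. Adding ${p'}^2$, taking the $z/2$ power, and invoking \cref{lem:weaktri} exactly as in the forward direction, I would obtain $\|\zext{p}{p'} - \zext{\tilde c}{0}\|^z \leq (1+O(\eps))\|\zext{\Pi p}{p'} - \hat c^*\|^z + O(\eps)\Delta_{\Pi'(C)}$. Summing over $p \in C$, using that $\hat c^*$ is a $(1+\eps)$-approximation for $\mu_0^z(\Pi'(C))$, and applying optimality of $\mu_0^z(C')$ on the left yields $\costP_0(C', C') \leq (1+O(\eps))\costP_0(\Pi'(C), \Pi'(C))$, which gives the lemma after rescaling $\eps$. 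The main technical obstacle will be establishing the diameter bound on $S$ without circularity: it must be derived from the witness set chosen in the projected space together with the pairwise distortion guarantee on $P$, rather than from any analogous bound in the original space.
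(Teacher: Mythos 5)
Your proposal matches the paper's proof almost step for step: apply the witness-set theorem to the projected cluster $\Pi'(C)$ under the zero-extension constraint, lift the approximate center back into $\conv(S)$ for the preimage $S \subset P$ by linearity/surjectivity of $\Pi$, snap to the nearest point $\tilde c$ of the $\eps'\diam(S)$-cover of $\conv(S)$ stored in $\calN$, bound $\|\Pi(c - \tilde c)\|$ using the preserved basis of $\spa(S)$, and then run the $(\|\cdot\|^2 + {p'}^2)^{z/2}$ computation from the forward direction in reverse (this is exactly the paper's Claim~\ref{claim:projCover} and Claim~\ref{claim:ptoext}). The only cosmetic difference is that the paper sets up a WLOG assumption $\costP_0(\Pi'(C)) \leq \costP_0(C')$ so that it can express the additive error term as $\eps\Delta_{C'}$ and then rearrange, whereas you track $\Delta_{\Pi'(C)}$ directly, which is a mild simplification and dispenses with that rearrangement.
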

\begin{proof}
Assume that $\costP_0(\Pi'(C))\leq \costP_0(C')$ (otherwise we are already done), and let $S_\Pi$ be a witness set for points $\Pi'(C)$. The assumption on $\costP_0(\Pi'(C))$ ensures that the diameter of $S_\Pi$ is at most $\Delta_{C'}^{1/z}$, where $\Delta_{C'} := \frac{ \costP(C')}{|C'|}$.

Let $S \subset P$ such that $\Pi(S) = S_\Pi$. Since $\Pi$ preserves distances between points of $P$ up to an error 
$(1\pm \eps/z)$, the diameter of $S$ is at most $(1+\eps/z) \Delta_{C'}^{1/z}$: hence, the $\eps' \diam(S)$-cover of $\conv(S)$ is an $\eps'(1+\eps/z) \Delta_{C'}^{1/z}$-cover of $\conv(S)$.

$\Pi$ has two crucial properties. First, since it is a scaled linear projection, it is surjective -- more precisely, the image of $\conv(S)$ is $\conv(\Pi(S))$. 
Second, the image of the cover of $\conv(S)$ is a cover of $\conv(\Pi(S))$; more precisely, we have the following claim: 

\begin{claim}\label{claim:projCover}
    Let $s \in S$ and $c \in \calN$ be its closest point in a $\eps' \diam(S)$-cover of $\conv(S)$. Then, for any $p \in C$, $\|p-c\| \leq  (1+\varepsilon/z) \cdot \left(\|\Pi p - \Pi s\| + \eps/(2z) \cdot \Delta_{C}^{1/z} \right).$
\end{claim}
\begin{proof}
Let $V$ be the orthonormal basis of $S$ added to $\calN$: since $\Pi$ preserves the norm of all vectors in $V$ up to $(1\pm \eps/z)$, and is a linear mapping, its operator norm on the subspace spanned by $S$ is bounded by $(1+\eps/z)$. Therefore:
\begin{align}
    \notag
    \|\Pi s - \Pi c\| &\leq (1+\eps/z) \|s-c\| \\
    \notag
    &\leq (1+\eps/z)  \eps' \diam(S) \leq (1+\eps/z)^2  \eps' \diam(\Pi(S))\\
    \label{eq:projNet}
    &\leq 2\eps' D \Delta_{C'}^{1/z}
\end{align}

Hence, for any point $p \in C$:

\begin{eqnarray*}
& & \|p- c\| \\
(\text{Distortion of Embedding})&\leq & (1+\varepsilon/z)\cdot \|\Pi p- \Pi c\| \\
&\leq & (1+\varepsilon/z)\cdot \left(\|\Pi p-\Pi s\| + \|\Pi s - \Pi c\|\right) \\
(Eq.~\ref{eq:projNet}) &\leq & (1+\varepsilon/z) \cdot \left(\|\Pi p-\Pi s\| +2\eps' D \Delta_{C'}^{1/z} \right) \\
(\text{choice of }\eps') & \leq&  (1+\varepsilon/z) \cdot \left(\|\Pi p - \Pi s\| + \eps/(2z) \cdot \Delta_{C'}^{1/z} \right) \\
\end{eqnarray*}
\end{proof}

Using \cref{lem:weaktri}, we extend the previous inequality as follows: 

\begin{claim}\label{claim:ptoext}
    $\left\|\zext{p}{p'} - \zext{c}{0}\right\|^z \leq (1+3\eps) \cdot (\|\Pi p - \Pi s\|^2 + {p'}^2)^{z/2} + \eps \cdot \Delta_{C'}$
\end{claim}
\begin{proof}
Using \cref{lem:weaktri}, we get $\|p- c\|^2 \leq (1+\eps/z)^3 (\|\Pi p-\Pi s\|^2 + \frac{\eps}{2z} \Delta_{C'}^{2/z})$, and we obtain similarly
    \begin{align*}
\left\|\zext{p}{p'} - \zext{c}{0}\right\|^z &= \left(\|p - c\|^2 + p'^2\right)^{z/2}\\
&\leq \left((1+\varepsilon/z)^3 \cdot \left(\|\Pi p - \Pi s\|^2 +\frac{\eps}{2z}\cdot (\Delta_{C'})^{2/z}\right)  + {p'}^2\right)^{z/2}\\
&\leq (1+\varepsilon/z)^{3z/2} \cdot\left( (\|\Pi p - \Pi s\|^2 + {p'}^2) +\frac{\eps}{2z}\cdot (\Delta_{C'})^{2/z}\right)^{z/2}\\
&\leq (1+\varepsilon/z)^{3z/2}\left( (1+\eps) \cdot (\|\Pi p - \Pi s\|^2 + {p'}^2)^{z/2} + \left(\frac{z+\eps}{\eps}\right)^{z/2-1} \left(\frac{\eps}{2z}\right)^{z/2} \cdot \Delta_{C'}\right)\\
\qquad &\leq (1+3\eps) \cdot (\|\Pi p - \Pi s\|^2 + {p'}^2)^{z/2} + \eps \cdot \Delta_{C'}
\end{align*}
\end{proof}

We can now conclude and show that  $\costP_0(C')\leq (1+O(\eps))\costP_0(\Pi'(C))$. For this, we chose $s$ and $c$ as follows: since $S_\Pi$ is a witness set for points in $\Pi'(C)$, there exists a $(1+\eps)$-approximate center for $\Pi'(C)$ in the convex hull of $S_\Pi$: let $s \in \conv(S)$ such that $\zext{\Pi s}{0}$ is that approximate center -- such an $s$ exists due to the surjectivity of $\Pi$. 
Furthermore, let $c$ be the closest point to $s$ in the cover of $\conv(S)$. With those choices, we get:
\begin{align*}
    \costP_0(C') &\leq \sum_{p \in C} \left\| \zext{p}{p'} - \zext{c}{0}\right\|^z\\
    &\leq \sum_{p\in C}(1+O(\eps))\lpar \|\Pi p - \Pi s\|^2 + p'^2\rpar^{z/2} + \eps \Delta_{C'}\\
    &\leq (1+O(\eps)) \costP_0(\Pi'(C)) + \eps \costP_0(C'),
\end{align*}
and therefore $\costP_0(C')\leq (1+O(\eps))\costP_0(\Pi'(C))$.
\end{proof}

\section{Missing Proofs of \cref{sec:coresetImproved}}
\label{app:proofCoreset}

\begin{lemma}
\mbox{}\label{lem:kepstiny}
It holds that
    \[\max\left(\sum_{p \in I_{tiny,\calS}} \cost(p, \calS), ~
       \sum_{p \in  I_{tiny,\calS} \cap \coreset} 
             \frac{|R_{i,j}|}{\delta}\cost(p,\calS)\right)
     \le \varepsilon \cdot \cost(R_{i,j},\seeded).\]
\end{lemma}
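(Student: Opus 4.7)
The plan is to prove the two bounds separately, exploiting that every tiny range $I_\ell$ admits a uniform pointwise upper bound on the cost to $\calS$. For the first inequality, I would unpack the definition of tiny: if $\ell \le j+\log_{1+\varepsilon''}\varepsilon$ and $p\in I_\ell$, then by definition of $I_\ell$ we have $\cost(p,\calS)<(1+\varepsilon'')^{\ell+1}\le (1+\varepsilon'')\varepsilon\cdot(1+\varepsilon'')^{j}$, which is essentially $\varepsilon\cdot 2^j\Delta_{C_i}$ up to the scale convention used by the rings (this is the only place where one needs to be careful with the relationship between the ring index $j$ and the exponential base $1+\varepsilon''$). On the other hand, for \emph{every} point $q\in R_{i,j}$ we have $\cost(q,\seeded)\ge 2^j\Delta_{C_i}$ by the definition of a ring. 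Summing over all $p\in I_{tiny,\calS}\subseteq R_{i,j}$ then yields
\[
  \sum_{p\in I_{tiny,\calS}}\cost(p,\calS)\le (1+\varepsilon'')\varepsilon\cdot |R_{i,j}|\cdot 2^j\Delta_{C_i}\le 2(1+\varepsilon'')\varepsilon\cdot \cost(R_{i,j},\seeded),
\]
which after rescaling $\varepsilon$ by a constant factor gives the claim.

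For the second inequality I would use the $\varepsilon'$-set-approximation property. The key observation is that $I_{tiny,\calS}$ equals $R_{i,j}\setminus H_{\calS,r}$ for the appropriate threshold $r=(1+\varepsilon'')^{\ell_0+1}$, so its fraction inside $R_{i,j}$ is approximately preserved by $\Omega$: taking complements in \cref{def:vcdim} gives
\[
  \left|\frac{|I_{tiny,\calS}\cap\Omega|}{|\Omega|}-\frac{|I_{tiny,\calS}|}{|R_{i,j}|}\right|\le \varepsilon',
\]
and in particular $|I_{tiny,\calS}\cap\Omega|\cdot|R_{i,j}|/|\Omega|\le (1+\varepsilon')|R_{i,j}|$. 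Combining this with the same pointwise upper bound $\cost(p,\calS)\le (1+\varepsilon'')\varepsilon\cdot 2^j\Delta_{C_i}$ for $p\in I_{tiny,\calS}\cap\Omega$, and again using $\cost(R_{i,j},\seeded)\ge |R_{i,j}|\cdot 2^j\Delta_{C_i}$, one gets
\[
  \sum_{p\in I_{tiny,\calS}\cap\Omega}\frac{|R_{i,j}|}{|\Omega|}\cost(p,\calS)\le (1+\varepsilon')(1+\varepsilon'')\varepsilon\cdot\cost(R_{i,j},\seeded),
\]
again of the form $O(\varepsilon)\cost(R_{i,j},\seeded)$.

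\textbf{Main obstacle.} The only delicate point is bookkeeping: one must verify that the sub-constant factors $(1+\varepsilon')$ and $(1+\varepsilon'')$ can be absorbed into the target $\varepsilon$ by the choices $\varepsilon''=\varepsilon/10$ and $\varepsilon'=20\cdot 2^{3z}\varepsilon^2/\log(4z/\varepsilon)$ that are fixed elsewhere in the analysis, so that the final constant on the right-hand side is exactly $\varepsilon$ (or one may simply rescale $\varepsilon$ by a constant at the outset, as is done throughout the paper). Everything else is a direct consequence of the definitions of the ring, the tiny range, and the $\varepsilon'$-set-approximation.
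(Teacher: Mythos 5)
Your proof is correct in its core steps, and in particular it identifies the right pointwise observation --- for a tiny point $p$, $\cost(p,\calS)$ is at most roughly $\eps$ times the uniform per-point cost $2^j\Delta_{C_i}$ in $R_{i,j}$, which is in turn comparable to $\cost(p,\seeded)$; the first bound then falls out by summing over $I_{tiny,\calS}\subseteq R_{i,j}$, exactly as the paper does. You also correctly flag that the definitions of $I_\ell$ (indexed by powers of $1+\varepsilon''$, with no explicit $\Delta_{C_i}$) and $R_{i,j}$ (indexed by powers of $2$ times $\Delta_{C_i}$) are not on the same scale as literally written, so matching them requires reading the intent rather than the letter of the definitions.

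The one genuine divergence from the paper is in the second bound, where you invoke the $\varepsilon'$-set-approximation property of $\Omega$ to control $|I_{tiny,\calS}\cap\Omega|\cdot|R_{i,j}|/|\Omega|$. This works, but it is heavier machinery than needed: the paper never appeals to the approximation guarantee here, because the trivial bound $|I_{tiny,\calS}\cap\Omega|\le|\Omega|=\delta$ already gives $\frac{|I_{tiny,\calS}\cap\Omega|}{\delta}\le 1$, after which the same pointwise bound (together with the fact that all points of $R_{i,j}$ have $\cost(p,\seeded)$ within a constant factor of $\cost(R_{i,j},\seeded)/|R_{i,j}|$) closes the argument. The set-approximation property is saved for the interesting ranges, where one actually needs to compare $|I_\ell|$ to $|I_\ell\cap\Omega|$ group by group; for the tiny ranges, a worst-case count suffices. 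Your route yields the same conclusion, just with an extra $(1+\varepsilon')$ factor that has to be absorbed.
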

\begin{proof}
By definition of $I_{tiny,\calS}$, $\sum_{p \in I_{tiny,\calS}} \cost(p, \calS) \leq \sum\limits_{p \in I_{tiny,\calS}} \frac{\varepsilon}{2}\cdot  \cost(p, \seeded) \leq \frac{\varepsilon}{2} \cdot \cost(R_{i,j}, \seeded)$. 
Similarly, we have for the other term 
  \begin{align*}
   \sum_{p \in  I_{tiny,\calS} \cap \coreset} \frac{|R_{i,j}|}{\delta} \cdot \cost(p,\calS) 
   & \leq  \sum_{p \in I_{tiny,\calS} \cap \coreset}  \frac{|R_{i,j}|}{\delta} \frac{\varepsilon}{2}\cdot  \cost(p, \seeded) \\
   & \leq  \varepsilon\cdot  \frac{|R_{i,j}|}{\delta} \sum_{p \in  C_i \cap I_{tiny,\calS} \cap \coreset}  \frac{2^z\cost(R_{i,j},\seeded)}{|R_{i,j}|} \\
  & \leq  \varepsilon\cdot  \frac{|I_{tiny, \calS} \cap \coreset|}{\delta}\cost(R_{i,j},\seeded) \leq  \varepsilon\cdot  \cost(G,\seeded)  .
  \end{align*}
  where we used that since each point of $R_{i,j}$ have same cost up to a factor $2^z$, it holds that $\forall p\in R_{i,j}, \cost(p, \seeded) \leq 2^z \frac{\cost(R_{i,j}, \seeded)}{|R_{i,j}|}$.
The last inequality uses that $\coreset$ contains $\delta$ points.
\end{proof}

\begin{lemma}
\label{lem:khuge}
It holds that, for any $R_{i,j}$  and for all solutions $S$ with at least one non-empty huge group $I_{i,j,\ell}$
$$\left\vert\cost(R_{i,j},\calS) - \sum_{p\in \Omega \cap R_{i,j}} \frac{|R_{i,j}|}{\delta} \cdot \cost(p,\calS) \right\vert \leq  3\eps\cdot \cost(R_{i,j},\calS).\qedhere$$
\end{lemma}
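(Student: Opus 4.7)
The argument follows the pattern of Lemma~7 in~\cite{Cohen-AddadSS21}. The idea is that a nonempty huge group witnesses that some point $p^* \in R_{i,j}$ has $\cost(p^*,\calS)$ far larger than the $z$th power of the diameter of the ring $R_{i,j}$, which forces every point of $R_{i,j}$ to have essentially the same cost in $\calS$. Both sides of the claimed inequality then collapse to $|R_{i,j}|$ times that common per-point cost.

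Concretely, I pick a witness $p^* \in I_{i,j,\ell} \cap R_{i,j}$ in the nonempty huge group. Any two points $p,q \in R_{i,j}$ share a $\seeded$-center $c$ with $\cost(\cdot,\seeded) \in [2^j\Delta_{C_i}, 2^{j+1}\Delta_{C_i}]$, so by the triangle inequality in $(X,\dist)$ we have $\dist(p,q) \leq 2\cdot (2^{j+1}\Delta_{C_i})^{1/z}$. Because $\ell \geq j + \log_{1+\eps''}(4z/\eps)$, after matching the geometric scale $(1+\eps'')^j$ with the ring scale $2^j\Delta_{C_i}$ (a routine bookkeeping step using $\eps'' = \eps/10$), this gives $\dist(p,p^*)^z \leq (\eps/z)^z\cdot \cost(p^*,\calS)$.

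Applying \cref{lem:weaktri}, for every $p \in R_{i,j}$,
\[|\cost(p,\calS) - \cost(p^*,\calS)| \leq \eps\cdot \cost(p^*,\calS) + \left(\tfrac{z+\eps}{\eps}\right)^{z-1}\dist(p,p^*)^z \leq 2\eps\cdot \cost(p^*,\calS),\]
so $\cost(p,\calS) = (1\pm 2\eps)\cost(p^*,\calS)$ uniformly over $R_{i,j}$. Summing over $p \in R_{i,j}$ yields $\cost(R_{i,j},\calS) = (1\pm 2\eps)|R_{i,j}|\cdot \cost(p^*,\calS)$. Applying the same per-point approximation to points of $\Omega \cap R_{i,j}$, together with the uniform weight $|R_{i,j}|/\delta$ and the identity $|\Omega \cap R_{i,j}| = \delta$, gives $\sum_{p \in \Omega \cap R_{i,j}} (|R_{i,j}|/\delta)\cost(p,\calS) = (1\pm 2\eps)|R_{i,j}|\cdot \cost(p^*,\calS)$. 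The triangle inequality on $\R$ then yields the claimed $3\eps$ bound.

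The main obstacle, as noted, is verifying the scale-matching between the two indexing conventions: the ring index $j$ is in base $2$ scaled by $\Delta_{C_i}$, while the range index $\ell$ is in base $(1+\eps'')$, and one must check that the huge-group threshold $4z/\eps$ is large enough (after this rescaling) to absorb the ring diameter into the factor $(\eps/z)^z$ exploited above. Everything else is a direct calculation.
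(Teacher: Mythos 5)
Your proposal follows essentially the same approach as the paper's proof: pick a witness point $p^*$ in the non-empty huge group, show via \cref{lem:weaktri} that every point of $R_{i,j}$ has cost within a $(1\pm 2\eps)$ factor of $\cost(p^*,\calS)$, and combine with the weight identity $\sum_{p\in\Omega\cap R_{i,j}}|R_{i,j}|/\delta = |R_{i,j}|$ to conclude. The paper is similarly terse about the base-$2$ vs.\ base-$(1+\eps'')$ scale-matching and about extracting the precise final constant, so your proof is at the same level of rigor as the original.
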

\begin{proof}
Fix a ring $R_{i,j}$ and let $I_{i,j,\ell}$ be a huge group. 
First, the weight of  $R_{i,j}$ is preserved in $\coreset$: since the set $\Omega$ has size $\delta$, it holds that 
$$ \sum_{p\in \Omega\cap R_{i,j}} \frac{|R_{i,j}|}{\delta} = |R_{i,j}|$$

Now, let $\calS$ be a solution, and $p\in I_{i,j,\ell}$ with $I_{i,j,\ell}$ being huge. This implies, for any $q \in R_{i,j}$: 
$\cost(p,q) \leq (2\cdot \eps\cdot 2^{j+1})^z\leq 4^z \cdot \eps^z\cdot 2^{(\ell - \log (4z/\eps))z} \leq (\eps/z)^{z}\cdot \cost(p,\calS)$. 
By Lemma~\ref{lem:weaktri}, we have therefore for any point $q\in R_{i,j}$
\begin{eqnarray*}
\cost(p,\calS) &\leq & \left(1+\eps/z \right)^{z-1} \cost(q,\calS) + \left(1+z/\eps\right)^{z-1}\cost(p,q) \\
&\leq & \left(1+\eps \right) \cost(q,\calS) + \varepsilon\cdot \cost(p,\calS) \\
\Rightarrow \cost(q,\calS) &\geq & \frac{1-\varepsilon}{1+\eps}\cost(p,S) \geq  (1-2\eps) \cost(p, \calS)
\end{eqnarray*}

Moreover, by a similar calculation, we can also derive an upper bound of $\cost(q,\calS)\leq \cost(p,\calS)\cdot (1+2\eps)$. Hence, combined with $\sum_{p\in \Omega\cap R_{i,j}} \frac{|R_{i,j}|}{\delta} = |R_{i,j}|$, this is sufficient to approximate $\cost(R_{i,j}, \calS)$. 

Therefore, the cost of $R_{i,j}$ is well approximated for any solution $\calS$ such that there is a non-empty huge group $I_{i,j,\ell}$. 
\end{proof}

\end{document}